%% file: main.tex
\newif\ificlr
\iclrfalse
\newif\ifmalgai
\malgaifalse

\documentclass[11pt]{article}
\usepackage{graphicx} 
\usepackage{fullpage}
\usepackage{subcaption}
\input{includes}

\title{Personalization Aids Pluralistic Alignment Under Competition}
\author{Natalie Collina, Surbhi Goel, Aaron Roth, Mirah Shi\\ \\ 
University of Pennsylvania}

\begin{document}

\maketitle

\begin{abstract}
   \input{files_arxiv/abstract}
\end{abstract}


\section{Introduction}
\input{files_arxiv/intro}

\subsection{Related Work}
\input{files_arxiv/related}

\section{Preliminaries}\label{sec:prelims}
\input{files_arxiv/prelims}

\section{The Personalized Game}\label{sec:personalized}
\input{files_arxiv/personalized}

\section{The Anonymous Game}\label{sec:anonymous}

\input{files_arxiv/anonymous}

\section{Robustness to New Users}\label{sec:robustness}
\input{files_arxiv/robustness}

\section{Experiments}\label{sec:experiments}

\input{files_arxiv/experiments}

\subsection*{Acknowledgments}
The authors gratefully acknowledge support from the UK AI Safety Institute (AISI), the NSF ENCoRE TRIPODS Institute, the Simons Collaboration on Algorithmic Fairness, a Schmidt Sciences AI2050 fellowship, an Amazon Research Award, and a Google PhD Fellowship.

{\color{purple}\textit{We dedicate this paper to the memory of Emily Ryu. We are grateful for her contributions to the ideas that inspired this work. We miss her.}}
\bibliographystyle{plainnat}
\bibliography{main}

\appendix
\input{files_arxiv/appendix_proofs}

\section{Additional Experiments}\label{app:experiments}
\input{files_arxiv/appendix_experiments}
\end{document}

%% file: includes.tex
\usepackage[utf8]{inputenc}
\usepackage{amsmath,amssymb,amsthm}
\usepackage{bbm}
\usepackage{natbib}
\usepackage{thmtools}

\DeclareMathOperator*{\argmax}{arg\,max}
\DeclareMathOperator*{\argmin}{arg\,min}
 
\usepackage{hyperref}
\usepackage{cleveref}
\newtheorem{theorem}{Theorem}[section]

\newtheorem{definition}[theorem]{Definition}

\newtheorem{proposition}[theorem]{Proposition}

\usepackage{xcolor}

\usepackage{comment}
\usepackage{mathtools}

\usepackage[shortlabels]{enumitem}

\newcommand{\E}{\mathbb{E}}

\usepackage[shortlabels]{enumitem}

\newcommand{\ignore}[1]{}

\usepackage[shortlabels]{enumitem}

\usepackage[shortlabels]{enumitem}

\newcommand{\M}{\mathcal{M}}

\usepackage{caption}
\usepackage{xcolor}

\let\Pr\undefined

\DeclareMathOperator*{\Pr}{Pr}

\newcommand{\cA}{\mathcal{A}}

\newcommand{\cI}{\mathcal{I}}

\newcommand{\cM}{\mathcal{M}}

\newcommand{\cX}{\mathcal{X}}
\newcommand{\cY}{\mathcal{Y}}

\newcommand{\eps}{{\varepsilon}}

\newcommand{\R}{\mathbb{R}}
\newcommand{\1}{\mathbbm{1}}

\usepackage{float}

\usepackage{nicematrix}

\usepackage{algorithm}
\usepackage[noend]{algpseudocode}

%% file: files_arxiv/abstract.tex
Can competition among misaligned AI providers yield aligned outcomes for a diverse population of users, and what role does model personalization play? We study a setting where multiple competing AI providers interact with multiple users who must make downstream decisions but differ in preferences. Providers have their own objectives over users’ actions and strategically deploy AI models to advance them.
We model the interaction as a Stackelberg game with multiple leaders (providers) and followers (users): providers commit to conversational policies, and users choose which model to use, how to converse, and how to act. With user-specific personalization, we show that under a Weak Market Alignment condition, every equilibrium gives each user outcomes comparable to those from a perfectly aligned common model—so personalization can induce pluralistically aligned outcomes, \textit{even when} providers are self-interested. In contrast, when providers must deploy a single anonymous policy, there exist equilibria with uninformative behavior under the same condition. We then give a stronger alignment condition that guarantees each user their optimal utility in the anonymous setting.

%% file: files_arxiv/intro.tex
AI alignment is often framed as a single-developer problem: how to encode a particular set of goals and values into a model. A core challenge is that users disagree about what ``alignment'' should mean. This has motivated work on \emph{pluralistic} alignment guarantees for a single trained model \citep{sorensen2024position,shirali2025direct,huang2024collective,klassen2024pluralistic}. We take a broader view and ask when the \emph{marketplace} for AI models itself can induce pluralistic alignment for a diverse population of users.
\ifmalgai 
\else

\fi
Crucially, providers may be misaligned with many of their users and cannot be relied on to align in isolation. For example, an AI system produced (or sponsored) by a drug manufacturer for diagnosis and treatment may be incentivized to recommend products sold by its creator, diverging from the goals of doctors using the system. Technical advances in alignment do not resolve this incentive mismatch if the provider has little reason to align to the user in the first place.
\ifmalgai 
\else

\fi
However, providers do not operate in a vacuum: they operate in a competitive marketplace, and their users have a choice about which model to use.  \citet{collina2025emergentalignmentcompetition} study provider competition in a market with a single downstream user and show that even if every provider is substantially misaligned, if providers are sufficiently \emph{differently} misaligned---in the sense that the user's utility lies in the non-negative span of providers' utilities---then competition forces perfect alignment in equilibrium. They explicitly leave pluralistic alignment with multiple downstream users as an open question.

We extend and generalize \citet{collina2025emergentalignmentcompetition} to many providers and many users, each with different objectives. Providers commit to AI systems modeled as arbitrary \emph{conversation rules} mapping private information and a conversation prefix to a next message. Users then choose which model to consult, how to interact, and what action to take. Because different users can select different models and use them differently, pluralistic alignment may emerge endogenously; we study when it does.
\ifmalgai 
\else

\fi
An important distinction emerges from our framework: whether or not AI providers can produce \emph{personalized} models or only \emph{anonymous} models. AI providers who can provide personalized models are able to commit to \emph{different} conversation rules for different downstream users. This models the ability of many modern consumer-facing chat-bots to maintain memory across conversations and profile their users in other ways (via e.g. browser history) which allows them to personalize to their users. We also study AI providers who must deploy anonymous models --- i.e. those that instantiate identical conversation rules across all users. This models e.g. open weight models (which any user can download and use) as well as anonymous API access. 
\ifmalgai 
\else

\fi
We give a simple market alignment condition---roughly, a decomposition across users and providers of the condition in \citep{collina2025emergentalignmentcompetition}---under which, in \emph{all} equilibria of the personalized commitment game, every user attains the benefits of a fully aligned model, despite heterogeneous user utilities and provider misalignment. In contrast, in the anonymous game, even under the same condition, alignment need not obtain: we construct an example where a Nash equilibrium yields \emph{no} non-trivial utility from using the deployed models. Thus, personalization can be \emph{helpful} for pluralistic alignment, even when no provider is aligned with any user. Finally, we introduce a stronger, in a sense dual, alignment condition---that there exists a subset of providers whose utilities lie in the non-negative span of their users' utilities---which is strictly stronger than our first condition. Under this stronger condition (and additional structural assumptions), pluralistic alignment is guaranteed in all equilibria even when providers must deploy anonymous models.
\ifmalgai
We defer discussion of related work to Section \ref{sec:related-work}.
\fi

\subsection{Our model and results}
We study a marketplace where multiple AI providers commit to conversational policies, and downstream users choose whom to consult and how to use the interaction to select an action. We formalize this as a multi-leader, multi-follower Stackelberg game: providers commit first; each user then best-responds by choosing a provider, interacting, and acting. Our key distinction is whether providers can condition on user identity. In the \emph{personalized game}, each provider commits to a potentially different policy for each user, and users observe only their own available policies. In the \emph{anonymous game}, each provider commits to a single publicly observable policy used for all users. Informally, these capture account-based personalization versus anonymous/open-access deployment.

\paragraph{A benchmark based on common information.}
Since providers may hold different private information, not all policies are feasible for all providers. We therefore benchmark welfare using only providers' \emph{common information}. We formalize this via \emph{shared conversation rules} which informally are the conversation rules that can be implemented by any of the providers, and define each user's optimal shared rule $C^*_S(i)$. Our theorems lower-bound equilibrium utility by $u_i(C^*_S(i))$, i.e., the best utility user $i$ could achieve from a perfectly aligned provider informed only by what all providers commonly know.

\paragraph{Market-level alignment conditions.}
We introduce two structural assumptions linking provider and user objectives. \emph{Weak Market Alignment} (Definition~\ref{def:separability}) requires providers' utilities to be (approximately) additively separable across users, with each user's utility (approximately) a nonnegative combination of the corresponding components. \emph{Strong Market Alignment} (Definition~\ref{def:provider-alignment}) requires each provider's utility to be (approximately) a nonnegative combination of users' utilities; we show this implies Weak Market Alignment.
In this framework, we prove several results: 

\begin{itemize}
\item In the personalized game, competition plus Weak Market Alignment suffices to guarantee that \emph{every} user receives near-benchmark utility in \emph{every} Nash equilibrium (Theorem~\ref{thm:private}).
In this sense, personalization ``decomposes'' competition across users: even if no provider is globally aligned, equilibrium behavior must still be approximately user-optimal for each user.
\item In contrast, under the same Weak Market Alignment condition, the anonymous game can have equilibria in which all users receive very low utility: we construct such an example where equilibrium reveals essentially no useful information (Theorem~\ref{thm:public-example}).
This captures a core tension of anonymous/public policies: a provider may be pushed to withhold information to avoid outcomes it dislikes for some users, even when that hurts all users.
\item We then study when anonymous competition can still deliver strong user guarantees.
Under Strong Market Alignment plus an additional assumption on the marginal value of an additional round of conversation, we prove anonymous-game utility guarantees (Theorem~\ref{thm:public-natalie-general}).
We also prove a clean special case: if each relevant provider has an \emph{user-dominant} conversation rule (e.g., when it is possible for the providers to fully reveal their private observations), then each user attains near-benchmark utility in every anonymous-game equilibrium (Theorem~\ref{thm:public-natalie}).
\item Finally, we study how guarantees change when a new user with arbitrary preferences --- possibly breaking the alignment conditions --- enters the market and perturbs providers' utilities.
In the personalized setting, the original users' guarantees persist (Theorem~\ref{thm:private-adding-users}); in the anonymous setting, we show that adding a user can break previously good guarantees (Theorem~\ref{thm:public-adding-users}).
\item We empirically test our alignment conditions (\Cref{sec:experiments}) using OpinionQA~\citep{pmlr-v202-santurkar23a}, with demographic groups as users and LLMs as providers. We find that Weak Market Alignment holds: user utilities are well-approximated by nonnegative combinations of provider utilities, improving with market size, though some groups are harder to align than others. We find that Strong Market Alignment is more difficult to satisfy: single-provider coverage is poor, but adding one to two more providers dramatically improves it. However, as theory predicts, approximation errors for Strong alignment are roughly 40\% larger than for Weak.
\end{itemize}

%% file: files_arxiv/related.tex
\paragraph{Bayesian Persuasion and Economic Competition.}
Bayesian Persuasion was introduced by \cite{kamenica2011bayesian} --- in the canonical model, there is a single informed ``sender'' and an uninformed ``receiver'' who share a common prior. The sender commits to a ``signaling scheme'', which is a mapping from observations to messages sent to the receiver, who conditions on the message and takes their best response action under their posterior. We adopt the basics of this model, but extend it by allowing that both parties be differently informed, and that interaction involve a multi-round conversation rather than a single message. Multi-sender Bayesian Persuasion was introduced by \cite{gentzkow2016competition} and studies the standard Bayesian Persuasion model with multiple senders who simultaneously commit to a signaling scheme. Subsequently a number of papers have studied multi-sender Bayesian Persuasion \citep{gentzkow2017bayesian,li2018bayesian,au2020competitive,wu2023sequential}. There is also work studying multiple \emph{receiver} Bayesian persuasion, but with a single sender. This work highlights a fundamental tension: when restricted to public signals \citep{alonso2016persuading}, a single sender must often degrade information quality to balance conflicting receiver preferences—a dynamic we recover in our `Anonymous' setting. Conversely, when private signals are allowed \citep{arieli2019private}, a single sender can tailor disclosures to exploit each receiver individually. Our work extends this to the multi-sender setting, showing that competition resolves these failures: in a personalized market, competition prevents senders from exploiting the privacy of the channel, forcing them instead to reveal full information to capture the user. Our model has multiple participants on both sides of the market, and is a direct extension of the model of \citet{collina2025emergentalignmentcompetition} from a single user to many users. Our finding that personalized competition leads to user-optimal outcomes parallels the classic result in economic theory by \citet{thisse1988strategic}, who showed that competitive price discrimination forces firms into a Prisoner's Dilemma that drives prices to marginal cost---improving outcomes for consumers.

\paragraph{AI Alignment.}
Our work fits broadly into the study of multi-agent AI systems \citep{guo2024large}. We present a game theoretic approach in which ``alignment'' emerges from the competitive interaction of many mis-aligned agents. Recent work has explored cooperative multi-agent approaches to AI safety, where multiple AI systems work together to improve alignment outcomes. Constitutional AI \citep{bai2022constitutional} uses AI feedback to train more helpful and harmless models, with one AI system providing critiques and revisions of another's outputs. Similarly, approaches using AI systems to evaluate and improve other AI systems \citep{leike2018scalable} rely on cooperative dynamics where the evaluating system is assumed to be sufficiently aligned to provide useful feedback. These approaches typically assume that at least some components of the multi-agent system are well-aligned or that the agents share compatible objectives. Our work differs by studying strategic rather than cooperative multi-agent settings.

Several recent papers with alignment motivations  \citep{collina2025tractable,collina2025collaborative,nayebi2025barriers} have studied \emph{agreement protocols} through which conversational agents can come to agreement about their beliefs through short interactions. These should be viewed as protocols for cooperative agents, as they are assumed to express their true beliefs at each iteration. We adopt the conversational framework of these papers but like \citep{collina2025emergentalignmentcompetition} study \emph{strategic} agents who do not have the same goals. Our work can likewise be viewed as a strategic generalization of the agreement literature \citep{aaronson2005complexity,frongillo2021agreementimpliesaccuracysubstitutable,collina2025collaborative,collina2025tractable,nayebi2025barriers,kearns2026networked}.

\paragraph{Pluralistic Alignment}
A recent literature on pluralistic alignment (see e.g. \citep{sorensen2024position,shirali2025direct,huang2024collective,klassen2024pluralistic}) focuses on the technical alignment problem of a single model provider when they have many downstream users with different preferences. These approaches pre-suppose that the model designer wants to do this. We instead study conditions under which market competition between many providers, none of which may have incentives to align their models with their users (pluralistically or otherwise) will nevertheless lead to pluralistic alignment emerging from equilibrium.

%% file: files_arxiv/prelims.tex
We study a strategic market of AI providers who interact with many downstream users. Each user is trying to gain information in order to make an optimal decision. Meanwhile, AI providers aim to steer users towards actions that they prefer. 

\paragraph{Players and information.}
There is a finite set of \emph{users} \(N=\{1,\dots,n\}\) and a finite set of \emph{AI providers} \(K=\{1,\dots,k\}\).
A common state \(y\in\mathcal Y\) is drawn along with private features
\((x^U_1,\ldots,x^U_n,x_{1}^P,\ldots,x_{k}^P)\) from a common prior\footnote{For simplicity of notation we assume that all users and providers have access to this joint distribution. However, for all results in this paper, it is sufficient for user $i$ to know only the joint distribution of $(x^{u}_{i}, x^{p}_{j},y)$ marginally for each provider $j$.} \(\phi \in \Delta(\cY \times \prod_{i\in N} \cX_{U_i} \times \prod_{j\in K} \cX_{P_j}) \). Each user $i$ observes $x_i^U$ and each provider $j$ observes $x_j^P$. 
\\
\\
The private information of each provider captures the vast amount of data used to train AI models and their ability to parse large numeric datasets that may be illegible to human users; meanwhile, the private information of each user captures the individual expertise and observations of users interacting with AI systems. 

\paragraph{Actions and utilities.} Each user \(i\) has a utility function
\(u^{U}_i:\mathcal A_i\times\mathcal Y\to[0,1]\) that depends on her chosen action $a_i$ and the state.
Each provider \(j\) has a utility function $u^{P}_j: \prod_{i\in N} \cA_i \times\cY \to [0,1]$ that depends on all users’ actions $a_{1:n} =(a_1,...,a_n)$ and the state. 

\paragraph{Conversations and decision rules.}
We model a setting where an AI model and a user interact over multiple rounds of conversation, exemplified by the usage of AI chatbots. In particular, providers deploy models in the form of \textit{conversation rules} that can be multi-round and adaptive, and users interact via their own conversation rules. The providers' conversation rules  naturally represent model weights (together with any accompanying guardrails), which dictate the next sequence of tokens given a context and conversation history. Meanwhile, users' conversation rules represent a strategy for querying an AI, similarly mapping context and a conversation history to a next message.

\begin{definition}[Player Strategies]

Let $\M$ be a message space. Let $\M^{<R}$ denote the space of conversation transcripts of length less than $R$ and $\M^{R}$ denote the space of conversation transcripts of length $R$.

\begin{itemize}
    \item A \emph{conversation rule}
for provider \(j\) is a mapping
\[
C_{P_j}:\ \mathcal X_{P_{j}}\times \M^{<R} \to\Delta(\mathcal M),
\]
from his private features and the transcript of his conversation with some user to a distribution over next messages.
    \item A \emph{conversation rule} for user \(i\) is a mapping
\[
C_{U_i}:\ \mathcal X_{U_i}\times \mathcal \M^{<R}\to\Delta(\mathcal M),
\]
from her private
features and the transcript of her conversation
with some provider to a distribution over next messages.
    \item After the conversation ends, user \(i\) applies a \emph{decision rule}
\[
D_{U_i}:\ \mathcal X_{U_i}\times \M^{R} \to \Delta(\mathcal A_i),
\]
mapping her private features and the full transcript of her conversation with some provider to a distribution over actions.
\end{itemize} 
\end{definition}

\paragraph{The game.} We model the interaction as a multi-leader, multi-follower Stackelberg game: providers commit first to conversation rules, then users select which provider to consult, how to converse, and how to act. Commitment models the fact that training a new LLM requires substantial time and investment and happens on a longer time-frame than user decisions. Thus, after these weights are deployed, a user can explore different AI models and learn how helpful each one is for the types of questions she cares about. After information-gathering and experimentation, each user will choose the model (provider conversation rule) to interact with which is best for her over her distribution of interests. This setup mirrors the ``Best-AI Selection Game" defined in \citet{collina2025emergentalignmentcompetition}).

A key modeling choice is the strategy space available to providers --- in particular, whether they can tailor rules to each user or must deploy a single rule used anonymously by all users. We consider two versions of the game that capture this distinction. In the \textit{personalized} game, each provider can commit to a separate conversation rule for each user, and each user observes only the conversation rules used in her own interaction. In the \textit{anonymous} game, each provider commits to a single conversation rule that is publicly observable and applied across all users. 

The personalized game captures account-based AI models, which can interact with different users differently based upon their profiles. Meanwhile, the anonymous game captures open-access AI models, which have no information about a user beyond their prompt.

\begin{definition}[The Personalized Game]
    The personalized game proceeds as follows:
    \begin{enumerate}
        \item Every provider $j$ simultaneously commits to a separate conversation rule $C_{P_j}(i)$ for each user $i$. Let $C_{\vec P_j} = (C_{P_j}(1),...,C_{P_j}(n))$ be the vector of rules chosen by provider $j$ and let $C_{\vec P}=(C_{\vec P_1},\ldots,C_{\vec P_k})$ be the entire collection of chosen rules. Let $C_{\vec P}(i)=(C_{P_1}(i),\ldots,C_{P_k}(i))$ be the collection of rules chosen for user $i$.  
        \item Every user $i$ observes $C_{\vec P}(i)$. She chooses a provider $j$ to interact with, as well as a conversation rule $C_{U_i}$ and a decision rule $D_{U_i}$. 
        \item The state $y$ and players' private information $(x^U_1,\ldots,x^U_n,x_{1}^P,\ldots,x_{k}^P)$ are drawn from $\phi$. Each user $i$ observes $x^U_i$, and each provider $j$ observes $x^P_j$.
        \item Every user $i$ engages in the conversation defined by $C_{U_i}$ and $C_{P_j}(i)$ of her chosen provider and selects an action $a_i$ according to $D_{U_i}$. 
        \item Each user $i$ receives her utility $u^{U}_i(a_i,y)$. Each provider $j$ receives his utility $u^{P}_j(a_{1:n}, y)$.
    \end{enumerate}
\end{definition}

\begin{definition}[The Anonymous Game]
    The anonymous game proceeds as follows:
    \begin{enumerate}
        \item Every provider $j$ simultaneously commits to a conversation rule $C_{P_j}$. Let $C_{\vec P}=(C_{P_1},\ldots,C_{P_k})$ be the collection of chosen rules.   
        \item Every user $i$ observes $C_{\vec P}$. She chooses a provider $j$ to interact with, as well as a conversation rule $C_{U_i}$ and a decision rule $D_{U_i}$. 
        \item The state $y$ and players' private information $(x^U_1,\ldots,x^U_n,x_{1}^P,\ldots,x_{k}^P)$ are drawn from $\phi$. Each user $i$ observes $x^U_i$, and each provider $j$ observes $x^P_j$.
        \item Every user $i$ engages in the conversation defined by $C_{U_i}$ and $C_{P_j}$ of her chosen provider and selects an action $a_i$ according to $D_{U_i}$. 
        \item Each user $i$ receives her utility $u^{U}_i(a_i,y)$. Each provider $j$ receives his utility $u^{P}_j(a_{1:n}, y)$.
    \end{enumerate}
\end{definition}

In both games, each user observes only her own conversational transcript when applying her decision rule.

The user's interaction with a provider, under chosen conversation and decision rules, induces a joint distribution over the user's action and the state of the world. 

\begin{definition}[Induced distribution $\mathcal{I}_i(C_{P}, C_{U},D_{U};j)$]
    We write $\mathcal{I}_i(C_{P}, C_{U},D_{U};j) \in \Delta (\mathcal{A}_i \times \mathcal{Y})$ for the distribution over user $i$'s actions and states induced by her interaction with provider $j$, given the user's conversation rule $C_U$, the user's decision rule $D_U$, and the provider's conversation rule $C_P$.
\end{definition}

In both of the above games, the user has three choices to make: (i) which provider she should converse with, (ii) how to converse with him, and (iii) how to map the learned information to a final decision. We assume that users solve these optimally, as formalized below.

\begin{definition}[User's Best Response]
User $i$'s best response consists of the following:

\begin{itemize}
    \item
    User $i$'s best response decision rule $D_{U_i}^*$ is the decision rule that maximizes her utility under the posterior distribution $\pi(y|x_i^U, M^R)$ given her observed features $x_i^U\in \cX_{U_i}$ and a conversation transcript $M^R \in \cM^R$ with provider $j$:
    \[
    D_{U_i}^* = \argmax_{a_i\in\cA_i} \E_{\pi}[u^{U}_i(a_i,y)]
    \]
    \item 
    User $i$'s best response conversation rule $C^*_{U_i}$ given a conversation rule $C_{P_j}$ is the conversation rule that maximizes her expected utility given that she uses her best response decision rule:
    \[
    C^*_{U_i} = \argmax_{C_{U_i}} \E_{\cI_i(C_{P_j}, C_{U_i}, D^*_{U_i};j)}[u_i(a_i,y)]
    \]

    \item 
    User $i$'s optimal choice of a provider to interact with is the provider $j^*$ that maximizes her expected utility given that she uses her best response conversation and decision rules:
    \[
    j^* = \argmax_{j\in K} \E_{\cI_i(C_{P_j}, C^*_{U_i}, D^*_{U_i};j)}[u_i(a_i,y)]
    \]
    \end{itemize}
    We assume a fixed and common tie-breaking rule across all users. 
\end{definition}

Since we assume that every user best responds, the induced distributions depend on the providers' conversation rules alone. For every user $i$, we write $\cI_i(C_{\vec P}) = \cI_i(C_{P_{j^*}}, C^*_{U_i}, D^*_{U_i};j)$ for the induced distribution over $(a_i,y)$. We write $\cI(C_{\vec P})$ for the induced joint distribution over $(a_{1:n}, y)$. 


Now, given that users best respond, this defines a game among providers. 
We will be interested in Nash equilibria of the game in both personalized and anonymous settings:

\begin{definition}[Nash Equilibrium in the Personalized Game]
    A vector of providers' conversation rules $C_{\vec P}$ is a Nash equilibrium in the personalized game if for every provider $j$ and every alternative vector of personalized rules $C_{\vec P_j}'$:
    \[
    \E_{\cI(C_{\vec P})}[u^{P}_j(a_{1:n}, y)] \geq \E_{\cI(C'_{\vec P_j}, C_{\vec P_{-j}})}[u^{P}_j(a_{1:n}, y)]
    \]
\end{definition}

\begin{definition}[Nash Equilibrium in the Anonymous Game]
    A vector of providers' conversation rules $C_{\vec P}$ is a Nash equilibrium in the anonymous game if for every provider $j$ and every alternative vector of rules $C_{P_j}'$:
    \[
    \E_{\cI(C_{\vec P})}[u^{P}_j(a_{1:n}, y)] \geq \E_{\cI(C'_{P_j}, C_{ P_{-j}})}[u^{P}_j(a_{1:n}, y)]
    \]
\end{definition}

\paragraph{Benchmark. }
Our goal is to lower-bound the individual utility of each user in \emph{every} possible Nash equilibrium. As each provider has potentially different information, there is a large and asymmetric set of defections each provider could make from any potential equilibrium. We will compare equilibrium outcomes to a benchmark that captures what \textit{any} provider can implement given his private information.  Accordingly, we are interested in the common information among all providers, a notion that we call a \emph{common garbling}.

Before formally defining this notion, we provide an illustrative setting where our notion of common information is straightforward. Suppose $y\in \R^d$ is a vector where each element is distributed independently, and each $x^P_{j}$ is a subset of these coordinates. Here, the information that every provider has about $y$ is captured precisely by the intersection of their sets. In other words, the ``common information" among all providers is $z = \bigcap_{j \in K} x^{P}_{j}$. Importantly, since every provider has access to $z$, all providers can implement any conversation rule $C_{P_j}$ depending only on $z$.

In general, $y$ and $x^{P}_{1:k}$ may be arbitrarily correlated random variables, and thus the ``common information" of $x^{P}_{1:k}$ may not be representable by a set intersection. Common garblings allow us to generalize this notion to random variables which can be simulated by any provider while preserving their correlation with $y$. 
In the special case where all providers observe the same features, the shared feature itself is a common garbling. This is the setting of \citet{collina2025emergentalignmentcompetition}, and so our model is a strict generalization.

\begin{definition}[Common garbling]
    A random variable $z \in \mathcal{Z}$ correlated with $y \in \mathcal{Y}$ is a common garbling of $(x^P_1, \ldots, x^P_k)$ with respect to $y$ if for every provider $j \in K$, there exists a (possibly randomized) mapping $f_{j}: \mathcal{X}_{P_j} \rightarrow \Delta\mathcal{Z}$ such that, for all $\hat{y}$ and $\hat{z}$:
    \[ 
    \Pr_{y \sim \phi}\left(z = \hat{z} | y = \hat{y}\right)  = \mathbb{E}_{x^{P}_j, y \sim \phi}[\Pr_{f_{j}}\left(f_{j}(x_{j}^{P}) = \hat{z}\right) | y = \hat{y}] 
    \]
In other words, conditional on a fixed value of $y$, the marginals of $z$ and $f_{j}(x^{P}_j)$ are equivalent.
    
\end{definition}

With this definition in hand, we can define the set of conversation rules that can be implemented by any provider in some set $T$.

\begin{definition}[$T$-Shared Conversation Rules $\mathcal{C}_{S}(x^{P}_{T})$]
    Given a set of random variables $x^P_{1:k}, y$, a \emph{$T$-Shared Conversation Rule} is a conversation rule of the form 
    \[C_S : \mathcal{Z} \times \M^{<R} \to \Delta(\mathcal M)\] 
    where $z\in \mathcal{Z}$ is a realization of a random variable that is a common garbling of $x^{P}_{T}$ with respect to $y$, where $T \subseteq K$. 
    We denote the set of all such $R$-round $T$-shared conversation rules $\mathcal{C}_{S,R}(x^{P}_{T})$. 
\end{definition}

We will frequently refer to the utility a user $i$ receives when she interacts with a particular provider who deploys some conversation rule. 

\begin{definition}[User $i$'s utility when interacting with provider $j$ who deploys $C_{P_j}$]
We write $u^{U}_i(C_{P_j}; j) = \E_{\cI_i(C_{P_j}, C^*_{U_i}, D^*_{U_i};j)}[u^{U}_i(a_i,y)]$ to denote user $i$'s expected utility when she interacts with the conversation rule $C_{P_j}$ deployed by provider $j$.     
\end{definition}
A special case is when a provider deploys a shared conversation rule $C_G$ taking as input a common garbling $z$. Since \textit{any} provider can generate $z$ and implement $C_G$, the distribution induced by the user's interaction with $C_G$ --- and hence the user's expected utility --- is the same no matter which provider deployed it. Thus, we will drop the provider's index when speaking about the user's utility when interacting with a shared conversation rule. 
\begin{definition}[User $i$'s utility when interacting with $C_S$]
We write $u^{U}_i(C_S) = \E_{\cI_i(C_S, C^*_{U_i}, D^*_{U_i})}[u^{U}_i(a_i,y)]$ to denote user $i$'s expected utility over the induced distribution of interacting with conversation rule $C_S$ which depends on random variable $z$.     
\end{definition}

This leads us to the following benchmark which we will use throughout the paper --- the outcome the user can obtain if she were interacting with a single perfectly aligned provider who can implement any shared conversation rule. As we will frequently consider subsets of providers who satisfy market alignment conditions, it will be useful to define the optimal shared conversation rule for any subset of providers $T$ in terms of the information shared only among all providers in $T$. 

\begin{definition}[Optimal $T$-Shared Conversation Rule $C_{S,T,R}^{*}(i)$]
For some user $i$, her optimal $R$-round $T$-shared conversation rule $C_{S,T,R}^{*}(i)$ is the $T$-shared conversation rule which maximizes her utility over $R$ rounds of conversation, assuming she best responds. Formally: 
\[
C_{S,T,R}^{*}(i) := \argmax_{C_S \in \mathcal{C}_{S,R}(x^{P}_{T})} u^{U}_{i}(C_S)
\]
We will drop the subscript $R$ when the number of rounds is clear from context. 
\end{definition}

Note that if $C$ is a $K$-shared conversation rule, it is also a $T$-shared conversation rule for any $T \subseteq K$. Thus, $\argmax_{C_S \in \mathcal{C}_{S,R}(x^{P}_{T})} u^{U}_{i}(C_S) \geq \argmax_{C_S \in \mathcal{C}_{S,R}(x^{P}_{1:k})} u^{U}_{i}(C_S)$.


\paragraph{Market Alignment Conditions.}
We now give alignment conditions formalizing how AI providers' preferences relate to users'. These are \textit{market-level} conditions: they do not require any single AI provide to be aligned with any user, but rather offer enough structure such that competition can lead to user-aligned outcomes. In our results, it will only be necessary for a \emph{subset} of the providers to satisfy our alignment conditions; thus, we write the below definitions for a set of providers $T \subseteq K$.

\begin{definition}[$(T, \eps_U,\eps_P)$-Weak Market Alignment]\label{def:separability}
We say a set $T\subseteq K$ of providers satisfy $(T,\eps_U,\eps_P)$-Weak Market Alignment if:
\begin{enumerate}
    \item (Provider separability) Every provider \(j\in T\) has an additively separable utility. That is, for every provider $j\in T$, there exist non-negative weights $(\lambda_{j,i})_{i\in N}$, a constant $c_j\in\R$, and an error $\eps_P \geq 0$ such that for all action profiles $a_{1:n}$ and states $y$:
\begin{equation}\label{eq:separableU}
\left| u^{P}_j\big(a_{1:n},y\big)\ - \left(\sum_{i\in N}\lambda_{j,i}\,F_{j,i}(a_i,y) + c_j\right)\right| \leq \eps_P
\end{equation}
where \(F_{j,i}:\mathcal A_i\times\mathcal Y\to\mathbb R^{\geq 0}\) depends only on user \(i\)'s action and the state.
\item (User alignment) For every user $i\in N$, there exist non-negative weights $(w_{j,i})_{j\in T}$, a constant $c_i\in \R$, and an error $\eps_U\geq 0$ such that for all actions $a_i$ and states $y$:
\begin{equation}\label{eq:peruserAlignment}
\left| u^{U}_i(a_i,y)\ - \left(\sum_{j\in T} w_{j,i}\,F_{j,i}(a_i,y) + c_i\right) \right| \leq \eps_U
\end{equation}
\end{enumerate}
We assume that for every user $i$, there is a provider $j^*$ where \(\lambda_{j^*,i}>0\) and $w_{j^*, i} > 0$. Additionally, if $w_{j,i} > 0$, then $\lambda_{j,i}>0$.
\end{definition}

Our Weak Market Alignment condition can, for example, naturally model settings where AI providers care about the number of users who take a particular downstream action (e.g. purchase a product, prescribe a drug, or vote for a political candidate) with different importance weights placed on different users or user groups (e.g. demographic groups of consumers or voters). For every provider $j$, $F_{j,i}(a_i,y)\in[0,1]$ can measure how desirable user $i$'s action $a_i$ is in a state $y$ for that provider (e.g. $F_{j,i}=\1[a_i = a^*(j;y)]$ where $a^*(y,j)$ is provider $j$'s preferred downstream action in state $y$). The weights $\lambda_{j,i}$ then capture how much provider $j$ values the resulting action taken by a particular user (or group), while $w_{j,i}$ capture how much relative utility the user receives from provider $j$'s preferred outcome. 

We also define a Strong Market Alignment condition:

\begin{definition}[$(T, \epsilon)$-Strong Market Alignment]\label{def:provider-alignment}
    We say a set $T\subseteq K$ of providers satisfy $(T, \epsilon)$-Strong Market Alignment if every provider $j \in T$ has a utility that can be approximately written as a non-negative linear combination of user utilities. That is, for every provider $j\in T$, there exist non-negative weights $(\lambda_{j,i})_{i\in N}$, a constant $c_j\in \R$, and an error $\eps \geq 0$ such that for all action profiles $a_{1:n}$ and states $y$:
    \begin{equation}
\left| u^{P}_j\big(a_{1:n},y\big)\ -  \left(\sum_{i\in N}\lambda_{j,i}\,u_{i}(a_i,y) + c_j\right) \right| \ \leq \epsilon
\end{equation}
    We assume coverage: for every user $i$, there exists a provider $j^*\in T$ with $\lambda_{j^*,i}> 0$.
\end{definition}
This condition can be viewed as the dual of the ``Weighted Alignment Assumption" in \citet{collina2025emergentalignmentcompetition}, which requires that, in a single-user setting, the user's utility can be written as a non-negative linear combination of provider utilities. We show that Strong Market Alignment is an instantiation of Weak Market Alignment, where $F_{j,i} = u^{U}_i(a_i,y)$ (Proposition~\ref{prop:implies}). In fact, Weak Market Alignment is strictly weaker (Proposition~\ref{prop:strict}).

%% file: files_arxiv/personalized.tex
We begin by studying the personalized game, where each provider can tailor his conversation rule to the identity of the user he is interacting with. We show that under Weak Market Alignment, every user achieves near-benchmark in \textit{every} Nash equilibrium of the personalized game. In particular, she gets utility close to what she would get by interacting with a perfectly aligned provider restricted to using common information. The intuition is simple. 
With personalization, providers can decouple their choices across users: a provider can strategically influence the action of a single user by changing what she observes, without committing to the same behavior for every other user. This property, together with the alignment condition, implies every user must be getting high utility: if any user were doing poorly, some provider could profit by offering her a better personalized rule, contradicting equilibrium.
\ifmalgai 
\else 

\fi
Importantly, our result does not require \textit{every} provider to satisfy the alignment condition --- an assumption that can be unrealistic in markets of many provider and many users. Instead, we only assume that there exists \textit{some} subset of providers that does. This substantially weaker guarantee suffices to guarantee that competition among all providers yields aligned outcomes for users. In particular, our results will continue to hold if additional providers who do not satisfy the alignment condition enter the market.

\begin{theorem}\label{thm:private}
    Suppose there is a set of providers $T\subseteq K$ that satisfy $(T,\eps_U,\eps_P)$-Weak Market Alignment with weights $(\lambda_{j,i})_{i\in N}$ for each provider $j\in T$ and $(w_{j,i})_{j\in T}$ for each user $i\in N$. In the personalized game among all providers $K$, for each user $i$, her expected utility in any Nash equilibrium is at least $u^{U}_i(C_{S,T}^*(i)) - 2\eps_U - 2\eps_P\mu_i$, where $\mu_i = \sum_{j\in T:w_{j,i}>0} \frac{w_{j,i}}{\lambda_{j,i}}$.
\end{theorem}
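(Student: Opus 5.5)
The plan is a proof by contradiction via a unilateral personalized deviation. Fix a Nash equilibrium $C_{\vec P}$ and a user $i$, and suppose her equilibrium utility $u^U_i(C_{\vec P}(i))$ is below $u^U_i(C^*_{S,T}(i)) - 2\eps_U - 2\eps_P\mu_i$. For each provider $j\in T$ with $w_{j,i}>0$, consider the deviation in which $j$ keeps $C_{P_j}(i')$ for every $i'\neq i$ but replaces $C_{P_j}(i)$ with a rule implementing $C^*_{S,T}(i)$. He can do this by simulating the common garbling $z$ through $f_j$, so the induced $(a_i,y)$ distribution matches that of the benchmark. Because the game is personalized, and each user observes only her own rules, every other user's best response is unchanged. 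Only user $i$'s induced distribution over $(a_i,y)$ moves.

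I would set up the deviation so that user $i$ strictly prefers to switch to $j$. Either perturb slightly toward $C^*$ and take a limit, or use the tie-breaking rule; let me not worry about ties. Her new induced distribution $\cI_i'$ then has $\E_{\cI_i'}[u^U_i]\ge u^U_i(C^*_{S,T}(i))$. It might be strictly larger if she still picks another provider, but she picks $j$ whenever $j$ beats everything else. Let $\cI_i$ denote her equilibrium distribution. By provider separability, the change in $j$'s utility is at least $\lambda_{j,i}\big(\E_{\cI'_i}F_{j,i}-\E_{\cI_i}F_{j,i}\big) - 2\eps_P$, since all other users' marginal contributions are unchanged. The $2\eps_P$ arises because we compare two approximations. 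The equilibrium condition then gives
\[
\Delta_j := \E_{\cI'_i}F_{j,i}-\E_{\cI_i}F_{j,i} \le \frac{2\eps_P}{\lambda_{j,i}}
\]
for each $j\in T$ with $w_{j,i}>0$. This is where the assumption $w_{j,i}>0\Rightarrow\lambda_{j,i}>0$ is used.

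The remaining obstacle is that $\cI_i'$ depends on which provider $j$ deviates. To handle this, I would argue the benchmark distribution is provider-independent: the shared rule induces the same $(a_i,y)$ law whichever provider implements it, with user $i$ best responding identically. So $\cI'_i$ is a single distribution $\cI^*$, provided user $i$ actually selects the deviator. If instead she selects some other provider with even higher utility, I would replace $C^*$ by that better outcome. Alternatively, I would note that the equilibrium utility is already at least her utility from $j$, which handles that case; I would try to phrase the argument so the comparison is always against $\cI^*$. Summing with weights $w_{j,i}$ and using user alignment twice gives
\[
u^U_i(C^*_{S,T}(i)) - u^U_i(C_{\vec P}(i)) \le \sum_{j} w_{j,i}\Delta_j + 2\eps_U \le 2\eps_P\mu_i + 2\eps_U.
\]
This contradicts the assumption and proves the bound.
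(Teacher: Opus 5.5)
Your proposal is correct and follows the paper's proof step for step. Like the paper, you use a personalized deviation by $j\in T$ to $C^*_{S,T}(i)$ for user $i$ alone, then provider separability plus the equilibrium condition to get $\lambda_{j,i}\Delta_j\le 2\eps_P$, then a $w_{j,i}$-weighted sum converted back to $u^U_i$ via user alignment. You are right that this sum needs the deviation distribution to be the same whichever provider deviates; the paper relies on this implicitly by writing a single $\cI_{dev}$. Your worry about ties, or about user $i$ picking another provider, is unfounded: the other providers' rules for $i$ are unchanged and each gives her at most her equilibrium utility, which is strictly below $u^U_i(C^*_{S,T}(i))$, so she strictly prefers the deviator.
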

\begin{proof}
    Let $C_{\Vec{P}}$ be any Nash equilibrium strategy profile of the personalized game, and let $\cI_{NE} = \cI(C_{\Vec{P}})$ be the induced distribution.
    
    Now, assume for contradiction that there exists a user $i \in N$ such that her utility is strictly lower than $u^{U}_i(C_{S,T}^*(i)) - 2\eps_U - 2\eps_P\mu_i$. Consider a deviation by any provider $j \in T$ in his choice of conversation rule for user $i$ from $C_{P_j}(i)$ to $C_{S,T}^*(i)$, the optimal $T$-shared conversation rule for user $i$. Note that by definition, any provider in $T$ could implement $C^*_{S,T}(i)$.
    
    Let $\cI_{dev} = \cI(C_{S,T}^*(i), C_{P_j}(-i), C_{P_{-j}})$ be the distribution induced by the deviation of provider $j$ in his chosen conversation rule for user $i$ (the conversation rules chosen by provider $j$ for all other users, $C_{P_j}(-i)$, remain unchanged). As user $i$'s utility was previously worse than $u^{U}_i(C_{S,T}^*(i))$, she will now choose to interact with the provider who implements $C^*_{S,T}(i)$, leading to the outcome under $\cI_{dev}$.  

    Now, by the Nash equilibrium condition, no provider $j$ has an incentive to deviate. Thus, for all providers $j\in T$:
    $$\mathbb{E}_{\mathcal{I}_{dev}}[u^{P}_j(a_{1:n},y)] \le \mathbb{E}_{\mathcal{I}_{NE}}[u^{P}_j(a_{1:n},y)]. $$
    Since every user observes only the conversation rules chosen for her, this deviation can only change the induced distribution for user $i$ (in particular the distribution over her actions). 
    Thus, applying Weak Market Alignment, we have that for any provider $j \in T$:
    \begin{align*}
        &\mathbb{E}_{\mathcal{I}_{dev}}\left[\sum_{i\in N} \lambda_{j,i} F_{j,i}(a_i, y) + c_j - \eps_P \right] \le \mathbb{E}_{\mathcal{I}_{NE}}\left[\sum_{i\in N} \lambda_{j,i} F_{j,i}(a_i, y) + c_j + \eps_P \right] \\
        \iff & \mathbb{E}_{\mathcal{I}_{dev}}\left[\lambda_{j,i} F_{j,i}(a_i, y) + \sum_{i'\neq i} \lambda_{j, i'} F_{j, i'}(a_{i'}, y) + c_j - \eps_P\right] \\ &\le \mathbb{E}_{\mathcal{I}_{NE}}\left[\lambda_{j,i} F_{j,i}(a_i, y) + \sum_{i'\neq i} \lambda_{j, i'} F_{j, i'}(a_{i'}, y) + c_j +\eps_P \right] \\
        \iff & \mathbb{E}_{\mathcal{I}_{dev}}\left[\lambda_{j,i} F_{j,i}(a_i, y)\right] + \mathbb{E}_{\mathcal{I}_{NE}}\left[\sum_{i'\neq i} \lambda_{j, i'} F_{j, i'}(a_{i'}, y)\right] +c_j-\eps_P \\ &\le \mathbb{E}_{\mathcal{I}_{NE}}\left[\lambda_{j,i} F_{j,i}(a_i, y)\right] + \mathbb{E}_{\mathcal{I}_{NE}}\left[\sum_{i'\neq i} \lambda_{j, i'} F_{j, i'}(a_{i'}, y)\right] +c_j +\eps_P \\
        \iff & \lambda_{j,i} \mathbb{E}_{\mathcal{I}_{dev}}\left[ F_{j,i}(a_i, y)\right] \leq \lambda_{j,i} \mathbb{E}_{\mathcal{I}_{NE}}\left[ F_{j,i}(a_i, y)\right] + 2\eps_P 
    \end{align*}
    where the third step follows from the fact that only the induced distribution over user $i$'s actions changes under the deviation. For any $j$ with $\lambda_{j,i}>0$, we thus have:
    \[
    \mathbb{E}_{\mathcal{I}_{dev}}\left[ F_{j,i}(a_i, y)\right] \leq  \mathbb{E}_{\mathcal{I}_{NE}}\left[ F_{j,i}(a_i, y)\right]  + 2\frac{\eps_P}{\lambda_{j,i}}
    \]

    Taking a weighted sum over all providers using the non-negative weights $(w_{j,i})_{j\in T}$ from the Weak Market Alignment condition:
    $$\sum_{j\in T} w_{j,i} \mathbb{E}_{\substack{\mathcal{I}_{dev}}}[F_{j,i}(a_i, y)] \le \sum_{j\in T} w_{j,i} \mathbb{E}_{\mathcal{I}_{NE}}[F_{j,i}(a_i, y)] + 2\eps_P\sum_{j\in T:w_{j,i}>0} \frac{w_{j,i}}{\lambda_{j,i}}. $$
    Note that by assumption, if $w_{j,i}>0$, then $\lambda_{j,i}>0$, and so we can apply the inequality here. 
    By linearity of expectation, this is equivalent to:
    $$ \mathbb{E}_{\substack{\mathcal{I}_{dev}}}\left[\sum_{j\in T} w_{j,i} F_{j,i}(a_i, y)\right] \le \mathbb{E}_{ \mathcal{I}_{NE}}\left[\sum_{j\in T} w_{j,i} F_{j,i}(a_i, y)\right] + 2\eps_P\sum_{j\in T:w_{j,i}>0} \frac{w_{j,i}}{\lambda_{j,i}}. $$
    Now we use the market alignment assumption, which states that $\sum_j w_{j,i} F_{j,i}(a_i, y)$ can be approximated by $u^{U}_i(a_i, y) + c_i$. We have that:
    $$ \mathbb{E}_{\substack{\mathcal{I}_{dev}}}\left[\sum_{j\in T} w_{j,i} F_{j,i}(a_i, y)\right] \ge \mathbb{E}_{\mathcal{I}_{dev}}[u^{U}_i(a_i,y) + c_i - \eps_U] = \mathbb{E}_{\mathcal{I}_{dev}}[u^{U}_i(a_i,y)] + c_i -\eps_U \ge u^{U}_i(C_{S,T}^*(i)) + c_i - \eps_U . $$
    Additionally:
    $$ \mathbb{E}_{\mathcal{I}_{NE}}\left[\sum_{j\in T} w_{j,i} F_{j,i}(a_i, y)\right] \le \mathbb{E}_{\mathcal{I}_{NE}}[u^{U}_i(a_i,y) + c_i +\eps_U]  = \mathbb{E}_{\mathcal{I}_{NE}}[u^{U}_i(a_i,y)] + c_i +\eps_U. $$
    Combining these inequalities, we get:
    $$ u^{U}_i(C_{S,T}^*(i)) + c_i -\eps_U \le \mathbb{E}_{\mathcal{I}_{NE}}[u^{U}_i(a_i,y)] + c_i +\eps_U + 2\eps_B\sum_{j\in T:w_{j,i}>0} \frac{w_{j,i}}{\lambda_{j,i}}. $$
    The constant offset $c_i$ cancels, and we are left with:
    $$ u^{U}_i(C_{S,T}^*(i)) -\eps_U \le \mathbb{E}_{\mathcal{I}_{NE}}[u^{U}_i(a_i,y)] +\eps_U + 2\eps_P\sum_{j\in T:w_{j,i}>0} \frac{w_{j,i}}{\lambda_{j,i}}. $$
    Thus:
    $$ \mathbb{E}_{\mathcal{I}_{NE}}[u^{U}_i(a_i,y)] \ge u^{U}_i(C_{S,T}^*(i)) - 2\eps_U - 2\eps_P\sum_{j\in T:w_{j,i}>0} \frac{w_{j,i}}{\lambda_{j,i}}. $$
which completes the proof.
\end{proof}

%% file: files_arxiv/anonymous.tex
In the previous section we established that every user obtains her benchmark utility in the personalized setting when Weak Market Alignment holds. Here, we turn attention to the anonymous version of the game and ask: what can we guarantee users when personalization is not available? 
\ifmalgai
\else 

\fi
Our first result shows a separation: without personalization, users can suffer arbitrarily poor outcomes in Nash equilibrium---even when the Weak Market Alignment condition holds. The intuition is that since providers must commit to a single rule across users, strategic choices are no longer decomposable among users. A disclosure to one user is shown to everyone, and so providers must trade off how different users will respond. We show that this tension can push the market toward policies that reveal no useful information to any user. In particular, we construct a class of anonymous game instances where providers have conflicting preferences over which user should be informed. The users are symmetric (they share the same utility and private information), so they will best respond in the same way. If each provider desire to hide information from one user is stronger than his desire to share information with the other, then each provider will prefer to withhold his information.  

\begin{theorem}\label{thm:public-example}
    For any $\eps>0$ and $c \in (\eps, 1]$, there exists an anonymous game exactly satisfying the Weak Market Alignment condition that has a Nash equilibrium where every user $i$'s expected utility is at most $\eps$, while the optimal shared conversation rule benchmark $u^{U}_i(C_{S,K}^*(i))$ is at least $c$. 
\end{theorem}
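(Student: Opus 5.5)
We will prove the theorem with an explicit two-user, two-provider construction, following the intuition sketched before the statement. Both providers will see the same feature, so each provider's information is a common garbling, and the benchmark $C^*_{S,K}(i)$ is simply full revelation.

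\textbf{The instance.} Let $N=\{1,2\}$ and $K=\{1,2\}$. Take the state $y\in\{0,1\}$ uniform, and let each provider observe $x^P_j=y$; users have no private information. Each user $i$ has actions $\mathcal A_i=\{0,1,\bot\}$ and utility
\[
u^U_i(a,y)=c\cdot\1[a=y]+\eps'\cdot\1[a=\bot]
\]
for some $\eps'<\eps$ with $\eps'< c/2$. Hence an uninformed user prefers the safe action $\bot$ and gets $\eps'\le\eps$, while a fully informed user gets $c$. Both users have identical utilities and information, so with a common tie-breaking rule they make identical choices.

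For the providers we set $F_{j,i}(a,y)=\1[a=y]$, so $F_{j,i}\ge 0$ and depends only on $(a_i,y)$. We then write
\[
u^P_1=\lambda_{1,1}\1[a_1=y]-\lambda_{1,2}\1[a_2=y]+\text{const}.
\]
Negative coefficients are not allowed, so we must rewrite this term. We let provider $1$'s component for user $2$ be $F_{1,2}(a,y)=1-\1[a=y]$, which is nonnegative. Provider $1$ then values user $1$ being correct with weight $\alpha$ and user $2$ being wrong with weight $\beta>\alpha$. Symmetrically, provider $2$ values user $2$ correct with weight $\alpha$ and user $1$ wrong with weight $\beta$. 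We choose $\alpha+\beta\le 1$ and add a constant so that utilities lie in $[0,1]$.

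\textbf{The main obstacle: user alignment.} User alignment must still hold exactly. It requires $u^U_i=\sum_j w_{j,i}F_{j,i}+c_i$, and the terms $1-\1[a=y]$ enter with nonnegative weight, so they pull the wrong way; in addition, the $\bot$ bonus must be representable. Our fix is to give each provider, per user, a separable component $F_{j,i}$ that itself contains the user's utility. Concretely, for user $1$ we take
\[
F_{1,1}=u^U_1,\qquad F_{2,1}=\gamma-\1[a_1=y],
\]
which is nonnegative when $\gamma\ge1$, with weights $w_{1,1}>0$ and $w_{2,1}=0$. User alignment then holds via provider $1$ alone, and the required condition $w>0\Rightarrow\lambda>0$ is satisfied. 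The $\bot$ action appears only in user utility, which is fine. We do the same symmetrically for user $2$.

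\textbf{The equilibrium.} Both providers commit to the uninformative rule that always sends a constant message. Each user's best response is then $\bot$, giving her utility $\eps'\le\eps$, while the benchmark is $c$.

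To rule out profitable deviations, note that any rule $C'_1$ deployed by provider $1$ is public. Since users are symmetric, both users receive the same best-response value from it. If $C'_1$ gives a user value above $\eps'$, both users switch to it and both take the same decision, so $\Pr[a_1=y]=\Pr[a_2=y]=:p$. Provider $1$'s payoff change is then $\alpha\cdot(\text{gain in }u^U_1)-\beta(p-0)$. Because user $1$'s gain is at most $c\,p-\eps'$ up to scaling, choosing $\beta>\alpha c$ makes this negative. If instead $C'_1$ gives value at most $\eps'$, the tie-breaking rule keeps both users on $\bot$, or at least leaves provider $1$'s payoff unchanged. The same argument applies to provider $2$, so the uninformative profile is a Nash equilibrium.

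The remaining work is to normalize the constants so all utilities lie in $[0,1]$ and to verify the conditions exactly. The delicate part is the inequality $\alpha\cdot(\text{user gain})<\beta p$. It holds because gain requires correctness, since $u^U$ exceeds $\eps'$ only through $\1[a=y]$. The tie-breaking rule should be fixed to favor the incumbent provider and the uninformative rule.
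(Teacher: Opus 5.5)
There is a genuine gap in your choice of state space. With $y$ uniform on $\{0,1\}$, an uninformed user who guesses $a=0$ (or $a=1$) gets $c/2$, while $\bot$ gives only $\eps'$. You explicitly take $\eps'<c/2$, so the user strictly prefers guessing. Your claim that under the uninformative profile each user plays $\bot$ and earns $\eps'\le\eps$ is therefore false. No choice of $\eps'$ rescues a binary state: the uninformed user always earns at least $c/2$, which exceeds $\eps$ whenever $c>2\eps$ (e.g.\ $c=1$, $\eps=0.1$), and the theorem must cover all $c\in(\eps,1]$. The same error reaches your deviation analysis, which uses that the baseline probability of $a_2=y$ is $0$. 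The missing idea is to make uninformed guessing nearly worthless. Take $y$ uniform on $\{1,\dots,M\}$ with $M>c/\eps$, as the paper does, with a message space that can encode $y$. Then any guess yields $c/M<\eps$, and you can set the $\bot$ payoff to some $\eps'\in(c/M,\eps]$.

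With that repair your construction works. It differs from the paper's only in how the provider utilities are built. The paper takes $u^P_1=\frac{1}{D+1}\1[a_1=y]+\frac{D}{D+1}\1[a_2=\bot]$ (and symmetrically), so each user's utility $c\1[a_i=y]+\eps\1[a_i=\bot]$ is an exact positive combination of one provider's ``correct'' indicator and the other provider's ``$\bot$'' indicator. You instead make provider $1$ exactly aligned with user $1$ ($F_{1,1}=u^U_1$) and have him reward user $2$ being wrong ($F_{1,2}=1-\1[a_2=y]$), which makes user alignment trivial. Your payoff bound needs a correction. If both users switch to the deviation with $\Pr[a=y]=q$ and $\Pr[a=\bot]=r\le 1-q$, user $1$'s gain is at most $q(c-\eps')$, not $cq-\eps'$ as you wrote. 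Provider $1$'s payoff change is then at most $q\bigl(\alpha(c-\eps')-\beta\bigr)\le 0$ once $\beta\ge\alpha c$. This holds whether or not the users switch, so you do not need to fix the tie-breaking rule in your favor.
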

\begin{proof}
    Given $\eps$ and $c$, we construct an instance as follows. Consider a game with two providers and two users. Choose $M\in\mathbb{N}$  with $M>\frac{c}{\eps}$. The state $y$ is drawn uniformly from $\{1,...,M\}$. Each provider observes the state, i.e. $x_1^P = x_2^P = y$, while users observe nothing. Users have the action space $\cA_1=\cA_2=\{1,...,M,\bot\}$. Suppose $R=1$ (one round of conversation), and provider conversation rules map the observed state to a (randomized) signal in the state space. 
    
    In this game, the providers' utilities are:
    \begin{align*}
        u^{P}_1(a_1,a_2,y) &= \frac{1}{D+1} \1[a_1=y] + \frac{D}{D+1}\1[a_2=\bot] \\
        u^{P}_2(a_1,a_2,y) &=  \frac{D}{D+1}\1[a_1=\bot] + \frac{1}{D+1}\1[a_2=y]
    \end{align*}
    where $D>1$ is any fixed constant. 
    The users' utilities are:
    \begin{align*}
        u^{U}_1(a_1,y) &= c \cdot \1[a_1=y] + \eps \cdot \1[a_1=\bot] \\
        u^{U}_2(a_2,y) &= c \cdot \1[a_2=y] + \eps \cdot \1[a_2=\bot]
    \end{align*}
    Note that the users' and providers' utilities exactly satisfy the Weak Market Alignment assumption. Furthermore, all utilities are between $0$ and $1$.

    We show that if both providers choose conversation rules that reveal no information (e.g. the rule is: always signal 1, regardless of the observed state), this forms a Nash equilibrium in the public commitment game. Notice that if both providers choose to reveal no information, each user receives utility $\eps$ in expectation by choosing the action $\bot$ and utility $\frac{c}{M} <\eps$ in expectation by choosing any other action. Thus
    each user's utility-maximizing action is $\bot$, and each provider receives utility $\frac{D}{D+1}$. 
    
    Consider any deviation by provider 1 to another conversation rule, resulting in a vector of provider conversation rules $C_{\vec{P}}$. Since both users have the same utility and same private information, $\mathcal{I}_{1}(C_{\vec{P}}) = \mathcal{I}_{2}(C_{\vec{P}})$. Let $p=\Pr_{\mathcal{I}_{1}(C_{\vec{P}})}[a_1\neq \bot] = \Pr_{\mathcal{I}_{2}(C_{\vec{P}})}[a_2\neq \bot]$. Then, provider 1's expected utility is: $\frac{1}{D+1}\Pr_{{\mathcal{I}_{1}(C_{\vec{P}})}}[a_1=y] + \frac{D}{D+1} \Pr_{{\mathcal{I}_{2}(C_{\vec{P}})}}[a_2=\bot] \leq \frac{p}{D+1}+\frac{D(1-p)}{D+1} = \frac{p-Dp}{D+1}+\frac{D}{D+1}\leq \frac{D}{D+1}$ (since $D>1$). 
    Thus, no deviation can give any provider utility strictly higher than $\frac{D}{D+1}$, and so the no disclosure signal profile is a Nash equilibrium. 
    
    Note as both providers have the same private information, in this case the optimal shared conversation rule is the rule which reveals $y$. This induces the action $a_i=y$ and gives both users expected utility $c$. This completes our proof. 
\end{proof}

Motivated by this separation, we then ask when strong user guarantees can be recovered in the anonymous setting. We give a simple setting where Strong Market Alignment suffices to guarantee that every user obtains her benchmark utility --- namely, the setting where a \textit{user-dominant} conversation rule (one that is simultaneously optimal for \textit{every} user) is available.

\begin{definition}[User-Dominant Conversation Rule]
    Given a provider $j$, we say that a conversation rule $C^*_{P_j}$ is user-dominant if for all users $i$ and all alternative conversation rules $C_{P_j}$ provider $j$ could deploy, $u^{U}_i(C^*_{P_j};j) \geq u^{U}_i(C_{P_j};j)$.
\end{definition}

Our next result shows that whenever a user-dominant rule exists\footnote{Observe that a user-dominant conversation rule exists whenever the conversation is permissive enough for a provider to reveal his private information --- this ``full-revelation" rule is user-dominant.}, every user obtains utility close to her utility if she were to interact with this user-dominant rule, under the Strong Market Alignment condition. Intuitively, since provider utilities are non-negative linear combinations of user utilities, and user utilities are maximized by the user-dominant rule, providers will themselves prefer to deploy the user-dominant rule. As before, our results here do not require every provider in the marketplace to satisfy the alignment condition; it will suffice that there exist providers that do.

\begin{theorem}\label{thm:public-natalie}
    Suppose there exists a set of providers $T\subseteq K$ such that $(T,\eps)$-Strong Market Alignment is satisfied with weights $(\lambda_{j,i})_{i\in N}$ for every provider $j\in T$. Furthermore, suppose for every provider $j\in T$, there exists a user-dominant conversation rule $C^*_{P_j}$ for all users $i\in N$. Then, for each user $i\in N$, her expected utility in any Nash equilibrium is at least $u^{U}_i(C^*_{S,T}(i)) - \frac{2\eps}{\lambda_i^*}$, where $\lambda_i^* = \max_{j\in T} \lambda_{j,i}$. 
\end{theorem}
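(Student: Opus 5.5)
Fix any Nash equilibrium $C_{\vec P}$ of the anonymous game with induced distribution $\cI_{NE}$, and fix a user $i$. Let $j^\star\in T$ attain $\lambda_i^*=\lambda_{j^\star,i}$, which is positive by coverage. The plan is to consider the deviation of provider $j^\star$ to his user-dominant rule $C^*_{P_{j^\star}}$, and to show that this deviation weakly increases \emph{every} user's utility. Strong Market Alignment then turns the equilibrium no-deviation inequality into a lower bound on user $i$'s equilibrium utility.

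\textbf{Step 1: the deviation helps every user.} Let $\cI_{dev}=\cI(C^*_{P_{j^\star}},C_{P_{-j^\star}})$. Each user $i'$ best-responds over providers, so her utility after the deviation is
\[
\max\Bigl(u^U_{i'}(C^*_{P_{j^\star}};j^\star),\ \max_{j\neq j^\star}u^U_{i'}(C_{P_j};j)\Bigr).
\]
By user-dominance, $u^U_{i'}(C^*_{P_{j^\star}};j^\star)\ge u^U_{i'}(C_{P_{j^\star}};j^\star)$. Hence $\E_{\cI_{dev}}[u^U_{i'}]\ge\E_{\cI_{NE}}[u^U_{i'}]$ for all $i'$. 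Unlike in the personalized case, the deviation changes the distributions of all users simultaneously, and this monotonicity is exactly what lets us control that.

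\textbf{Step 2: benchmark comparison for user $i$.} Provider $j^\star\in T$ can implement $C^*_{S,T}(i)$, since a common garbling of $x^P_T$ is simulable by $j^\star$. So $C^*_{S,T}(i)$ is one of his feasible rules, and user-dominance gives $u^U_i(C^*_{P_{j^\star}};j^\star)\ge u^U_i(C^*_{S,T}(i))$. Therefore $\E_{\cI_{dev}}[u^U_i]\ge u^U_i(C^*_{S,T}(i))$.

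\textbf{Step 3: combine with the equilibrium condition.} The equilibrium condition gives $\E_{\cI_{dev}}[u^P_{j^\star}]\le\E_{\cI_{NE}}[u^P_{j^\star}]$. Applying Strong Market Alignment on both sides, the constant $c_{j^\star}$ cancels and we get
\[
\sum_{i'}\lambda_{j^\star,i'}\bigl(\E_{\cI_{dev}}[u^U_{i'}]-\E_{\cI_{NE}}[u^U_{i'}]\bigr)\le 2\eps.
\]
By Step 1, every term in the sum is nonnegative, so we may drop all terms with $i'\ne i$. This leaves $\lambda_i^*\bigl(\E_{\cI_{dev}}[u^U_i]-\E_{\cI_{NE}}[u^U_i]\bigr)\le 2\eps$. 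Dividing by $\lambda_i^*$ and using Step 2 yields $\E_{\cI_{NE}}[u^U_i]\ge u^U_i(C^*_{S,T}(i))-2\eps/\lambda^*_i$.

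\textbf{Main obstacle.} The delicate point is Step 1, specifically tie-breaking and users switching providers. A user who switches to $j^\star$ after the deviation only does so because it is weakly better for her. A user who stays elsewhere keeps her old utility. Under the fixed common tie-breaking rule, users' utilities are determined by the maxima above, so the argument goes through. I would state this carefully by working with realized best-response utilities rather than with actions.
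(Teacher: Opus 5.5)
Your proof is correct and matches the paper's argument. The paper has the same provider in $T$ (the one maximizing $\lambda_{j,i}$) deviate to his user-dominant rule, and uses the same two facts: user-dominance makes every user weakly better off, and it lifts user $i$ to at least $u^U_i(C^*_{S,T}(i))$. It then applies Strong Market Alignment to both sides of the no-deviation inequality. The only difference is that the paper argues by contradiction, exhibiting a strictly profitable deviation, while you derive the bound directly, which also spares you from arguing that user $i$ actually switches to $j^\star$.
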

\begin{proof}
    Fix any Nash equilibrium strategy profile $C_{\vec P}$ and let $\cI_{NE}=\cI(C_{\vec P})$. Suppose there is a user $i$ such that her utility is strictly less than $u^{U}_i(C_{S,T}^*(i)) - \frac{2\eps}{\lambda_i^*}$. We show that this implies the existence of some provider who has a profitable deviation, thus contradicting the premise that $C_{\vec P}$ is a Nash equilibrium. 
    
    Let $\tilde{j} = \argmax_{j\in T} \lambda_{j,i}$. Thus $\lambda_i^* = \lambda_{\tilde{j},i}$. Consider a deviation by provider $\tilde{j}$ to the user-dominant conversation rule $C^*_{P_{\tilde{j}}}$. Let $\cI_{dev}=\cI(C^*_{P_{\tilde{j}}}, C_{P_{-\tilde{j}}})$ be the distribution induced by this deviation.
    By definition of $C^*_{P_{\tilde{j}}}$, $u^{U}_i(C^*_{P_{\tilde{j}}};\tilde{j}) \geq u^{U}_i(C^*_{S,T}(i))$, so user $i$'s utility in equilibrium must be strictly less than $u^{U}_i(C^*_{P_{\tilde{j}}}; \tilde{j}) - \frac{2\eps}{\lambda_{\tilde{j},i}}$. Thus, under this deviation, user $i$ will now choose to interact with provider $\tilde{j}$ and receive utility $u^{U}_i(C^*_{P_{\tilde{j}}}; \tilde{j})$, increasing her utility by $> \frac{2\eps}{\lambda_{\tilde{j},i}}$.

    Now, by the user-dominance of $C^*_{P_{\tilde{j}}}$, we have that for all other users $i'$, $u^{U}_{i'}(C^*_{P_{\tilde{j}}};\tilde{j}) \geq u^{U}_{i'}(C_{P_{\tilde{j}}}; \tilde{j})$. That is, users' utilities can only (weakly) increase under this deviation. Then, applying approximate Strong Market Alignment, we have:
    \begin{align*}
        \E_{\cI_{dev}}[u^{P}_{\tilde j}(a_{1:n}, y)] &\geq \E_{\cI_{dev}}\left[\sum_{i\in N} \lambda_{\tilde{j}, i} u^{U}_i(a_i,y) + c_{\tilde{j}} - \eps \right] \\
        &= \sum_{i\in N} \lambda_{\tilde{j}, i} \E_{\cI_{dev}}\left[  u^{U}_i(a_i,y)  \right] + c_{\tilde{j}} - \eps \\
        &> \lambda_{\tilde{j}, i} \left( \E_{\cI_{NE}}\left[  u^{U}_i(a_i,y)  \right] + \frac{2\eps}{\lambda_{\tilde{j},i}}\right) + \sum_{i' \neq i} \lambda_{\tilde{j}, i'} \E_{\cI_{NE}}\left[  u^{U}_{i'}(a_{i'},y)  \right] + c_{\tilde{j}} - \eps \\
        &= \sum_{i\in N} \lambda_{\tilde{j}, i} \E_{\cI_{NE}}\left[  u^{U}_{i}(a_{i},y)  \right] + c_{\tilde{j}} + \eps \\
        &= \E_{\cI_{NE}}\left[ \sum_{i\in N} \lambda_{\tilde{j}, i} u^{U}_{i}(a_{i},y)  + c_{\tilde{j}} \right] + \eps \\
        &\geq \E_{\cI_{NE}}\left[ u^{P}_{\tilde j}(a_{1:n}, y) -\eps \right] + \eps\\
        &= \E_{\cI_{NE}}\left[ u^{P}_{\tilde j}(a_{1:n}, y) \right]
    \end{align*}
    where the first and final inequalities follow from approximate Strong Market Alignment, and the third inequality follows from the users' increases in utility under the deviation. Therefore, provider $\tilde{j}$ can strictly increase his utility by deviating to $C^*_{P_{\tilde{j}}}$, and so $C_{\vec P}$ cannot be a Nash equilibrium. This completes the proof.

\end{proof}

We now extend positive results in the anonymous game with Strong Market Alignment beyond settings where a user-dominant conversation rule is available. The key observation is: any provider can always use the first round to elicit a user's identity (which is feasible whenever the message space has size at least $N$), and use the remaining $R-1$ rounds to implement that user's optimal shared $(R-1)$-conversation rule. If Strong Market Alignment holds, and sacrificing one round of conversation is not too harmful, then providers have incentive to adopt this rule. In effect, this strategy recovers personalization --- with the caveat that users need not report their identities truthfully. We show that under this rule, users have no incentive to lie. 

To prove this result, then, we need to quantify the marginal value of a single additional round for a user. For any provider $j$, let $u^{U}_{i}(C^{*}_{R}(i);j) = \max_{C \in \mathcal{C}_{R}}(u^{U}_{i}(C; j))$ denote user $i$'s utility under her favorite $R$-round conversation rule offered by provider $j$. Next, define $\Delta_R(j) = \max_{i\in N} \left(u^{U}_i(C_R^*(i);j) - u^{U}_i(C^*_{S,T,R-1}(i))\right)$ to be the largest gain (across all users) from interacting with the best $R$-round conversation rule deployed by provider $j$ versus the best $(R-1)$-round $T$-shared conversation rule. Note that if $R$ is large, then the marginal value of a single round --- and hence $\Delta_R(j)$ --- is often small. 

\begin{theorem}\label{thm:public-natalie-general}
   Suppose there exists a set of providers $T\subseteq K$ such that $(T,\eps)$-Strong Market Alignment is satisfied with weights $(\lambda_{j,i})_{i\in N}$ for every provider $j\in T$. Assume $|\mathcal{M}| \geq N$. Then, for each user $i$, her expected utility in any Nash equilibrium is at least $u^{U}_i(C^*_{S,T,R-1}(i))- \delta_{i}$, where $\delta_{i} = \min_{j:\\\lambda_{j,i}>0}\left(\Delta_R(j)\frac{\sum_{i'\neq i} \lambda_{j,i'}}{\lambda_{j,i}} + \frac{2\eps}{\lambda_{j,i}}\right)$.
\end{theorem}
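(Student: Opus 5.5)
We argue by contradiction, using a unilateral deviation in the same style as the proof of Theorem~\ref{thm:public-natalie}. Fix a Nash equilibrium $C_{\vec P}$ with induced distribution $\cI_{NE}$. Fix a user $i$ and any provider $j\in T$ with $\lambda_{j,i}>0$; we will take $j$ to be the minimizer in the definition of $\delta_i$. Suppose user $i$'s equilibrium utility is strictly below $u^{U}_i(C^*_{S,T,R-1}(i)) - \Delta_R(j)\frac{\sum_{i'\neq i}\lambda_{j,i'}}{\lambda_{j,i}} - \frac{2\eps}{\lambda_{j,i}}$.

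\textbf{The deviation rule $\hat C_j$.} Provider $j$ deviates to the anonymous $R$-round rule $\hat C_j$ defined as follows.
\begin{itemize}
\item In round one, the user announces an index $i'\in N$; this is feasible since $|\M|\ge N$.
\item In the remaining $R-1$ rounds, provider $j$ runs $C^*_{S,T,R-1}(i')$, simulating the common garbling $z$ from $x^P_j$ via $f_j$.
\end{itemize}

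\textbf{Step 1: each user's value under $\hat C_j$.} We bound every user $i'$'s value from $\hat C_j$ from both sides.
\begin{itemize}
\item Lower bound: reporting $i'$ truthfully already yields utility $u^{U}_{i'}(C^*_{S,T,R-1}(i'))$. Hence $u^{U}_{i'}(\hat C_j;j)\ge u^{U}_{i'}(C^*_{S,T,R-1}(i'))$.
\item Upper bound: whatever the user reports, the result is an $R$-round rule of provider $j$. So her best response gives at most $u^{U}_{i'}(C^*_R(i');j)$, which is at most $u^{U}_{i'}(C^*_{S,T,R-1}(i'))+\Delta_R(j)$.
\end{itemize}
The upper bound is the sense in which users ``have no incentive to lie'': any gain from misreporting is capped by $\Delta_R(j)$.

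\textbf{Step 2: user $i$ switches and gains.} By the contradiction hypothesis, user $i$ strictly prefers $\hat C_j$ to her equilibrium choice, so she switches to provider $j$. Her utility rises by more than $\Delta_R(j)\frac{\sum_{i'\ne i}\lambda_{j,i'}}{\lambda_{j,i}}+\frac{2\eps}{\lambda_{j,i}}$.

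\textbf{Step 3: other users lose at most $\Delta_R(j)$.} This is the main obstacle. In the anonymous game, replacing $C_{P_j}$ by $\hat C_j$ can hurt users $i'\neq i$ who were previously consulting $j$. Such a user's post-deviation utility is at least $\max\big(u^{U}_{i'}(C^*_{S,T,R-1}(i')),\ \text{her best option among providers other than } j\big)$. Her pre-deviation utility was at most $\max\big(u^{U}_{i'}(C^*_R(i');j),\ \text{the same other-provider value}\big)$. By Step 1, the first term drops by at most $\Delta_R(j)$ and the second is unchanged, so each $i'\ne i$ loses at most $\Delta_R(j)$. This is where $C^*_R(i';j)$ dominating every rule of $j$, including the old $C_{P_j}$, is essential.

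\textbf{Step 4: the deviation is profitable.} Apply $(T,\eps)$-Strong Market Alignment to $j$'s utility before and after the deviation, exactly as in the chain of inequalities in Theorem~\ref{thm:public-natalie}:
\[
\E_{\cI_{dev}}[u^P_j]-\E_{\cI_{NE}}[u^P_j]\ \ge\ \lambda_{j,i}\cdot\text{gain}_i-\sum_{i'\ne i}\lambda_{j,i'}\Delta_R(j)-2\eps\ >\ 0.
\]
This contradicts the Nash condition. Since $j$ was an arbitrary provider with $\lambda_{j,i}>0$, we may take it to be the minimizer, which yields the bound $u^{U}_i(C^*_{S,T,R-1}(i))-\delta_i$.

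One point needs care throughout. Users' best responses under $\hat C_j$ are computed with the fixed common tie-breaking rule, so each user's induced distribution, and hence her utility, is well-defined before and after the deviation. The Strong Market Alignment identity then needs only each user's expected utility, not the joint distribution of actions.
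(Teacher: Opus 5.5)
Your proposal is correct and follows the paper's proof essentially step for step. It uses the same deviation, in which the minimizing provider $j$ elicits the user's index in round one and then runs $C^*_{S,T,R-1}(i')$, the same bound that every other user loses at most $\Delta_R(j)$, and the same Strong Market Alignment chain showing the deviation is strictly profitable. Your max-based Step 3 is a tidier version of the paper's three-case analysis; it needs only the truthful-report lower bound for $\hat C_j$ and the fact that $C^*_R(i';j)$ dominates the old rule $C_{P_j}$, so the paper's looser ``no incentive to lie'' claim is not actually required.
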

\begin{proof}
    Consider any Nash equilibrium strategy profile $C_{\vec{P}}$. Suppose that there is a user $i^*$ such that her utility is strictly less than $u^{U}_{i^*}(C^*_{S,T,R-1}(i^*))-\delta_{i^*}$. We show that this implies the existence of some provider who has a profitable deviation, thus contradicting the premise that $C_{\vec{P}}$ is a Nash equilibrium. 

    Recall that under the Strong Market Alignment condition, for every provider $j$'s utility:
    \[
    \left| u^{P}_j\big(a_{1:n},y\big)\ - \left(\sum_{i\in N}\lambda_{j,i}\,u^{U}_{i}(a_i,y) + c_j\right)  \right| \leq \eps
    \]
    where $\lambda_{j,i}\ge 0$ and $c_j\in\R$ for all $i$. Furthermore, for all users $i$, there exists at least one provider $j$ with $\lambda_{j,i}>0$. Let $j^* = \argmin_{j: \lambda_{j,i^*}>0} \left(\Delta_R(j)\frac{\sum_{i'\neq i^*} \lambda_{j,i'}}{\lambda_{j,i^*}} + \frac{2\eps}{\lambda_{j,i^*}}\right)$. 

    Suppose provider $j^*$ deviates to the following conversation rule $C_{P_{j^*}}'$: on the first round, every user $i$ self-selects into one of $|N|$ ``branches," using the message space of size $\geq |N|$. In the remaining $R-1$ rounds, provider $j^*$ implements $C_{S,T,R-1}^*(i)$ in each branch $i$. Note that this is a valid $R$-round conversation rule. Let $C_{\vec P}^* = (C'_{P_{j^*}}, C_{P_{-j^*}})$ be the profile of provider strategies under this unilateral deviation.
    
    As all users are strategic, any user $i$ who does interact with provider $j^*$ will either declare her own type or declare another type. If she declares her own type, she gets utility $u^{U}_{i}(C^{*}_{S,T,R-1}(i))$. If she doesn't, since $C^{*}_{S,T,R-1}(i)$ is defined as the best $R-1$-round conversation rule for her, it must be that she is also getting the same utility $u^{U}_{i}(C^{*}_{S,T,R-1}(i))$. That is, declaring her own type weakly maximizes her utility.
    
    Observe that if user $i^*$ interacts with provider $j^*$ under $C^{*}_{\vec{P}}$ and self-selects into branch $i^*$, her expected utility will be $u^{U}_{i^*}(C^{*}_{S,T,R-1}(i^*))$, and so it will have increased by at least $\delta_{i^*}$.  

    Furthermore, every user $i \neq i^*$ has her utility decreased by at most $\Delta_{R}(j^*)$. To see this, note that there are three cases for each user $i$:
    \begin{itemize}
        \item In $C_{\vec P}$, user $i$ is interacting with provider $j^*$. Then her utility under $C_{\vec P}$ is at most $u^{U}_i(C^{*}_R(i);j)$, and her utility under $C_{\vec P}^*$ is lower by at most $\Delta_{R}(j^*)$.
        \item In $C_{\vec P}$, user $i$ is interacting with some provider $j \neq j^*$, and in $C^{*}_{\vec P}$ she is also interacting with some provider $j \neq j^*$. Then, as the other providers have not changed their conversation rules, user $i$'s utility does not decrease at all.
        \item In $C_{\vec P}$, user $i$ is interacting with some provider $j \neq j^*$, and in $C^{*}_{\vec P}$ she is interacting with provider $j^*$. She could still have gotten the same expected utility from provider $j$ under $C^{*}_{\vec P}$, so her utility must have weakly \emph{increased} under the deviation to $C^{*}_{\vec P}$.
    \end{itemize}
    Thus, every other user $i$'s expected utility can only decrease by at most $\Delta_R(j^*)$ by assumption. So, we can compute the change in provider $j^*$'s utility:
    \begin{align*}
        & \E_{\cI(C'_{P_{j^*}}, C_{P_{-j^*}})}[u^{P}_{j^*}(a_{1:n},y)] - \E_{\cI(C_{\vec p})}[u^{P}_{j^*}(a_{1:n},y)] \\
        &\geq \E_{\cI(C'_{P_{j^*}}, C_{P_{-j^*}})}\left[\lambda_{j^*,i^*} u^{U}_{i^*}(a_{i^*},y) + \sum_{i \in N \neq i^*} \lambda_{j^*,i} u^{U}_i(a_i,y) +c_j - \eps \right] \\
        &\ \ \ \ \  - \E_{\cI(C_{\vec P})}\left[\lambda_{j^*,i^*} u^{U}_{i^*}(a_{i^*},y) + \sum_{i \in N \neq i^*} \lambda_{j^*,i} u^{U}_i(a_i,y) +c_j + \eps \right] \\
         &= \E_{\cI(C'_{B_{j^*}}, C_{P_{-j^*}})}\left[\lambda_{j^*,i^*} u^{U}_{i^*}(a_{i^*},y) + \sum_{i \in N \neq i^*} \lambda_{j^*,i} u^{U}_i(a_i,y) +c_j  \right] \\
        & \ \ \ \ \ - \E_{\cI(C_{\vec P})}\left[\lambda_{j^*,i^*} u^{U}_{i^*}(a_{i^*},y) + \sum_{i \in N \neq i^*} \lambda_{j^*,i} u^{U}_i(a_i,y) +c_j \right] - 2\eps \\
        &= \lambda_{j^*,i^*} \left(\E_{\cI(C'_{P_{j^*}}, C_{P_{-j^*}})}\left[ u^{U}_{i^*}(a_{i^*},y) \right] - \E_{\cI(C_{\vec P})}\left[ u^{U}_{i^*}(a_{i^*},y) \right] \right) \\
        & \ \ \ \ \ + \sum_{i \in N \neq i^*} \lambda_{j^*,i} \left( \E_{\cI(C'_{P_{j^*}}, C_{P_{-j^*}})}\left[ u^{U}_i(a_i,y) \right] -  \E_{\cI(C_{\vec B})}\left[ u^{U}_i(a_i,y) \right] \right) - 2\eps \\
        &> \lambda_{j^*,i^*} \delta - \Delta{_R}(j^*) \sum_{i\neq i^*} \lambda_{j^*,i} - 2\eps \\
        &= \lambda_{j^*,i^*} \left(\Delta{_R}(j^*) \frac{\sum_{i\neq i^*} \lambda_{j^*,i}}{\lambda_{j^*,i^*}} + \frac{2\eps}{\lambda_{j^{*},i^{*}}} \right) - \Delta{_R}(j^*) \sum_{i\neq i^*} \lambda_{j^*,i} - 2\eps\\
        &=0
    \end{align*}

    Therefore, provider $j^*$ can increase his expected utility by deviating to $C'_{P_j^*}$, and the strategy profile $C_{\vec P}$ cannot be a Nash equilibrium. This completes the proof.    
\end{proof}

%% file: files_arxiv/robustness.tex
So far, we have analyzed when pluralistically aligned outcomes emerge for a fixed population of users. In reality, new users --- with different preferences and information --- may regularly enter the ecosystem, possibly breaking the conditions that guarantee alignment. What happens to the guarantees enjoyed by pre-existing users when this occurs? 

We study this question in the two regimes where we have identified sufficient conditions for aligned outcomes: the personalized setting under Weak Market Alignment, and the anonymous setting under Strong Market Alignment. In each regime, we analyze the effect of a single user entering the market. 
More specifically, given an original game instance $G$, we define the \textit{augmented game} $G^+$ by adding user $n+1$ with utility $u^{U}_{n+1}:\cA_{n+1}\times \cY \to [0,1]$ and feature distribution $\phi(x^U_{n+1})$ that is the extension of $\phi$ to a joint distribution given the addition of user $n+1$'s private features $x^U_{n+1}$. We allow each provider $j$'s utility to change via the addition of an arbitrary function of the new user's action:
$$
u^{P}_j(a_{1:n+1},y) = (1-\beta_j)\cdot u^{P}_j(a_{1:n},y) + \beta_j \cdot f(a_{n+1}, y)
$$
where $\beta_j\in (0,1)$ and $f:\cA_{n+1}\times \cY \to [0,1]$ is an arbitrary function. We assume the original users' utilities stay the same. 

Importantly, this new user may violate our alignment conditions, and we do not hope to prove utility guarantees for her; instead, we study the effect of her entrance on the utility guarantees of the other users. We show that adding a user to the personalized setting with (previously satisfied) Weak Market Alignment does \emph{not} hurt the other users; they still attain the guarantees they had before her entrance. By contrast, adding a user to the anonymous setting with (previously satisfied) Strong Market Alignment can not only result in poor outcomes for the new user, but also break the utility guarantees that the other users previously had.

\begin{theorem}\label{thm:private-adding-users}
    Let $T\subseteq K$ be a set of providers that satisfy $(T, \eps_U,\eps_P)$-approximate Weak Market Alignment with respect to users $N$. In every Nash equilibrium of an augmented personalized game given by $G^+$,
    every original user $i\in N$ obtains expected utility at least $u^{U}_i(C_{S,T}^*(i)) - 2\eps_U - 2\eps_P\mu_i$, where $\mu_i = \sum_{j\in T:w_{j,i}>0} \frac{w_{j,i}}{\lambda_{j,i}}$.
\end{theorem}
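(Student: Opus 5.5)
The plan is to reduce to the argument of Theorem~\ref{thm:private}. The augmented provider utilities in $G^+$ again satisfy Weak Market Alignment with respect to the original users, after rescaling. Instead of re-verifying the alignment condition formally with user $n+1$ included, I will rerun the deviation argument directly and track how the factor $(1-\beta_j)$ propagates.

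Fix a Nash equilibrium $C_{\vec P}$ of $G^+$ and an original user $i\in N$. Suppose for contradiction that her equilibrium utility is strictly below $u^{U}_i(C^*_{S,T}(i)) - 2\eps_U - 2\eps_P\mu_i$.

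\textbf{Step 1 (the deviation).} For each $j\in T$, consider the personalized deviation in which $j$ replaces only $C_{P_j}(i)$ by $C^*_{S,T}(i)$. Every provider in $T$ can implement this rule. Since $u^{U}_i(C^*_{S,T}(i))$ exceeds user $i$'s equilibrium utility, she switches to provider $j$.

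\textbf{Step 2 (only user $i$'s marginal changes).} In the personalized game each user observes only the rules offered to her. Hence every other user, including the new user $n+1$, faces exactly the same menu and keeps the same best response. So the deviation changes only the marginal distribution of $(a_i,y)$. In particular, the joint distribution of $(a_{n+1},y)$ is unchanged, and the term $\beta_j\,\E[f(a_{n+1},y)]$ is identical before and after the deviation.

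This decoupling is the crux of the proof, and it is exactly where the personalized game differs from the anonymous one. The one subtlety is that provider utilities depend on the joint profile $a_{1:n+1}$. However, after the separable approximation, only marginals of each $(a_{i'},y)$ enter, so the argument goes through.

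\textbf{Step 3 (the equilibrium inequality).} The equilibrium condition for provider $j$ in $G^+$ reads
\[
(1-\beta_j)\E_{dev}[u^{P}_j(a_{1:n},y)] + \beta_j\E_{dev}[f] \le (1-\beta_j)\E_{NE}[u^{P}_j(a_{1:n},y)] + \beta_j\E_{NE}[f].
\]
The $f$ terms cancel by Step 2. Dividing by $1-\beta_j>0$ (valid since $\beta_j<1$) yields exactly the original-game inequality
\[
\E_{dev}[u^{P}_j(a_{1:n},y)]\le \E_{NE}[u^{P}_j(a_{1:n},y)].
\]

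\textbf{Step 4 (reuse the chain).} From here I copy the chain of Theorem~\ref{thm:private} verbatim. Apply provider separability with error $\eps_P$ and cancel the terms of users $i'\neq i$, whose marginals are unchanged. This gives
\[
\E_{dev}[F_{j,i}] \le \E_{NE}[F_{j,i}] + \frac{2\eps_P}{\lambda_{j,i}}
\]
whenever $\lambda_{j,i}>0$. Sum these inequalities with the weights $w_{j,i}$, and then apply user alignment with error $\eps_U$. This produces the bound $u^{U}_i(C^*_{S,T}(i)) - 2\eps_U - 2\eps_P\mu_i$, contradicting the assumption.

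Because the division by $1-\beta_j$ happens before the error terms are introduced, the error constants do not degrade. The guarantee is therefore identical to that of Theorem~\ref{thm:private}.
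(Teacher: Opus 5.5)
Your proposal is correct and follows the paper's proof: you use the same single-user personalized deviation to $C^*_{S,T}(i)$ and the same observation that all other users' marginals, including that of $(a_{n+1},y)$, are unchanged, so the $\beta_j f$ terms cancel, and then you reuse the chain from Theorem~\ref{thm:private}. The only cosmetic difference is that you divide out $(1-\beta_j)$ before invoking separability, whereas the paper carries that factor through to the end; both give the identical bound.
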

Observe that here, for every original user $i\in N$, her optimal shared conversation rule $C^*_S(i)$ is unchanged by the addition of a new user; the joint distribution of $(x^{U}_i, y)$ is unaffected, and her utility $u^{U}_{i}$ depends only on her interaction (and not other users' actions). Thus, each original user recovers \textit{exactly the same guarantee} as before. 
\begin{proof}
    Let $C_{\Vec{P}}$ be any Nash equilibrium strategy profile of the augmented personalized game, and let $\cI_{NE} = \cI(C_{\Vec{P}})$ be the induced distribution.
    Now, assume for contradiction that there exists a user $i \in N$ such that her utility is strictly lower than $u^{U}_i(C_{S,T}^*(i)) - 2\eps_U - 2\eps_P\mu_i$. Consider a deviation by any provider $j \in T$ in his choice of conversation rule for user $i$ from $C_{P_j}(i)$ to $C_{S,T}^*(i)$, the optimal shared conversation rule for user $i$. Note that by definition, any provider could implement $C^*_S(i)$.
    
    Let $\cI_{dev} = \cI(C_{S,T}^*(i), C_{P_j}(-i), C_{P_{-j}})$ be the distribution induced by the deviation of provider $j$ in his chosen conversation rule for user $i$ (the conversation rules chosen by provider $j$ for all other users, $C_{P_j}(-i)$, remain unchanged). As user $i$'s utility was previously worse than $u^{U}_i(C_S^*(i))$, she will now choose to interact with the provider who implements $C^*_{S,T}(i)$, leading to the outcome under $\cI_{dev}$.  

    Now, by the Nash equilibrium condition, no provider $j$ has an incentive to deviate. Thus, for all providers $j\in T$:
    $$ \mathbb{E}_{\mathcal{I}_{dev}}[u^{P}_j(a_{1:n+1},y)] \le \mathbb{E}_{\mathcal{I}_{NE}}[u^{P}_j(a_{1:n+1},y)]. $$
    Since every user observes only the conversation rules chosen for them, this deviation can only change the induced distribution for user $i$ (in particular the distribution over her actions). 
    Thus, using our assumption on how provider utilities change under an additional user and applying approximate Weak Market Alignment, we have that for any provider $j \in T$:
    \begin{align*}
        & \mathbb{E}_{\mathcal{I}_{dev}}[(1-\beta_j)\cdot u^{P}_j(a_{1:n},y) + \beta_j \cdot f(a_{n+1}, y)] \le \mathbb{E}_{\mathcal{I}_{NE}}[(1-\beta_j)\cdot u^{P}_j(a_{1:n},y) + \beta_j \cdot f(a_{n+1}, y)] \\
        \iff& \mathbb{E}_{\mathcal{I}_{dev}}\left[(1-\beta_j)\cdot \left(\sum_{i\in N} \lambda_{j,i} F_{j,i}(a_i, y) + c_j - \eps_P \right) + \beta_j \cdot f(a_{n+1}, y) \right] \\ &\le \mathbb{E}_{\mathcal{I}_{NE}}\left[(1-\beta_j)\cdot \left(\sum_{i\in N} \lambda_{j,i} F_{j,i}(a_i, y) + c_j + \eps_P \right) + \beta_j \cdot f(a_{n+1}, y) \right] \\
        \iff & \mathbb{E}_{\mathcal{I}_{dev}}\left[(1-\beta_j)\cdot \left( \lambda_{j,i} F_{j,i}(a_i, y) + \sum_{i'\in N: i'\neq i} \lambda_{j, i'} F_{j, i'}(a_{i'}, y) + c_j - \eps_P \right) + \beta_j \cdot f(a_{n+1}, y) \right] \\ &\le \mathbb{E}_{\mathcal{I}_{NE}}\left[(1-\beta_j)\cdot \left( \lambda_{j,i} F_{j,i}(a_i, y) + \sum_{i'\in N: i'\neq i} \lambda_{j, i'} F_{j, i'}(a_{i'}, y) + c_j + \eps_P \right) + \beta_j \cdot f(a_{n+1}, y) \right] \\
        \iff & (1-\beta_j)\cdot \left( \mathbb{E}_{\mathcal{I}_{dev}}\left[\lambda_{j,i} F_{j,i}(a_i, y)\right] +  \mathbb{E}_{\mathcal{I}_{NE}}\left[\sum_{i'\in N: i'\neq i} \lambda_{j, i'} F_{j, i'}(a_{i'}, y)\right] +  c_j- \eps_P\right) + \beta_j \cdot \E_{\cI_{NE}}[f(a_{n+1}, y)] \\ &\le (1-\beta_j)\cdot \left( \mathbb{E}_{\mathcal{I}_{NE}}\left[\lambda_{j,i} F_{j,i}(a_i, y)\right] +  \mathbb{E}_{\mathcal{I}_{NE}}\left[\sum_{i'\in N: i'\neq i} \lambda_{j, i'} F_{j, i'}(a_{i'}, y)\right] +  c_j + \eps_P\right) + \beta_j \cdot \E_{\cI_{NE}}[f(a_{n+1}, y)] \\
        \iff & (1-\beta_j)\cdot\lambda_{j,i} \mathbb{E}_{\mathcal{I}_{dev}}\left[ F_{j,i}(a_i, y)\right] \leq (1-\beta_j)\cdot\lambda_{j,i} \mathbb{E}_{\mathcal{I}_{NE}}\left[ F_{j,i}(a_i, y)\right] + 2(1-\beta_j)\cdot \eps_P 
    \end{align*}
    where the third step follows from the fact that only the induced distribution over user $i$'s actions changes under the deviation. Since $\beta_j\in (0,1)$, for any $j$ with $\lambda_{j,i}>0$, we thus have:
    \[
    \mathbb{E}_{\mathcal{I}_{dev}}\left[ F_{j,i}(a_i, y)\right] \leq  \mathbb{E}_{\mathcal{I}_{NE}}\left[ F_{j,i}(a_i, y)\right]  + 2\frac{\eps_P}{\lambda_{j,i}}
    \]

    Taking a weighted sum over all providers using the non-negative weights $(w_{j,i})_{j\in T}$ from the Weak Market Alignment condition:
    $$ \sum_{j\in T} w_{j,i} \mathbb{E}_{\substack{\mathcal{I}_{dev}}}[F_{j,i}(a_i, y)] \le \sum_{j\in T} w_{j,i} \mathbb{E}_{\mathcal{I}_{NE}}[F_{j,i}(a_i, y)] + 2\eps_P\sum_{j\in T:w_{j,i}>0} \frac{w_{j,i}}{\lambda_{j,i}}. $$
    Note that by assumption, if $w_{j,i}>0$, then $\lambda_{j,i}>0$, and so we can apply the inequality here. 
    By linearity of expectation, this is equivalent to:
    $$ \mathbb{E}_{\substack{\mathcal{I}_{dev}}}\left[\sum_{j\in T} w_{j,i} F_{j,i}(a_i, y)\right] \le \mathbb{E}_{ \mathcal{I}_{NE}}\left[\sum_{j\in T} w_{j,i} F_{j,i}(a_i, y)\right] + 2\eps_P\sum_{j\in T:w_{j,i}>0} \frac{w_{j,i}}{\lambda_{j,i}}. $$
    Now we use the market alignment assumption, which states that $\sum_j w_{j,i} F_{j,i}(a_i, y)$ can be approximated by $u^{U}_i(a_i, y) + c_i$. We have that:
    $$ \mathbb{E}_{\substack{\mathcal{I}_{dev}}}\left[\sum_{j\in T} w_{j,i} F_{j,i}(a_i, y)\right] \ge \mathbb{E}_{\mathcal{I}_{dev}}[u^{U}_i(a_i,y) + c_i - \eps_U] = \mathbb{E}_{\mathcal{I}_{dev}}[u^{U}_i(a_i,y)] + c_i -\eps_U \ge u^{U}_i(C_{S,T}^*(i)) + c_i - \eps_U . $$
    Additionally:
    $$ \mathbb{E}_{\mathcal{I}_{NE}}\left[\sum_{j\in T} w_{j,i} F_{j,i}(a_i, y)\right] \le \mathbb{E}_{\mathcal{I}_{NE}}[u^{U}_i(a_i,y) + c_i +\eps_U]  = \mathbb{E}_{\mathcal{I}_{NE}}[u^{U}_i(a_i,y)] + c_i +\eps_U. $$
    Combining these inequalities, we get:
    $$ u^{U}_i(C_{S,T}^*(i)) + c_i -\eps_U \le \mathbb{E}_{\mathcal{I}_{NE}}[u^{U}_i(a_i,y)] + c_i +\eps_U + 2\eps_P\sum_{j\in T:w_{j,i}>0} \frac{w_{j,i}}{\lambda_{j,i}}. $$
    The constant offset $c_i$ cancels, and we are left with:
    $$ u^{U}_i(C_{S,T}^*(i)) -\eps_U \le \mathbb{E}_{\mathcal{I}_{NE}}[u^{U}_i(a_i,y)] +\eps_U + 2\eps_P\sum_{j\in T:w_{j,i}>0} \frac{w_{j,i}}{\lambda_{j,i}}. $$
    Thus:
    $$ \mathbb{E}_{\mathcal{I}_{NE}}[u^{U}_i(a_i,y)] \ge u^{U}_i(C_{S,T}^*(i)) - 2\eps_U - 2\eps_P\sum_{j\in T:w_{j,i}>0} \frac{w_{j,i}}{\lambda_{j,i}}. $$
which completes the proof.
\end{proof}

We now prove that in the anonymous game with Strong Market Alignment, adding a new user can break guarantees for preexisting users.

\begin{theorem}\label{thm:public-adding-users}
    Consider a game instance $G$ where the $(K, 0)$-approximate Strong Market Alignment condition is satisfied and full revelation is possible. Then, there exists $(f_{1:K}, \beta_{1:K}, u^{U}_{n+1}, P(x^U_{n+1}))$ such that, in the induced augmented anonymous game $G^+$, there is a Nash equilibrium 
    such that \emph{all} users $i\in N \cup \{n+1\}$ get utility $ < u^{U}_{i}(C^{*}_{S,K}(i))$.
\end{theorem}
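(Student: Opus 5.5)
The plan is to build an explicit augmented game $G^+$ around a ``no-disclosure'' profile, in the spirit of the construction for Theorem~\ref{thm:public-example}. We choose the new user and the perturbation $(f_{1:K},\beta_{1:K})$ so that every provider would strictly lose from informing user $n+1$. Because policies are anonymous, this forces every provider to withhold information from everyone, including the original users.

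\textbf{Construction.} Give user $n+1$ no private information ($x^U_{n+1}$ constant). Copy the information structure and action space of some original user she will imitate. The cleanest choice is to let her mimic every original user at once: take $\cA_{n+1}$ to be the product of the original action sets, or let her utility be an equal-weight combination of the original users' utilities. Her best response to any anonymous rule then uses information exactly when the original users would. Set
\[
f(a_{n+1},y)=\1[a_{n+1}=\bot],
\]
where $\bot$ is a safe action of user $n+1$ that is optimal under the prior. Choose $\beta_j$ close to $1$ for all $j$, so that each provider's utility is dominated by keeping user $n+1$ at $\bot$.

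\textbf{Equilibrium check.} First, take the profile in which every provider deploys an uninformative rule, e.g.\ always sending a fixed message. Every user then acts on her prior. User $n+1$ plays $\bot$, and each original user gets her no-information utility, which is strictly below $u^U_i(C^*_{S,K}(i))$ whenever information is valuable. We must assume, or build into the instance choice, that the benchmark strictly exceeds the no-information value for each $i$, including user $n+1$. For the new user we can ensure this directly by design: she gains from the revealed state.

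Second, consider a deviation by provider $j$. Anonymity means any rule he deploys is available to user $n+1$. We make user $n+1$'s utility such that \emph{any} information valuable to an original user also moves her off $\bot$ with probability comparable to the gain it gives the original users. A simple way is to give her $\epsilon$-style utility as in Theorem~\ref{thm:public-example}: she prefers $\bot$ with value $\eps$, but prefers the revealed state-contingent action with value $c>\eps$. The deviation's gain to $j$ is then at most $(1-\beta_j)\cdot(\text{bounded change in }u^P_j)\le 1-\beta_j$, while its loss is at least $\beta_j\cdot\Pr[a_{n+1}\neq\bot]$.

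\textbf{Main obstacle.} The hard step is bounding the original-user gain by a constant times $\Pr[a_{n+1}\neq\bot]$ uniformly over all deviations, since provider $j$ could reveal information useful to an original user but useless to user $n+1$. To handle this, define $u^U_{n+1}$ so that user $n+1$ leaves $\bot$ whenever any original user's posterior changes their optimal action. Concretely, let $\cA_{n+1}=\{\bot\}\cup\prod_i\cA_i$, with utility $\eps\1[\bot]+\frac{c}{n}\sum_i u^U_i(a^{(i)},y)$, where $\eps$ is set just at the prior value. Any informative deviation then strictly raises her non-$\bot$ value, at a linear rate. Original users' gains are likewise bounded by this same quantity, since she can replicate them. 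Choosing $\beta_j$ large enough relative to $n/c$ then makes every deviation unprofitable, which establishes the equilibrium.
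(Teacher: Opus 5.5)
There is a genuine gap, and it is at the step you call the main obstacle. You build the augmentation for an arbitrary $G$. Your claim that user $n+1$ ``can replicate'' every original user's gain fails in this model, because original users may hold private information while you make $x^U_{n+1}$ constant.

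Take one provider observing $x^P_j=y\oplus x^U_i$, where $y$ and $x^U_i$ are independent uniform bits and $u^U_i=\1[a_i=y]$. Fully revealing $x^P_j$ lets user $i$ decode $y$, so her utility rises from $\frac12$ to $1$ and your extra hypothesis holds. But the message is independent of $y$ from user $n+1$'s point of view, so her behavior does not change. Any $j$ with $\lambda_{j,i}>0$ then gains $(1-\beta_j)\lambda_{j,i}/2>0$ at zero cost, so the silent profile is not an equilibrium for any $\beta_j<1$. Giving user $n+1$ the original users' signals does not directly repair this, since her no-information value would then vary with her signal while $\eps$ is a constant.

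The threshold is a second problem. With $\eps$ exactly at the prior value, user $n+1$ is indifferent in the silent profile. You would then need the fixed common tie-breaking rule to select $\bot$; otherwise she is already off $\bot$ and informing her costs the provider nothing. With $\eps$ strictly above the prior value, say by $\eta$, take $n=1$, no private information, $u^U_1=\1[a_1=y]$ and a uniform binary state. A signal of accuracy $\frac12+\delta$ with $c\delta<\eta$ strictly helps user $1$, but user $n+1$'s posterior non-$\bot$ value $c(\frac12+\delta)$ never exceeds $\eps=\frac c2+\eta$. That is again a costless profitable deviation. Even when everything goes your way, $\beta_j$ must be calibrated against $n\sum_i\lambda_{j,i}$, not just $n/c$.

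The missing idea is to read the statement existentially and make the new user an exact \emph{clone} of an original one. The existential reading is forced: as you noticed, the claim is false for any $G$ in which information is worthless to some user.

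The paper's instance is as follows:
\begin{itemize}
\item a single provider who sees a uniform binary state;
\item a single uninformed user with $u^U_1=\1[a_1=y]+\frac23\1[a_1=\bot]$, and provider utility $u^P_1=\frac12u^U_1$;
\item a new user with the same information and utility, with $f=\1[a_2=\bot]$ and $\beta=\frac12$.
\end{itemize}
Identical users under a common tie-breaking rule respond identically to any anonymous rule, so both leave $\bot$ with the same probability $p$. The provider's payoff is then at most $\frac14p+\frac23(1-p)$, which is maximized by silence, and both users get $\frac23<1$.

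Cloning makes ``helping an original user'' and ``moving user $n+1$ off $\bot$'' the same event, which removes your replication problem. The strict prior preference for $\bot$ ($\frac23>\frac12$) removes any dependence on tie-breaking or on how weak signals are rewarded.
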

\begin{proof}
    Consider a game with $R=1$ rounds with one user and one provider. Note that as we have only a single provider, computing an equilibrium between providers reduces to computing a best action for that provider. Let $x^{U}_{1} = \emptyset$, and let $x^{P}_{1} = y$. Furthermore, assume that the message space is large enough to encode $y$. Thus, full disclosure is possible. Let there be two states $y_{1}$ and $y_{2}$ which each occur with probability $\frac{1}{2}$, and let the user have action set $\{y_{1},y_{2},\bot\}$. Her utility is
    \[u^{U}_{1} = \mathbb{I}[a_1 = y] + \frac{2}{3}\mathbb{I}[a_{1} = \bot]\]
    Meanwhile, the provider's utility is 
     \[u^{P}_{1} = \frac{1}{2}\mathbb{I}[a_1 = y] + \frac{1}{3}\mathbb{I}[a_{1} = \bot]\]
     Note that this instance satisfies exact Strong Market Alignment: the provider will disclose $y$ in his conversation rule, and the user will always match the state and get expected utility $u^{U}_{i}(C_{S,K}^{*}(i)) = 1$. Now, consider the instance $A(G,f_{1},\frac{1}{2}, \{1,2,\bot\},u^{U}_{2},\emptyset, P(\emptyset))$. Here $f_{1} = \mathbb{I}[a_{2} = \bot]$, and thus the provider's utility $\hat{u}_{1}^{P}$ in this augmented game is $$\hat{u}^{P}_{1} = \frac{1}{4}\mathbb{I}[a_1 = y] + \frac{1}{6}\mathbb{I}[a_{1} = \bot] +\frac{1}{2}\mathbb{I}[a_{2} = \bot]$$ 
    Additionally, $u^{U}_{2} = \mathbb{I}[a_{2}=y] + \frac{2}{3}\mathbb{I}[a_{2}=\bot]$.

    Now, we will show that under any best action from the provider, the user $1$ has expected utility $\frac{2}{3} < u^{U}_{i}(C^{*}_{S,K}(i))$. To see this, note that if the provider does not provide any information to the users, both users will select action $\bot$. In this case, the provider's utility is $\frac{1}{6} + \frac{1}{2} = \frac{2}{3}$. Furthermore, note that as both users have the same hidden information and utilities, they will have the same distribution over actions against any provider conversation rule. Now, consider the case where the provider commits to a conversation rule where, if the users best-respond, each user has probability $p > 0$ of playing something other than $\bot$. If both users plays something other than $\bot$, the provider's utility is at most $\frac{1}{4}$. So his expected utility is at most 
    $$\frac{1}{4} p + (1-p)\frac{2}{3} < \frac{2}{3}$$ This is strictly less than that of providing no information, so the best response to the provider's conversation rule for user $1$ will always be to play $\bot$. Thus, her expected utility is at most $\frac{2}{3}$, completing the proof. 
\end{proof}

%% file: files_arxiv/experiments.tex
\subsection{Experimental Setup}
\paragraph{Dataset.} Following \cite{collina2025emergentalignmentcompetition}, we empirically test our alignment conditions using the OpinionQA dataset \citep{pmlr-v202-santurkar23a}, which provides answer distributions for both demographic groups and LLMs on survey questions.
This setting maps to our framework: demographic groups play the role of users (with heterogeneous preferences), LLMs play the role of providers, and survey questions play the role of states. While the LLMs are not strategically competing over these users, this setup lets us test whether the structural alignment conditions hold in realistic preference data.
We test on demographic partitions (Political Ideology, Political Party, Education, Religion, Income, Age, Race, Census Region, Sex) across two survey waves from the Pew American Trends Panel: W29 (2017; gender and health topics) and W82 (2021; economy and democracy topics). We present results for Political Ideology here and refer the reader to the \Cref{app:experiments} for results on other partitions.

\paragraph{Utilities.} We use $p_i(\cdot \mid y)$ for the empirical answer distribution of demographic group $i$, and $q_j(\cdot \mid y)$ for the predicted answer distribution of model $j$.
For each demographic group $i$ and question $y$, we define user utility as $u_i^U(a,y) = p_i(a \mid y)$, the probability that group $i$ selects answer $a$. Intuitively, this says a user is better served by actions that her demographic group would more frequently endorse. Similarly, for each model $j$, we define the provider per-user component $v_j(a,y) = q_j(a \mid y)$, the model's predicted probability, and provider utility as $u_j^P(a_{1:n}, y) = \frac{1}{n}\sum_{i=1}^n v_j(a_i,y)$, which treats each demographic group equally. 
This formulation corresponds to a linear scoring rule, which we use as the default because it makes provider utility exactly additively separable ($\eps_P=0$), isolating the user alignment error $\epsilon_U$ as the sole source of approximation for \Cref{def:separability}. We show robustness to alternative scores (log, Brier) in \Cref{app:experiments}.

\begin{figure}
\centering
\includegraphics[width=0.85\textwidth]{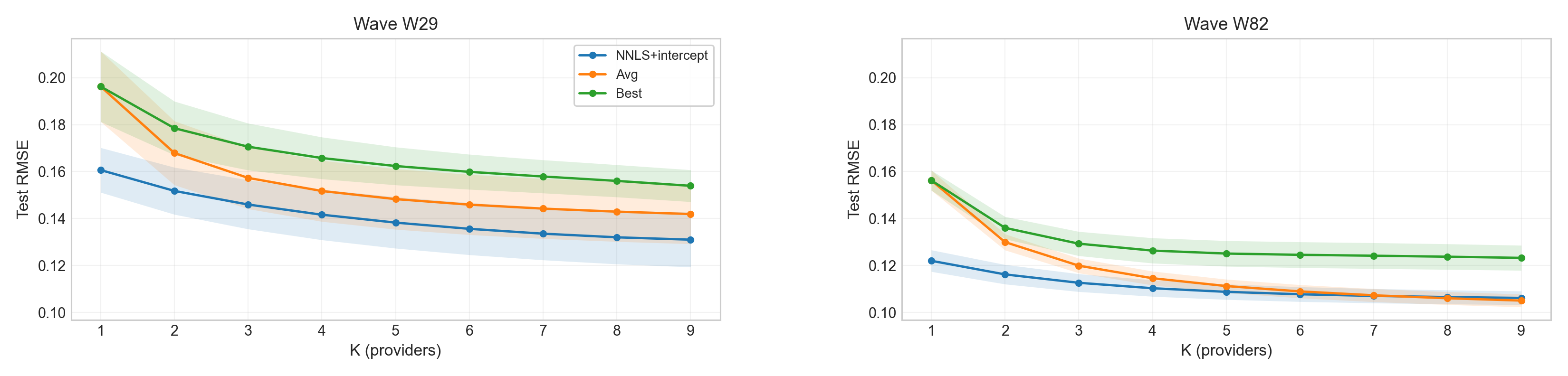}
\vspace{-0.5em}

\includegraphics[width=0.85\textwidth]{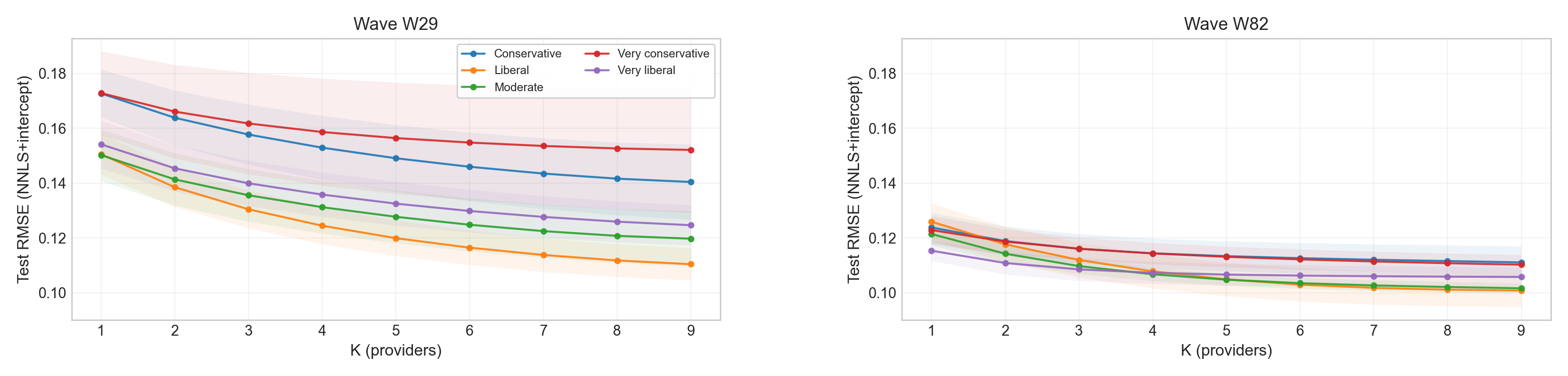}
\caption{(Top) \textbf{Weak Market Alignment improves with more providers.} Mean root mean squared error (RMSE, across groups) vs.\ number of available providers $K$. Nonnegative least squares (NNLS)+intercept outperforms baselines; error decreases with $K$. (Bottom) \textbf{Some groups are harder to align.} Per-group RMSE vs.\ $K$. Within each partition, alignment quality varies across demographic groups. Shaded bands: CV standard error.}
\label{fig:weak-market}
\end{figure}
\paragraph{Alignment Assumptions.}
\textit{Weak Market Alignment.} Because $\eps_P = 0$, we fit $u_i^U \approx \sum_{j \in K} w_{j,i} v_j + c_i$, with $w_{j,i} \ge 0$.
\textit{Strong Market Alignment.} We fit $u_j^P \approx \sum_i \lambda_{j,i} \, u_i^U + c_j$ with $\lambda_{j,i} \ge 0$ via NNLS on sampled action profiles.
For each question $y$, we generate profiles by sampling each user's action independently from their empirical marginal, $a_{1:n}^{(m)} \sim \prod_i p_i(\cdot \mid y)$, and evaluate on held-out questions.
Beyond approximation error $\eps$, the key quantity is the \emph{transfer factor} $1/\lambda_i^*$, where $\lambda_i^* = \max_j \lambda_{j,i}$.
This measures how well some provider covers user $i$: small means strong coverage (good) and large means weak coverage (bad).
It directly scales the anonymous-game slack term $2\eps/\lambda_i^*$ in Theorem~\ref{thm:public-natalie}, so large transfer factors make the bound worse.

\paragraph{Measurement.} Our theorems rely on worst-case error bounds over all action profiles and states. Since computing these exactly is prohibitive, we use average-case Root Mean Squared Error (RMSE) and transfer factors (via 5-fold cross-validation over questions) as tractable proxies. Average-case RMSE lower-bounds the worst-case error, so small RMSE is necessary but not sufficient for the theoretical conditions to hold at the level the RMSE suggests. Nevertheless, when the action and state spaces are modest in size as they are here, it can serve as a good proxy.

\subsection{Findings}
\begin{figure}
\centering
\includegraphics[width=0.85\textwidth]{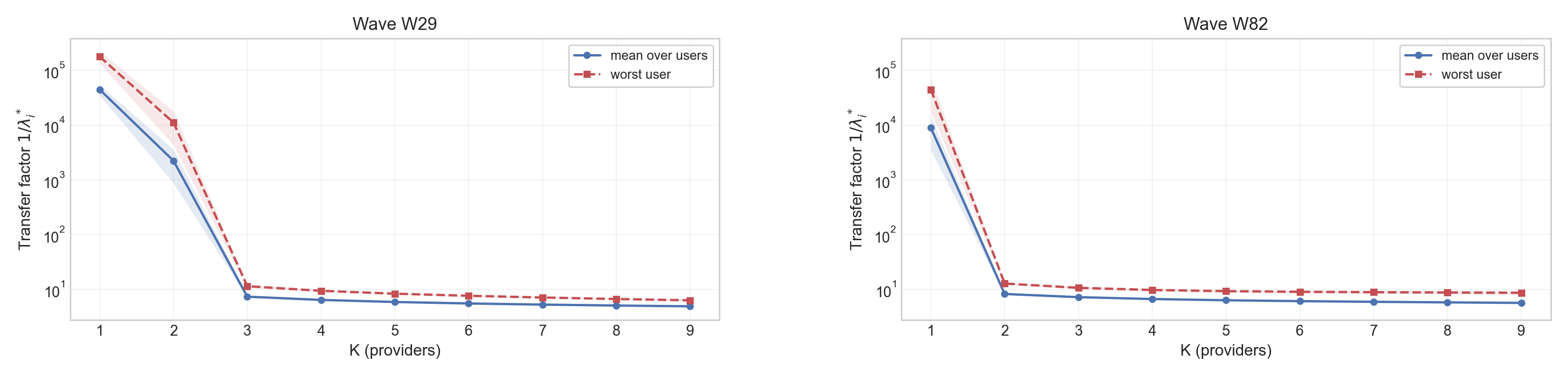}
\vspace{-0.5em}

\includegraphics[width=0.85\textwidth]{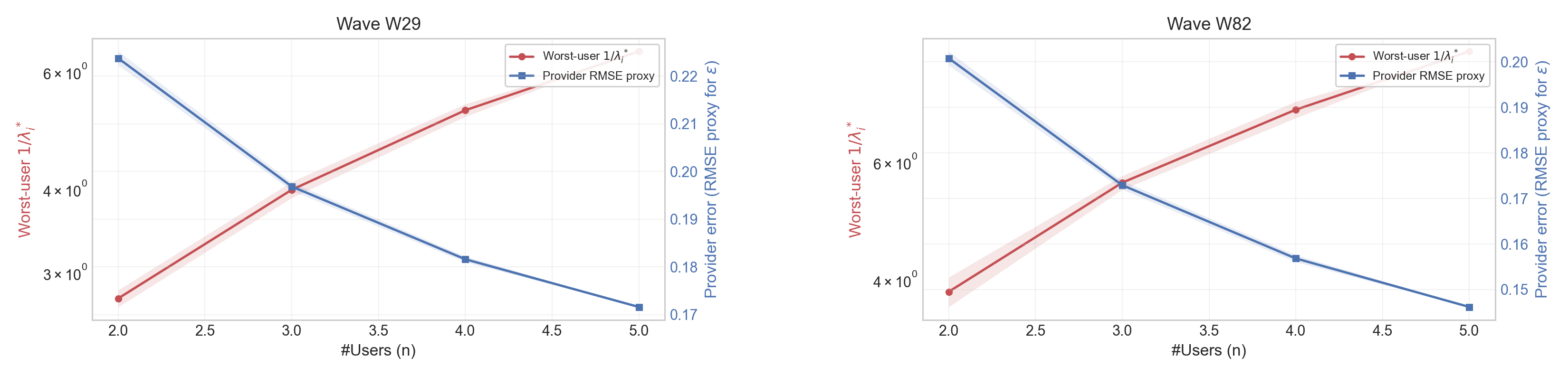}
\caption{(Top) \textbf{More providers reduce transfer factors.} Transfer factor $1/\lambda_i^*$ vs.\ number of providers $K$: mean over users (solid) and worst user (dashed). The dramatic drop shows how provider diversity improves coverage. (Bottom) \textbf{More users: lower error but harder coverage.} Worst-user transfer factor $1/\lambda_i^*$ (red, left axis) vs.\ provider fitting error $\eps$ (blue, right axis) as number of users increases.}
\label{fig:transfer-vs-k}
\end{figure}
\begin{figure}
\centering
\includegraphics[width=0.72\textwidth]{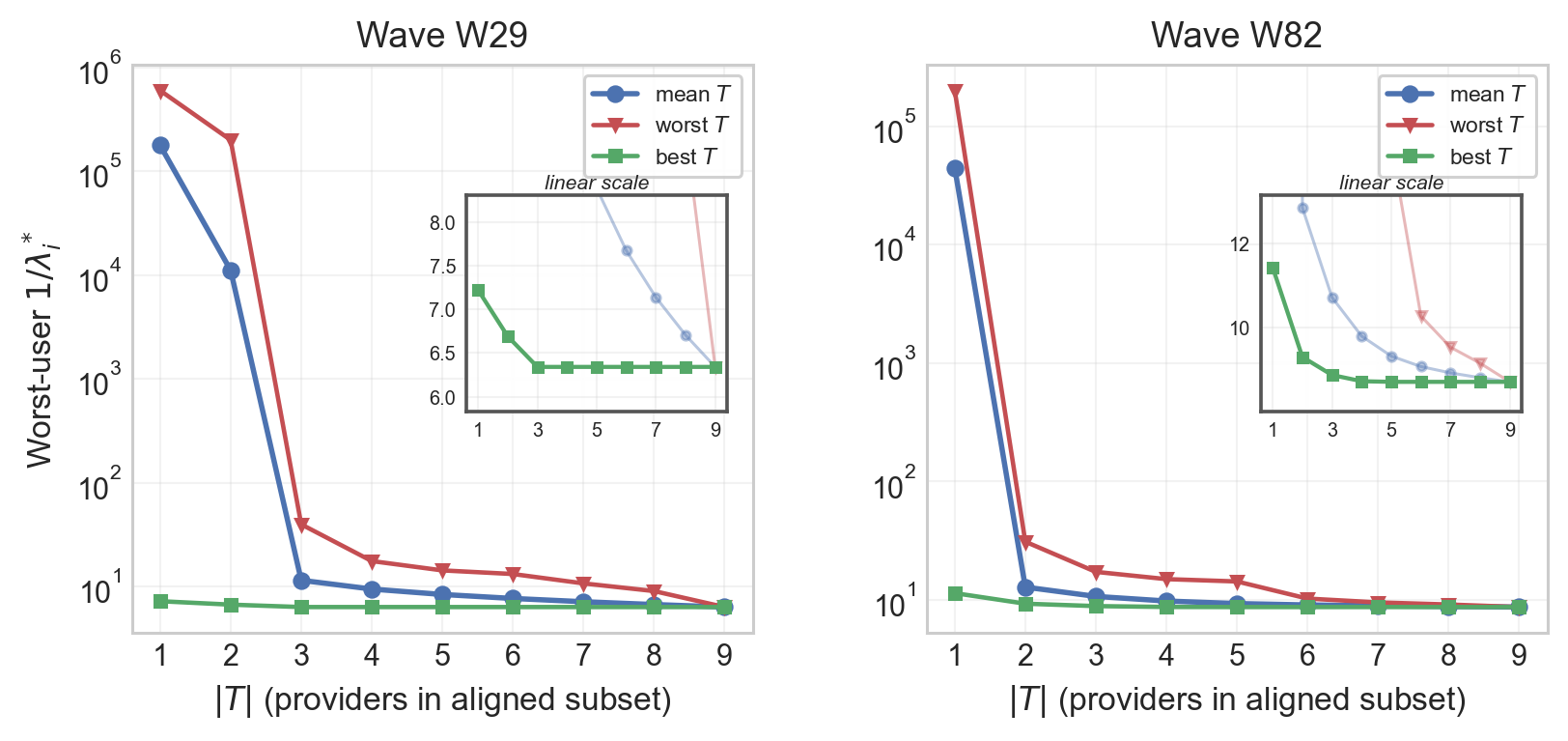}
\caption{\textbf{Not all providers need to be aligned.} Worst-user transfer factor $1/\lambda_i^*$ vs.\ size of aligned subset $|T|$ (log scale). Mean $T$ (blue) and worst $T$ (red) show that a random small subset often fails; the best subset (green) achieves low transfer even at $|T|=1$. Insets zoom to linear scale, revealing the gap between best and mean/worst at large $|T|$.}
\label{fig:subset}
\end{figure}
\paragraph{More providers improve Weak Alignment Fit.} Nonnegative combinations of model utilities approximate group preferences substantially better than either the best single model or a simple average, and fit improves as $K$ grows, though the gap decreases as well (Figure~\ref{fig:weak-market}, top).

\paragraph{Heterogeneous difficulty exists across groups.} Within a demographic partition, some groups are consistently harder to approximate than others (Figure~\ref{fig:weak-market}, bottom).

\paragraph{Strong alignment is harder than Weak.} The Strong alignment fitting error $\eps$ is approximately 30–40\% larger than the Weak alignment error $\eps_U$ at comparable $K$, though these measure different regression targets (cf.\ Figures~\ref{fig:weak-market} and \ref{fig:transfer-vs-k}). The corresponding theorem slacks are $\eps_U$ for the personalized game ($\eps_P = 0$) versus $2\eps/\lambda_i^*$. Combined with transfer amplification at small $K$, the effective anonymous-game bound can be orders of magnitude looser than the personalized bound.

\paragraph{More providers dramatically improve coverage for Strong Alignment.} At $K=1$, worst-user transfer factors range from $\sim7$ (the best individual model) to $>10^4$ (the worst).
Adding just one or two more providers ($K=2,3$) drops transfer factors by orders of magnitude (Figure~\ref{fig:transfer-vs-k}, top). The appendix visualizes the learned weights $\lambda_{j,i}$ and $w_{j,i}$ (Figure~\ref{fig:weights-appendix} in \Cref{app:experiments}), showing that different models specialize in different groups.

\paragraph{Using a subset of providers improves Strong Alignment.} Our theorems require only a \emph{subset} $T\subseteq K$ to satisfy alignment. This makes the Stronger Alignment condition easier to satisfy.
Figure~\ref{fig:subset} varies $|T|$, enumerating all subsets and tracking best, mean, and worst-case transfer factors.
A well-chosen single provider already achieves low transfer ($\sim7-10$), while the average single provider has transfer factors orders of magnitude larger.
The gap reflects specialization: many individual models place near-zero mass on at least one group, making $\lambda_i^*$ tiny for that group and inflating $1/\lambda_i^*$.
By $|T|=3$, even the mean subset reaches reasonable transfer.
This confirms that provider diversity, not universality, drives the guarantees.

\paragraph{More users create a coverage-error trade-off.} Increasing the number of demographic groups (users) within a partition has opposing effects (Figure~\ref{fig:transfer-vs-k}, bottom):
fitting error $\eps$ \emph{decreases} (more groups means more signal), but the worst-case transfer factor $1/\lambda_i^*$ from Theorem~\ref{thm:public-natalie} \emph{increases} (harder to cover everyone).
Over the range $n \in \{2,3,4,5\}$, transfer-factor growth outpaces the reduction in fitting error, suggesting that, at least at this scale, finer user segmentation does not improve the effective bound without additional provider diversity.

%% file: files_arxiv/appendix_proofs.tex

\section{Details about Alignment Conditions}
\begin{proposition}[Strong Market Alignment Implies Weak Market Alignment]\label{prop:implies}
    Fix a set $T\subset K$ of providers. Suppose the provider utilities satisfy the $(T,\eps)$-approximate Strong Market Alignment condition. Then, the provider utilities satisfy the $(T, \eps, 0)$-approximate Weak Market Alignment condition. When $\eps=0$, Strong Market Alignment implies Weak Market Alignment. 
\end{proposition}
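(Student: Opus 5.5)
The plan is to instantiate Definition~\ref{def:separability} directly from the Strong Market Alignment data, taking $F_{j,i}(a_i,y) := u^{U}_i(a_i,y)$ for every $j\in T$ and $i\in N$. This choice is admissible because $u^{U}_i$ takes values in $[0,1]$, so $F_{j,i}$ is nonnegative and depends only on user $i$'s action and the state. I will keep the provider weights $(\lambda_{j,i})_{i\in N}$ and constants $c_j$ unchanged from Definition~\ref{def:provider-alignment}.

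\emph{Provider separability.} With this choice, \eqref{eq:separableU} is literally the Strong Market Alignment inequality, so it holds with the same error $\eps$.

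\emph{User alignment.} For each user $i$, coverage supplies some $j^*\in T$ with $\lambda_{j^*,i}>0$. I will set
\[
w_{j^*,i} := \frac{1}{\lambda_{j^*,i}}, \qquad w_{j,i}:=0 \ \text{ for } j\neq j^*, \qquad c_i := 0 .
\]
Then $\sum_{j\in T} w_{j,i}F_{j,i}(a_i,y) + c_i = u^{U}_i(a_i,y)$ exactly. Any positive multiple $\alpha/\lambda_{j^*,i}$ would work equally well once the constant is adjusted, so this choice is not forced, but it is the simplest.

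\emph{Side conditions.} Every $w_{j,i}>0$ occurs only at $j^*$, where $\lambda_{j^*,i}>0$. Also $j^*$ itself has both $\lambda_{j^*,i}>0$ and $w_{j^*,i}>0$. So both extra assumptions in Definition~\ref{def:separability} hold. When $\eps=0$, both displayed conditions hold exactly, which gives the exact implication.

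The only real obstacle is bookkeeping of which error lands in which slot. Under this construction the approximation error sits on the provider-separability side, with size $\eps$, while the user-alignment side is exact.

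In the ordering $(T,\eps_U,\eps_P)$ of Definition~\ref{def:separability}, the natural conclusion is therefore $\eps_U=0$ and $\eps_P=\eps$. I do not see a way to obtain the literal $(T,\eps,0)$ from the construction. To do so, one would have to find exactly separable $F_{j,i}$ from a $u^{P}_j$ that is only $\eps$-close to separable, and pushing the error onto the user side does not achieve this.

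I would therefore prove the version with the error on the provider side, and state explicitly that this is the intended reading of the proposition's error parameters. This matters downstream: in Theorem~\ref{thm:private}, the resulting bound is $u^{U}_i(C^*_{S,T}(i)) - 2\eps\mu_i$ with $\mu_i = 1/\lambda_{j^*,i}^2$. Choosing $j^*$ to maximize $\lambda_{j,i}$ minimizes this penalty, so I would make that choice.
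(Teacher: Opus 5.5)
Your overall route is the paper's: set $F_{j,i}=u^{U}_i$, let provider separability inherit the Strong Market Alignment error $\eps$ unchanged, and make user alignment exact. Your reading of the error parameters is also correct. The paper's own proof concludes that user alignment holds exactly with $\eps_U=0$, while provider separability carries $\eps$. So what is actually established is $\eps_U=0$, $\eps_P=\eps$, i.e.\ $(T,0,\eps)$ in the ordering of Definition~\ref{def:separability}, and the ``$(T,\eps,0)$'' in the statement is a transposition. The literal version genuinely fails once $n\ge 2$. Take one state, binary actions and $u^{U}_i(a_i)=a_i/2$; then $u^{P}_j(a_1,a_2)=\tfrac14(a_1+a_2)+\eps a_1a_2$ is within $\eps$ of $\tfrac12(u^{U}_1+u^{U}_2)$. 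Its interaction $u^{P}_j(1,1)-u^{P}_j(1,0)-u^{P}_j(0,1)+u^{P}_j(0,0)=\eps\neq 0$, so no exactly separable representation exists.

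The step that fails is your choice of user weights. The $\lambda_{j,i}$ appear only in the provider condition. In the user condition, with $F_{j,i}=u^{U}_i$ for every $j$, you get $\sum_{j\in T}w_{j,i}F_{j,i}+c_i=\bigl(\sum_{j\in T}w_{j,i}\bigr)u^{U}_i+c_i$. So $w_{j^*,i}=1/\lambda_{j^*,i}$ gives $u^{U}_i/\lambda_{j^*,i}$, not $u^{U}_i$, and $\eps_U=0$ holds only if $\lambda_{j^*,i}=1$.

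Your fallback claim, that any positive multiple $\alpha/\lambda_{j^*,i}$ works once $c_i$ is adjusted, is also false. The residual $(1-\alpha/\lambda_{j^*,i})u^{U}_i-c_i$ is non-constant whenever $u^{U}_i$ is non-constant and $\alpha\neq\lambda_{j^*,i}$, and an additive constant cannot absorb a multiplicative discrepancy. What is needed is nonnegative weights supported on $\{j:\lambda_{j,i}>0\}$ that sum to $1$, with $c_i=0$. The paper takes $w_{j,i}=1/n^*$ on every such $j$; the fix for your version is simply $w_{j^*,i}=1$.

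This also corrects your downstream remark: $\mu_i=w_{j^*,i}/\lambda_{j^*,i}=1/\lambda_{j^*,i}$, not $1/\lambda_{j^*,i}^2$. With $j^*=\argmax_{j\in T}\lambda_{j,i}$, Theorem~\ref{thm:private} then gives slack $2\eps/\lambda_i^*$, the same as the anonymous-game bound of Theorem~\ref{thm:public-natalie}. This concentrated choice is weakly better than the paper's uniform one, whose $\mu_i$ is the average of $1/\lambda_{j,i}$ over the covering providers.
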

\begin{proof}
    Take $F_{j,i}(a_i, y) = u^{U}_i(a_i, y)$ for all $j\in T$ and $i\in N$. Then the Weak Market Alignment condition is equivalent to the Strong Market Alignment condition:
    \begin{equation*}\label{eq:separableU}
    \left| u^{P}_j\big(a_{1:n},y\big)\ - \left(\sum_{i\in N}\lambda_{j,i}\,F_{j,i}(a_i,y) + c_j\right)\right|  = \left| u^{P}_j\big(a_{1:n},y\big)\ - \left(\sum_{i\in N}\lambda_{j,i}\,u_{i}(a_i,y) + c_j\right)\right| \leq \eps
    \end{equation*}
    
    By assumption there exists at least one provider $j$ with $\lambda_{j,i}>0$. Let $n^*$ be $|\{j: \lambda_{j,i}>0\}|$. 
    Then, the Weak Market Alignment condition is exactly satisfied with $\eps_U=0$ by taking weights $w_{j,i} = 1/n^*$ for all $j$ with $\lambda_{j,i}>0$, $w_{j,i} = 0$ for all other $j$, and $c_i=0$.  
\end{proof}

\begin{proposition}[Weak Market Alignment is Strictly Weaker]\label{prop:strict}
    There exist provider and user utilities that exactly satisfy the Weak Market Alignment condition, but for every $\eps<1/2$, the $(T,\eps)$-Strong Market Alignment condition fails.
\end{proposition}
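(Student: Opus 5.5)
We construct a small explicit instance with two users and one provider, $T=K=\{1\}$, and use the interaction between the two users' actions to defeat Strong Market Alignment. The idea is that Weak Market Alignment only constrains the per-user components $F_{j,i}$. Each such component need only be a scaled copy of user $i$'s utility, while the user's utility need only be a nonnegative combination of these components. Strong Market Alignment instead requires $u^P_j$ itself to be close to $\sum_i \lambda_{j,i} u^U_i + c_j$. We therefore pick components $F_{1,i}$ that are \emph{not} positive multiples of $u^U_i$ in the direction the provider cares about.

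Concretely, take a trivial state space $|\mathcal{Y}|=1$ (so we may drop $y$) and binary actions $\mathcal{A}_1=\mathcal{A}_2=\{0,1\}$. Set the user utilities to $u^U_i(a_i)=\1[a_i=1]$. Set the provider utility to
\[
u^P_1(a_1,a_2)=\tfrac12\,\1[a_1=0]+\tfrac12\,\1[a_2=1].
\]
This is additively separable with $F_{1,1}(a_1)=\1[a_1=0]$, $F_{1,2}(a_2)=\1[a_2=1]$, $\lambda_{1,1}=\lambda_{1,2}=\tfrac12$, and $c_1=0$, so $\eps_P=0$.

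For user alignment, user $2$ is handled by $w_{1,2}=1$ and $c_2=0$. User $1$ is handled by $w_{1,1}=0$ and the constant $c_1=\ldots$, which does not work as written, because $u^U_1$ is nonconstant. The construction must also satisfy the requirement of some $j^*$ with $\lambda_{j^*,i}>0$ and $w_{j^*,i}>0$. To fix this, use $F_{1,1}(a_1)=1-\1[a_1=0]\cdot 0$, which is vacuous. A cleaner fix is to realize user $1$'s alignment through a negative constant: $u^U_1(a_1)=\1[a_1=1]=1-\1[a_1=0]$ cannot be written as $w F_{1,1}+c$ with $w\ge 0$. I would therefore switch to two providers, $T=K=\{1,2\}$, with
\[
u^P_1=\tfrac12\1[a_1=1]+\tfrac12\1[a_2=0],\qquad u^P_2=\tfrac12\1[a_1=0]+\tfrac12\1[a_2=1].
\]
Here $F_{1,1}=\1[a_1=1]$ and $F_{2,2}=\1[a_2=1]$, which give users $1$ and $2$ exact alignment with weights $w_{1,1}=w_{2,2}=1$; all other $w$'s are zero and all $\lambda$'s equal $\tfrac12$. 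This exactly satisfies Definition~\ref{def:separability}, including the side conditions.

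The main step is to show Strong Market Alignment fails for every $\eps<1/2$. Fix provider $1$ and suppose $|u^P_1-(\lambda_{1}\1[a_1=1]+\lambda_2\1[a_2=1]+c)|\le\eps$ on all four profiles, with $\lambda_1,\lambda_2\ge 0$. Write $g(a_1,a_2)$ for the linear approximant. Compare the profiles $(a_1,a_2)=(1,1)$ and $(1,0)$. Then $u^P_1$ drops by $\tfrac12$, going from $\tfrac12$ to $1$ and thus actually rising by $\tfrac12$ as $a_2$ goes from $1$ to $0$, while $g$ changes by $-\lambda_2\le 0$. The combined discrepancy is therefore $\tfrac12+\lambda_2\ge \tfrac12$. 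By the triangle inequality, $2\eps\ge\tfrac12$, so $\eps\ge\tfrac14$.

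This only rules out $\eps<1/4$. To reach the claimed threshold $1/2$, I would amplify the instance by setting $u^P_1=\1[a_2=0]$, with per-user weight coefficient $\lambda_{1,2}=1$ and $F_{1,2}=\1[a_2=0]$, and $u^P_1$ placing no weight on user $1$, with symmetric choices for provider $2$. User alignment is unchanged: user $i$ is aligned via provider $i$'s other component. To keep the required positive $\lambda$, use $u^P_1=\1[a_2=0]\cdot(1-\eta)+\eta\1[a_1=1]$ with $\eta\to 0$; if needed, take $\eta$ tiny and note the gap is $1-\eta$ minus slack. The same two-profile comparison then gives a discrepancy of $(1-\eta)+\lambda_2\ge 1-\eta$, hence $\eps\ge(1-\eta)/2$. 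Choosing $\eta$ small enough, or handling the boundary carefully, then yields failure for all $\eps<1/2$.

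The hardest part is the bookkeeping that reconciles the strictness at $1/2$ with the positivity side conditions, namely $\lambda_{j^*,i}>0$ and ($w_{j,i}>0\Rightarrow\lambda_{j,i}>0$). I expect to settle it by letting the offending weight be exactly zero on the conflicting component, relying only on cross-provider coverage.
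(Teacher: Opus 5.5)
Your construction with $\eta>0$ does not prove the proposition as quantified. For a fixed $\eta>0$, Strong Market Alignment actually \emph{holds} at $\eps=(1-\eta)/2<1/2$. For provider 1, take $\lambda_{1,1}=\eta$, $\lambda_{1,2}=0$, $c_1=(1-\eta)/2$; the error in approximating $u^P_1=(1-\eta)\1[a_2=0]+\eta\1[a_1=1]$ is then exactly $(1-\eta)/2$ at all four profiles. Provider 2 is handled symmetrically, and coverage holds. So each instance in your family rules out only $\eps<(1-\eta)/2$. Sending $\eta\to 0$ gives ``for every $\eps<1/2$ there is an instance,'' which swaps the quantifiers: the statement needs one instance that fails for all $\eps<1/2$.

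The limit $\eta=0$ is not admissible, and your proposed repair (zero weight on the conflicting component, coverage from the other provider) cannot work. If $u^P_1=\1[a_2=0]$ and $u^P_2=\1[a_1=0]$ exactly, then $\lambda_{1,1}F_{1,1}$ must be constant in $a_1$, and $F_{2,1}$ must be a positive multiple of $\1[a_1=0]$ plus a constant. So $\1[a_1=1]=1-\1[a_1=0]$ is not of the form $\sum_j w_{j,1}F_{j,1}+c_1$ with $w_{j,1}\ge 0$ and $w_{j,1}>0\Rightarrow\lambda_{j,1}>0$. In your instances, the side conditions force each user to be served by a provider component aligned with her. A nonnegative $\lambda$ absorbs exactly that aligned part, which is why the residual stays at $(1-\eta)/2$.

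The missing idea is a mechanism that forces the full gap without any sign conflict. You want two points at which every user's utility takes the same value but $u^P_j$ differs by $1$. Then every candidate $\sum_i\lambda_{j,i}u^U_i+c_j$ takes a single value $v$ at both points, whatever the $\lambda$'s, and $\max(|v|,|1-v|)\ge 1/2$.

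The paper does this with a nontrivial state. It lets $y$ range over the simplex in $\R^2$ and sets $F_{1,i}=a_iy_1/2$, $F_{2,i}=a_iy_2/2$, and $u^P_j=F_{j,1}+F_{j,2}$. Then $u^U_i=\frac12F_{1,i}+\frac12F_{2,i}=a_i/4$ does not depend on $y$. At $a_1=a_2=1$, provider 1's utility is $0$ at $y=(0,1)$ and $1$ at $y=(1,0)$, and symmetrically for provider 2. All $\lambda_{j,i}=1$ and $w_{j,i}=\frac12$, so the side conditions you struggled with hold trivially. Since each provider individually fails, no subset $T$ satisfies Strong Market Alignment for $\eps<1/2$.

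The same indifference idea also rescues your trivial-state binary setting. Take $F_{1,i}=\1[a_i=1]$, $F_{2,i}=\1[a_i=0]$, $u^P_1=\frac12(\1[a_1=1]+\1[a_2=1])$, $u^P_2=\frac12(\1[a_1=0]+\1[a_2=0])$, and $u^U_i=\frac12F_{1,i}+\frac12F_{2,i}\equiv\frac12$. Comparing the profiles $(0,0)$ and $(1,1)$ then gives the bound $1/2$.
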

    
\begin{proof}
Consider an instance with $n=k=2$ (i.e. 2 providers and 2 users). Let $\cY = \{(y_1,y_2)\in \mathbb{R}^2 : y_1,y_2\geq 0,\ y_1+y_2=1\}$ and $\cA_1=\cA_2=[0,1]$. Define:
    \begin{align*}
        &F_{1,1}(a_1,y) = \left \langle \begin{bmatrix}
           a_1/2 \\ 0 \end{bmatrix}, y \right \rangle , \ \ 
           F_{1,2}(a_2,y) = \left \langle \begin{bmatrix}
           a_2/2 \\ 0 \end{bmatrix}, y \right \rangle \\
        &F_{2,1}(a_1,y) = \left \langle \begin{bmatrix}
           0 \\ a_1/2 \end{bmatrix}, y \right \rangle , \ \ 
           F_{2,2}(a_2,y) = \left \langle \begin{bmatrix}
           0 \\ a_2/2 \end{bmatrix}, y \right \rangle 
    \end{align*}
Suppose $u^{P}_j(a_1,a_2,y)=F_{j,1}(a_1,y) + F_{j,2}(a_2,y)$ for each provider $j$ and 
    \[
        u^{U}_i(a_i,y) = \frac{1}{2}F_{1,i}(a_i,y)+\frac{1}{2}F_{2,i}(a_i,y) 
        = \left \langle \begin{bmatrix} \frac{a_i}{4} \\ \frac{a_i}{4} \end{bmatrix}, y \right \rangle
    \]
for each user $i$. Since $a_i\in[0,1]$ and $y_1,y_2\in[0,1]$ with $y_1+y_2=1$, we have $F_{j,i}(a_i,y)\in[0,1/2]$, hence $u^{P}_j(a_1,a_2,y)\in[0,1]$ and $u^{U}_i(a_i,y)\in[0,1]$. Thus Weak Market Alignment (Definition \ref{def:separability}) is satisfied.

Now, for provider $1$, we have that 
    \[
        u^{P}_1(a_1,a_2,y)=\left \langle \begin{bmatrix} \frac{a_1+a_2}{2} \\ 0 \end{bmatrix}, y \right \rangle.
    \]
We show that $(T,\eps)$-Strong Market Alignment fails for every $\eps<\frac{1}{2}$. Fix $\eps<\frac{1}{2}$ and suppose for contradiction that there exist $\lambda_1,\lambda_2\ge 0$ and $c\in\mathbb{R}$ such that for all $a_1,a_2\in[0,1]$ and $y\in\cY$,
    \[
        \left|u^{P}_1(a_1,a_2,y) - \left(\lambda_1 u^{U}_1(a_1,y) + \lambda_2 u^{U}_2(a_2,y) + c\right)\right|\le \eps.
    \]
Take $a_1=a_2=1$, and let $y^{(0)}=(0,1)$ and $y^{(1)}=(1,0)$. Then $u^{U}_1(1,y^{(0)})=u^{U}_1(1,y^{(1)})=\frac{1}{4}$ and $u^{U}_2(1,y^{(0)})=u^{U}_2(1,y^{(1)})=\frac{1}{4}$, while $u^{P}_1(1,1,y^{(0)})=0$ and $u^{P}_1(1,1,y^{(1)})=1$. Writing $v=\frac{\lambda_1}{4}+\frac{\lambda_2}{4}+c$, the inequality implies both $|v|\le \eps$ and $|1-v|\le \eps$, hence $\eps\ge \frac{1}{2}$, a contradiction.

The same argument applies for $u^{P}_2$ (swap the roles of the two coordinates of $y$), so no provider satisfies $(T,\eps)$-Strong Market Alignment for any $\eps<\frac{1}{2}$. Therefore Strong Market Alignment fails.
\end{proof}

%% file: files_arxiv/appendix_experiments.tex
\paragraph{Scoring rules.}
The main text defines both user and provider utilities using the \emph{linear} score, i.e., the probability assigned to the realized action.
Keeping the user utility as $u_i^U(a,y) = p_i(a \mid y)$, we change the provider component $v_j(a,y)$:
\begin{itemize}
  \item \textbf{Linear:} $v_j(a,y) = q_j(a \mid y)$. Additive across users, giving $\eps_P = 0$.
  \item \textbf{Log:} $v_j(a,y) \propto \log q_j(a \mid y)$ (normalized to $[0,1]$). Rewards assigning high probability mass to the realized option.
  \item \textbf{Brier:} $v_j(a,y) = \frac{1}{2} + q_j(a \mid y) - \frac{1}{2}\|q_j(\cdot \mid y)\|^2$ the expected Brier score contribution of action $a$, affine rescaled to $[0,1]$). Penalizes miscalibration and overconfidence.
\end{itemize}
Log and Brier scores are still separable since $u_j^P = \frac{1}{n}\sum_i v_j(a_i,y)$. However, we need to fit the user utility with these new basis functions $v_j$ across different scores. Nevertheless, across all three scores (Figure~\ref{fig:scoring-rules-appendix}): increasing $K$ improves Weak Alignment root mean squared error (RMSE) and rapidly reduces transfer factors for Strong Alignment, though the log score consistently yields higher converged transfer factors than linear or Brier.

\begin{figure}[t]
\centering
\begin{subfigure}[b]{0.48\textwidth}
  \centering
  \includegraphics[width=\textwidth]{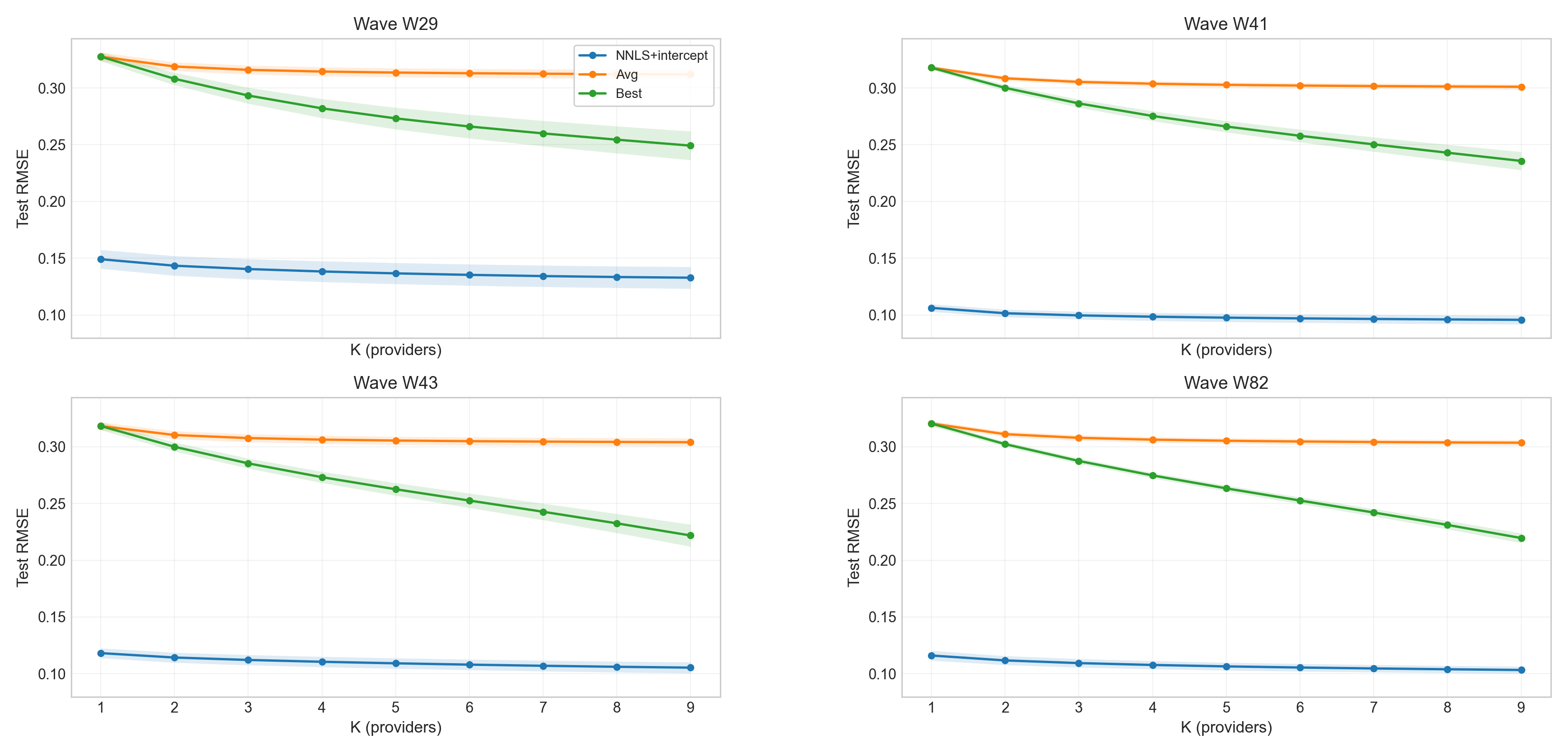}
  \caption{Weak alignment (log score)}
\end{subfigure}
\hfill
\begin{subfigure}[b]{0.48\textwidth}
  \centering
  \includegraphics[width=\textwidth]{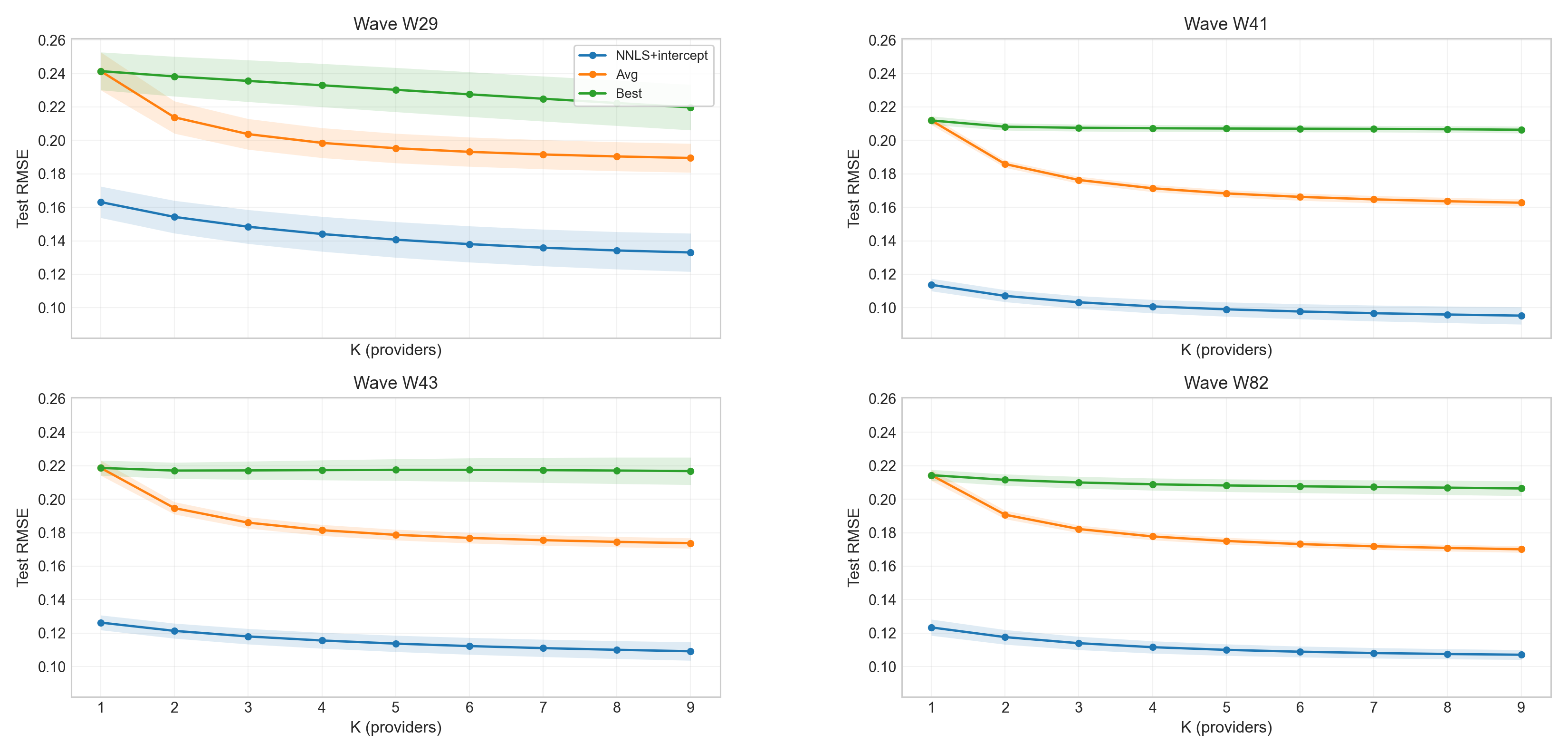}
  \caption{Weak alignment (Brier score)}
\end{subfigure}

\vspace{0.4em}
\begin{subfigure}[b]{0.48\textwidth}
  \centering
  \includegraphics[width=\textwidth]{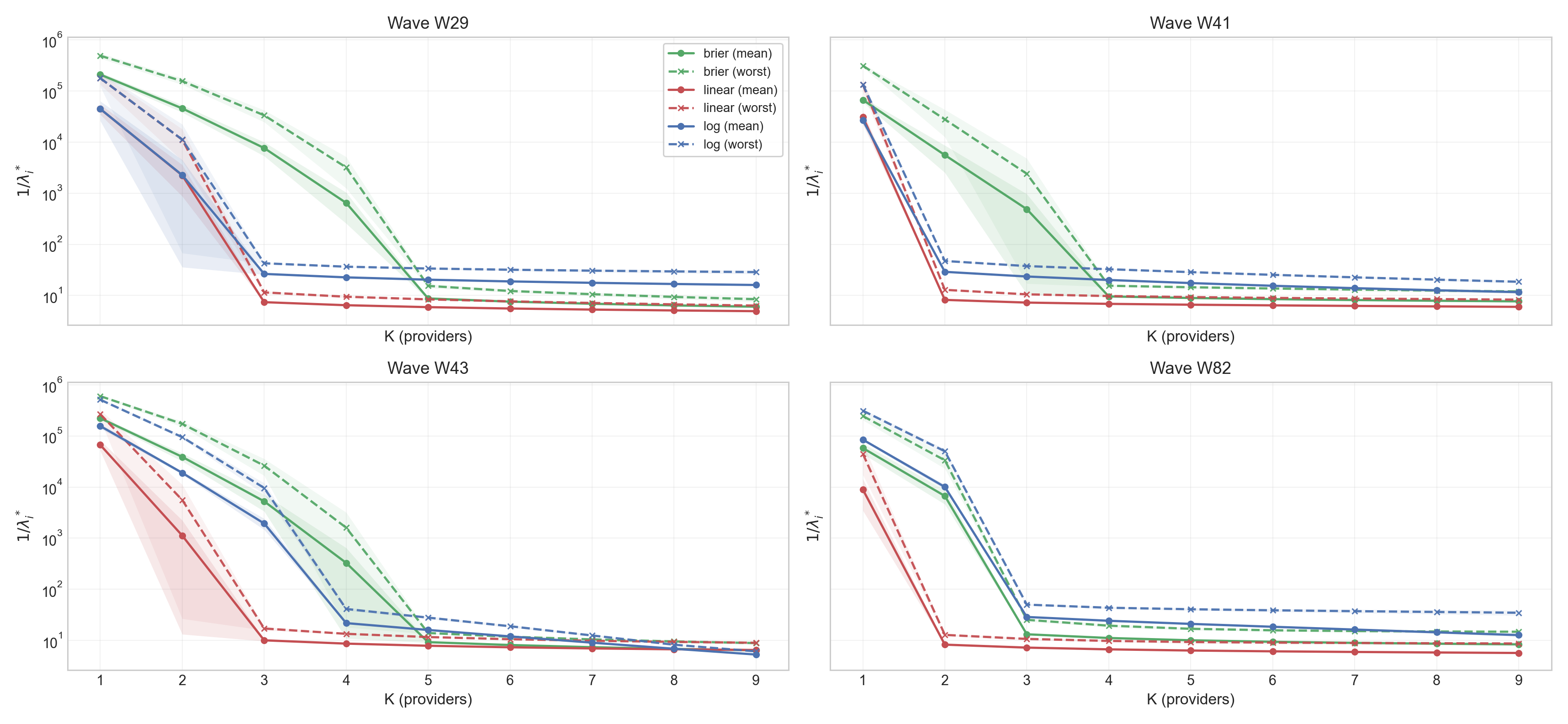}
  \caption{Transfer factor (all scores, Pol.\ Ideology)}
\end{subfigure}
\hfill
\begin{subfigure}[b]{0.48\textwidth}
  \centering
  \includegraphics[width=\textwidth]{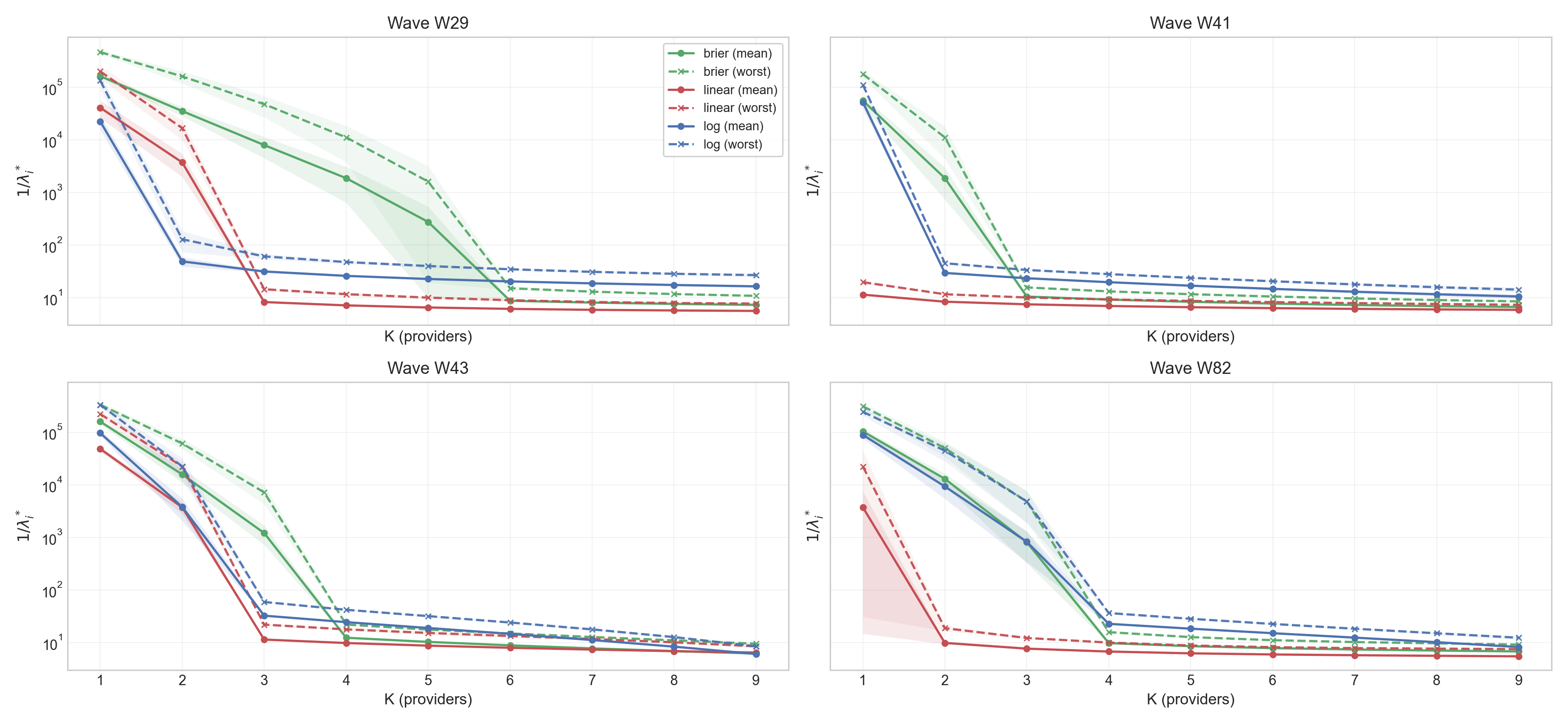}
  \caption{Transfer factor (all scores, Education)}
\end{subfigure}
\caption{\textbf{Scoring rule robustness.} (a,b) Weak alignment RMSE under log and Brier scores (Political Ideology). (c,d) Transfer factors across all three scores. Patterns are stable.}
\label{fig:scoring-rules-appendix}
\end{figure}

\paragraph{Weak alignment across demographics.}
Figure~\ref{fig:weak-all-demos} shows weak alignment RMSE vs.\ $K$ for eight additional demographic partitions.
Nonnegative least squares (NNLS) consistently outperforms baselines.
For Race and Sex in W82, equal-weight averaging approaches the NNLS solution at large $K$, suggesting that models' predicted distributions are more uniformly useful across these groups; in W29 a modest gap persists.
For more polarized partitions (Religion, Political Party), optimized weighting provides larger improvements, consistent with greater heterogeneity between different groups' preferences.

\begin{figure}[t]
\centering
\begin{subfigure}[b]{0.48\textwidth}
  \centering
  \includegraphics[width=\textwidth]{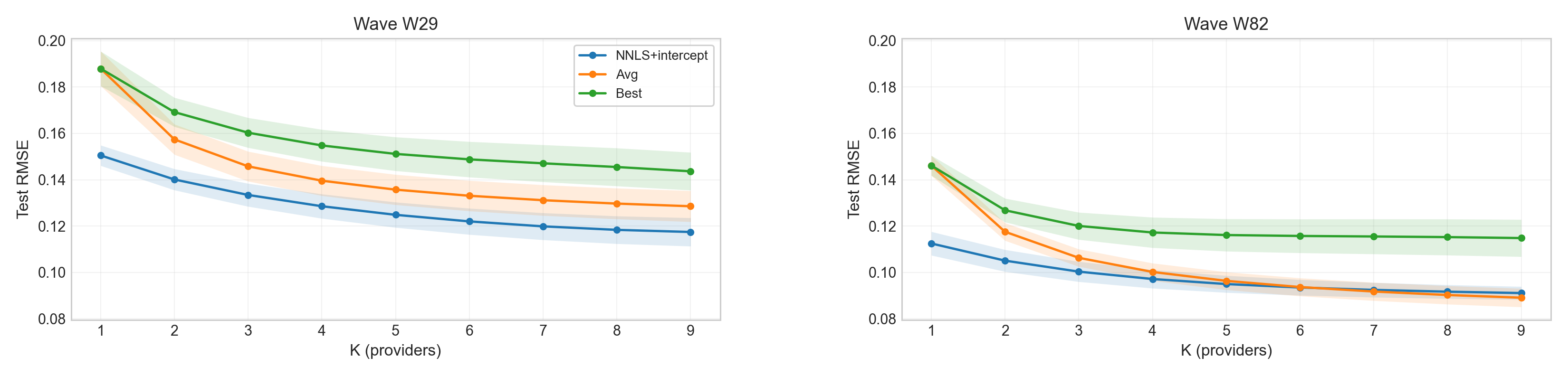}
  \caption{Education (6 groups)}
\end{subfigure}
\hfill
\begin{subfigure}[b]{0.48\textwidth}
  \centering
  \includegraphics[width=\textwidth]{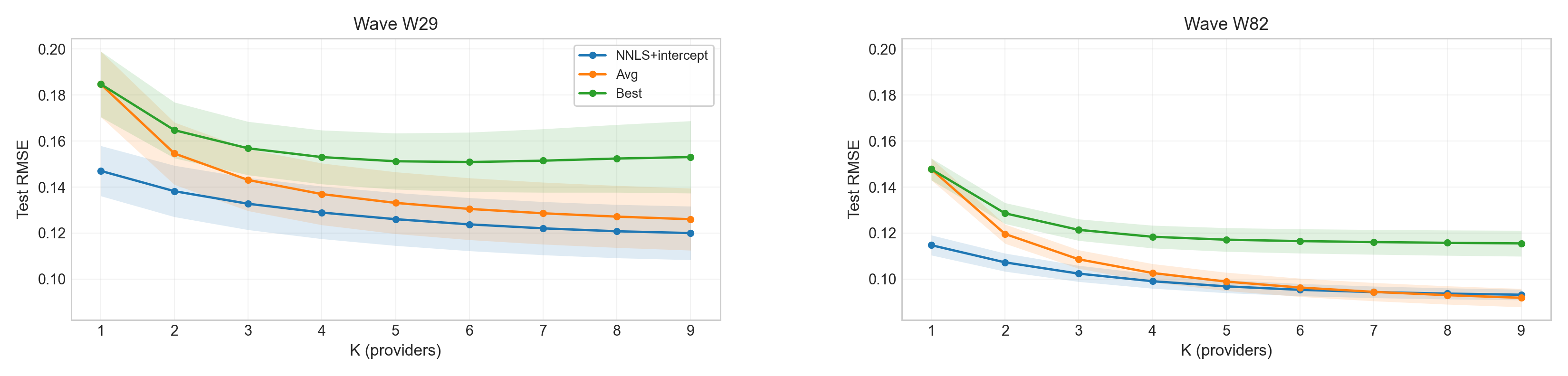}
  \caption{Race (5 groups)}
\end{subfigure}

\vspace{0.4em}
\begin{subfigure}[b]{0.48\textwidth}
  \centering
  \includegraphics[width=\textwidth]{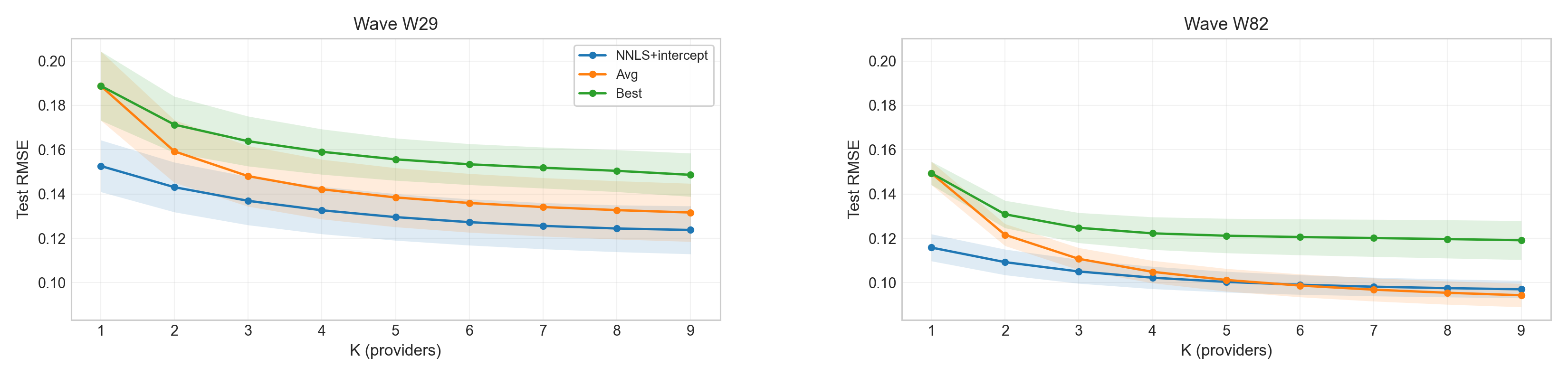}
  \caption{Religion (7--8 groups)}
\end{subfigure}
\hfill
\begin{subfigure}[b]{0.48\textwidth}
  \centering
  \includegraphics[width=\textwidth]{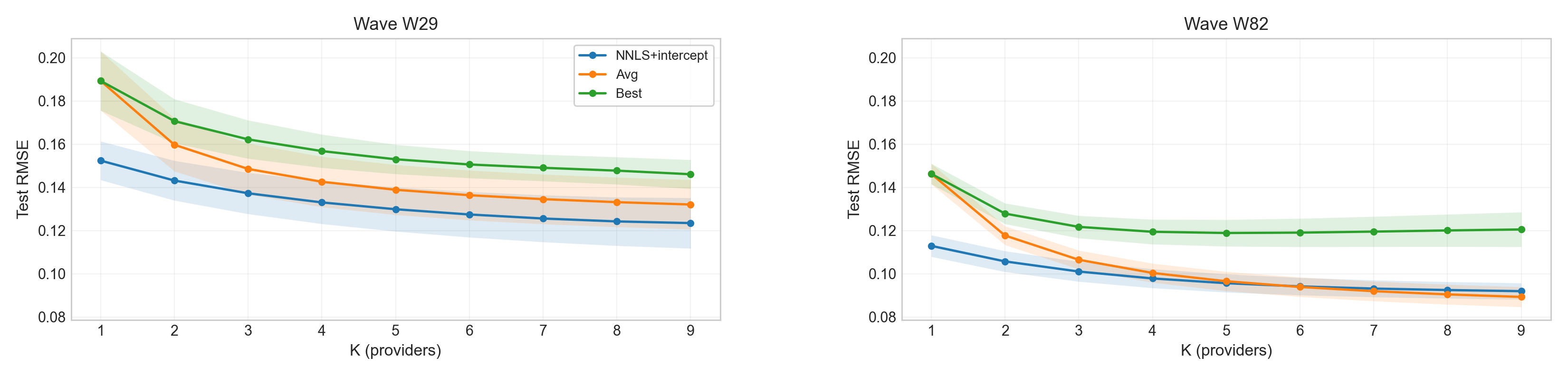}
  \caption{Income (5 groups)}
\end{subfigure}

\vspace{0.4em}
\begin{subfigure}[b]{0.48\textwidth}
  \centering
  \includegraphics[width=\textwidth]{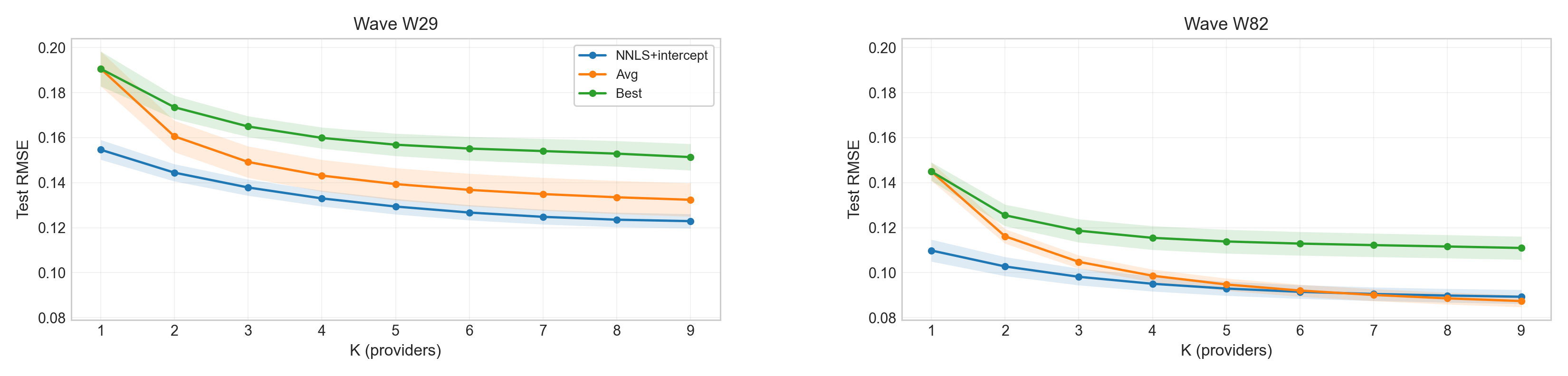}
  \caption{Age (4 groups)}
\end{subfigure}
\hfill
\begin{subfigure}[b]{0.48\textwidth}
  \centering
  \includegraphics[width=\textwidth]{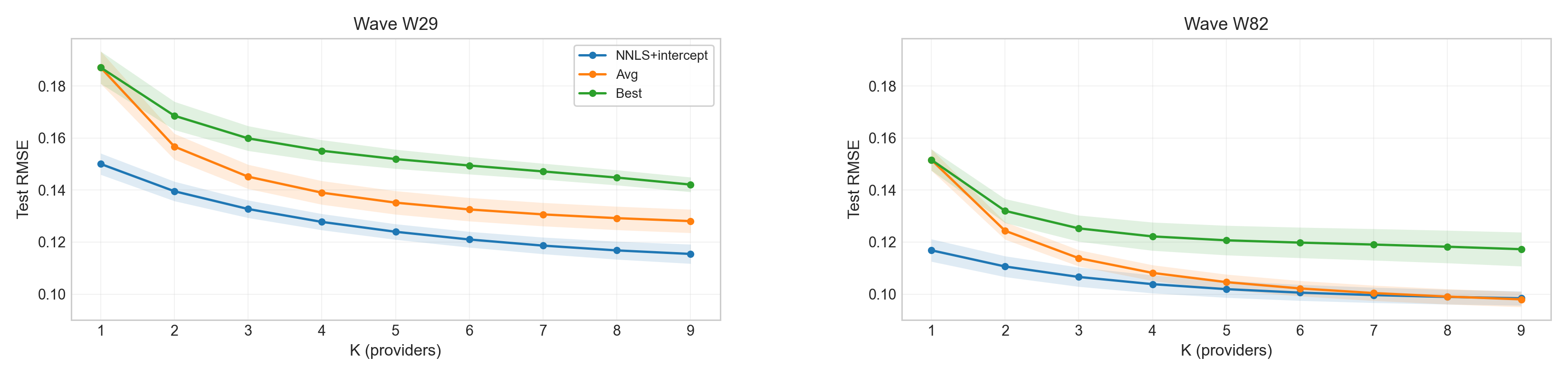}
  \caption{Pol.\ Party (4 groups)}
\end{subfigure}

\vspace{0.4em}
\begin{subfigure}[b]{0.48\textwidth}
  \centering
  \includegraphics[width=\textwidth]{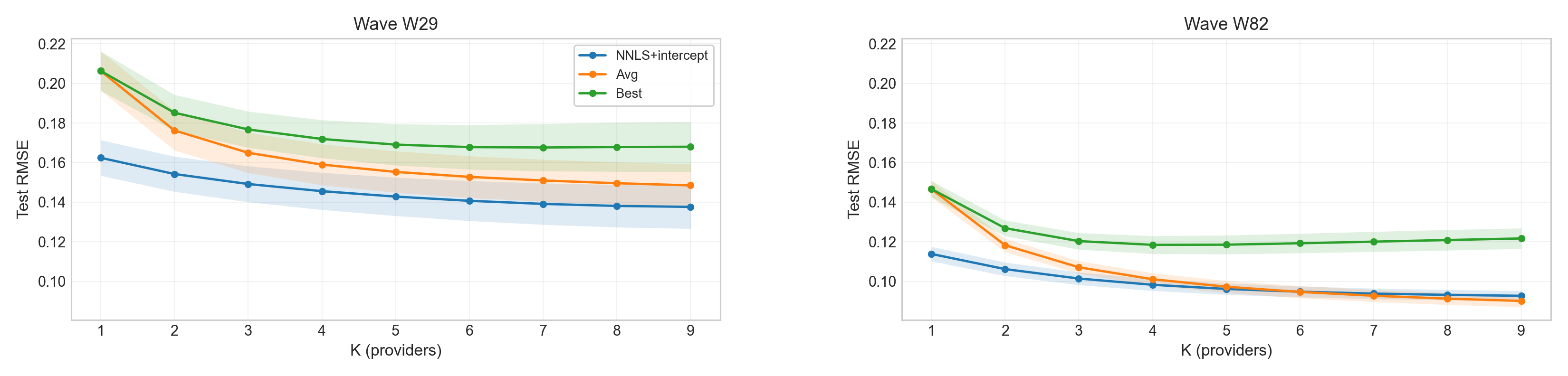}
  \caption{Sex (2 groups)}
\end{subfigure}
\hfill
\begin{subfigure}[b]{0.48\textwidth}
  \centering
  \includegraphics[width=\textwidth]{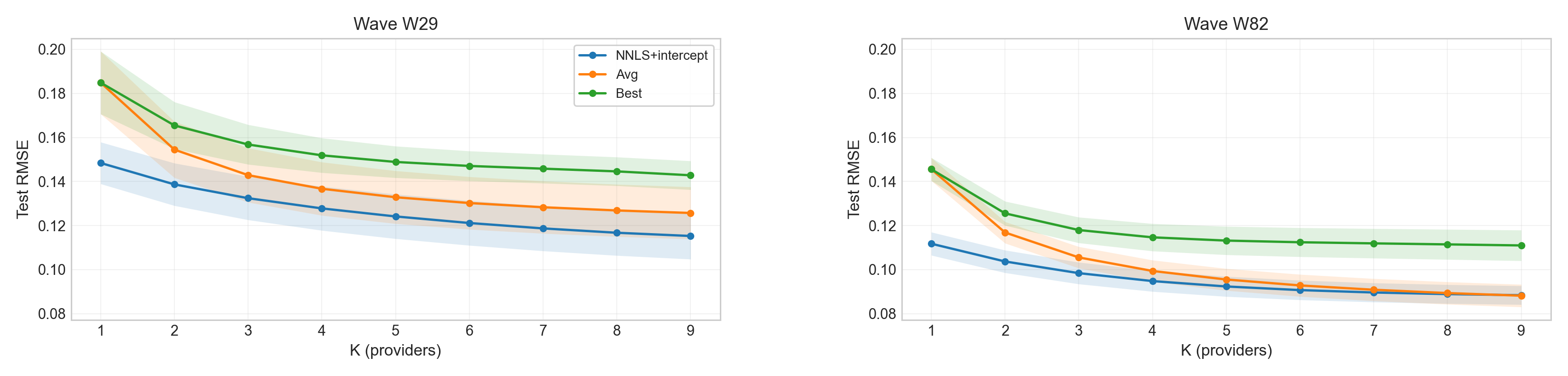}
  \caption{Census Region (4 groups)}
\end{subfigure}
\caption{\textbf{Weak alignment across demographics.} Mean RMSE vs.\ $K$ for eight additional partitions. NNLS outperforms baselines in all cases.}
\label{fig:weak-all-demos}
\end{figure}

\paragraph{Per-group RMSE.}
Figures~\ref{fig:pergroup-all} and \ref{fig:pergroup-all-2} show per-group weak alignment RMSE for all partitions (excluding Sex).
Within each partition, alignment quality varies across groups: 
Very Conservative is consistently hardest in Political Ideology; in Religion, the hardest group varies by wave (Protestant in W29, Agnostic in W82); in Income, the highest bracket (\$100,000+) is hardest in W82.
The hardest groups have RMSE 5–20\% above the partition mean, which would dominate the worst-case bound in our theorems.

\begin{figure}[t]
\centering
\begin{subfigure}[b]{0.48\textwidth}
  \centering
  \includegraphics[width=\textwidth]{figs/opinionqa/by_demo/POLIDEOLOGY/alice_alignment_mse_vs_k_by_group_agg_linear_demo_POLIDEOLOGY_panels_W29_W82.png}
  \caption{Political Ideology (5 groups)}
\end{subfigure}
\hfill
\begin{subfigure}[b]{0.48\textwidth}
  \centering
  \includegraphics[width=\textwidth]{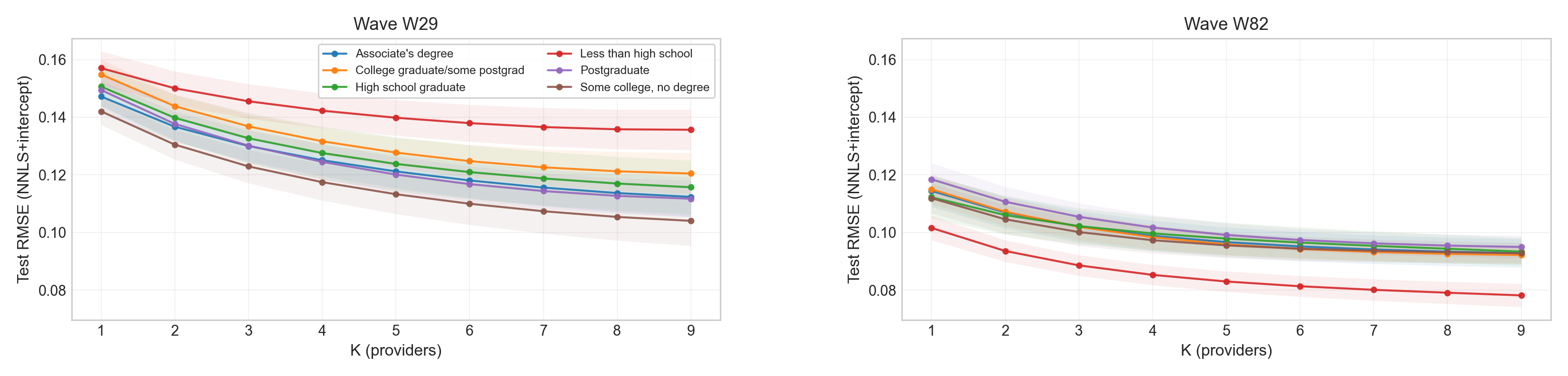}
  \caption{Education (6 groups)}
\end{subfigure}

\vspace{0.4em}
\begin{subfigure}[b]{0.48\textwidth}
  \centering
  \includegraphics[width=\textwidth]{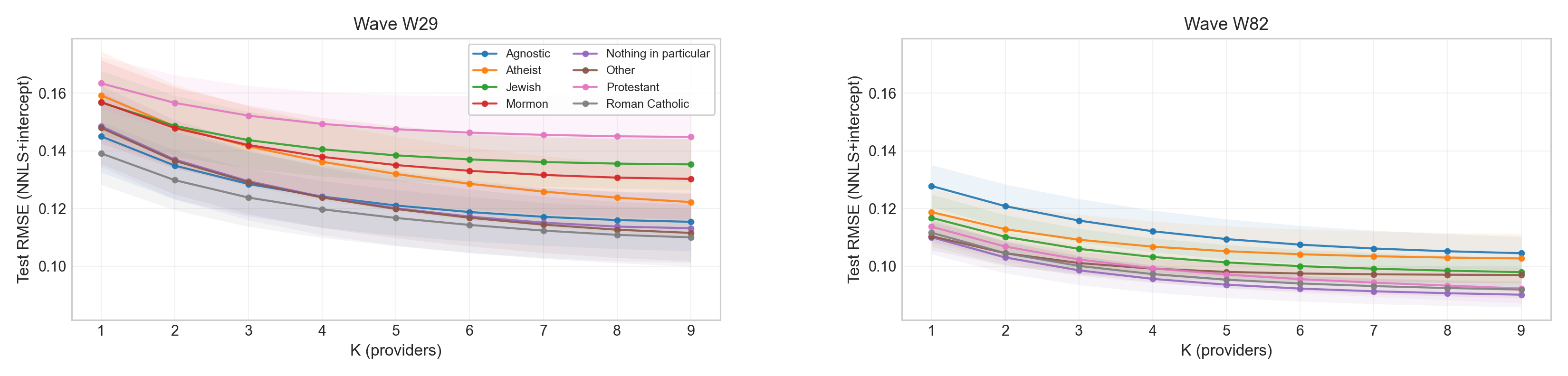}
  \caption{Religion (7--8 groups)}
\end{subfigure}
\hfill
\begin{subfigure}[b]{0.48\textwidth}
  \centering
  \includegraphics[width=\textwidth]{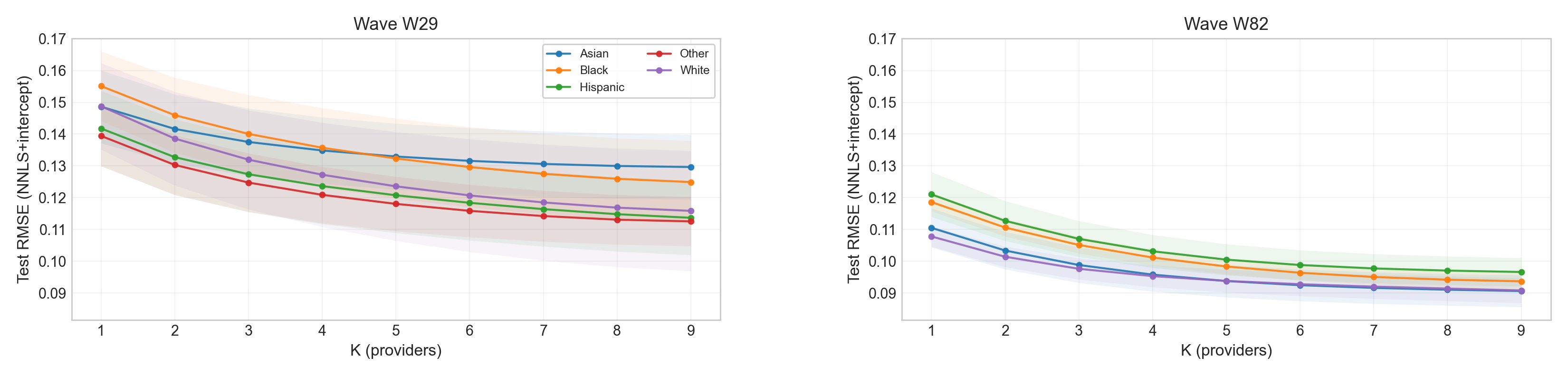}
  \caption{Race (5 groups)}
\end{subfigure}
\caption{\textbf{Per-group RMSE (part 1).} Some groups are consistently harder to approximate. Left panels: W29; right panels: W82.}
\label{fig:pergroup-all}
\end{figure}

\begin{figure}[t]
\centering
\begin{subfigure}[b]{0.48\textwidth}
  \centering
  \includegraphics[width=\textwidth]{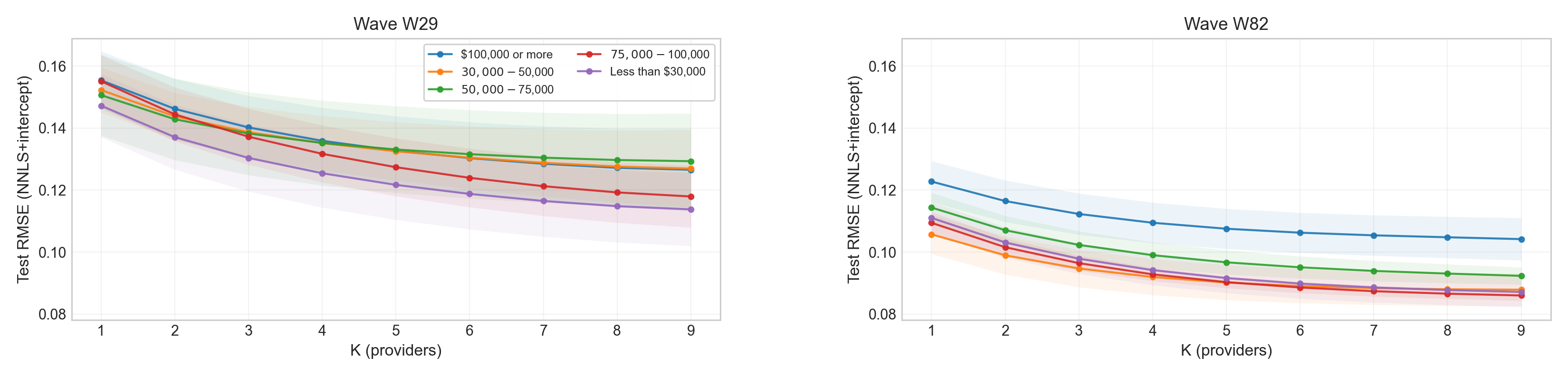}
  \caption{Income (5 groups)}
\end{subfigure}
\hfill
\begin{subfigure}[b]{0.48\textwidth}
  \centering
  \includegraphics[width=\textwidth]{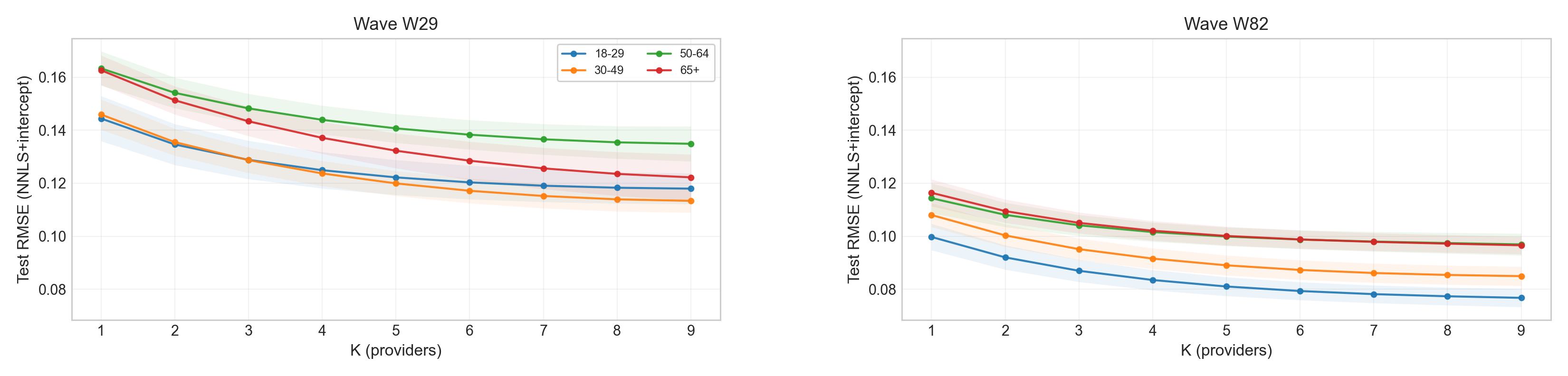}
  \caption{Age (4 groups)}
\end{subfigure}

\vspace{0.4em}
\begin{subfigure}[b]{0.48\textwidth}
  \centering
  \includegraphics[width=\textwidth]{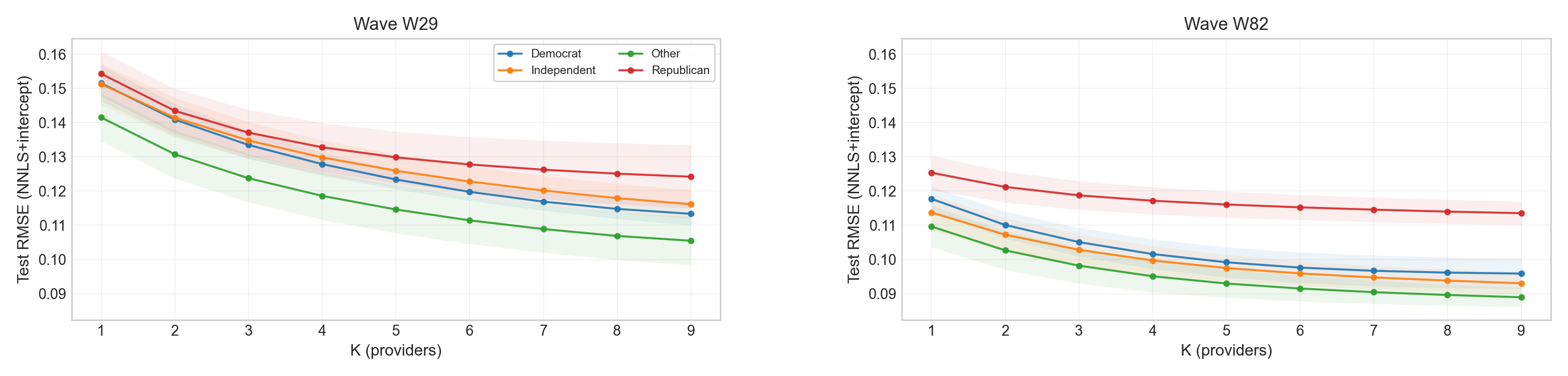}
  \caption{Pol.\ Party (4 groups)}
\end{subfigure}
\hfill
\begin{subfigure}[b]{0.48\textwidth}
  \centering
  \includegraphics[width=\textwidth]{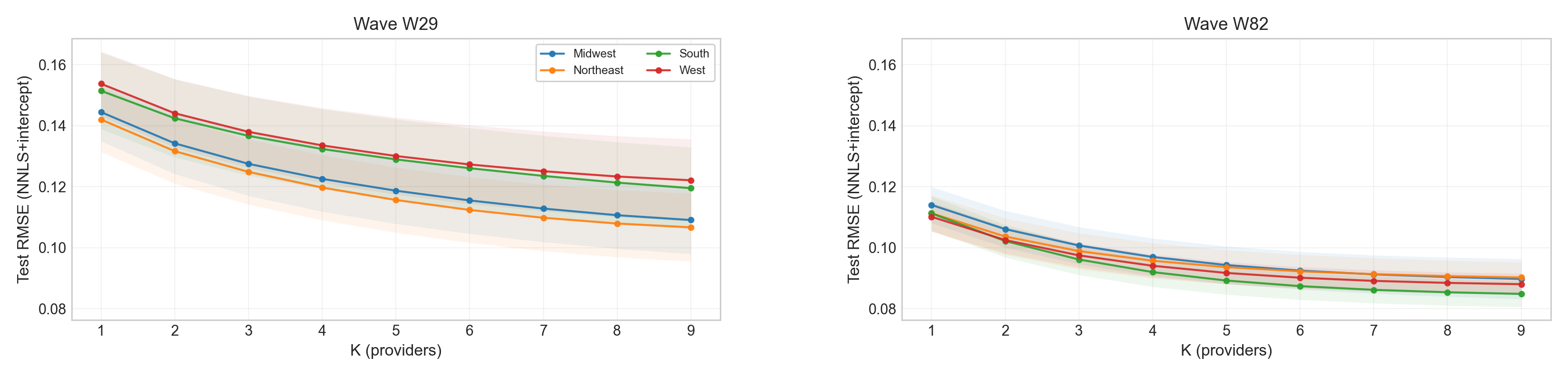}
  \caption{Census Region (4 groups)}
\end{subfigure}
\caption{\textbf{Per-group RMSE (part 2).} Continued from Figure~\ref{fig:pergroup-all}.}
\label{fig:pergroup-all-2}
\end{figure}

\paragraph{Transfer factors across demographics.}
Figure~\ref{fig:transfer-all-demos} shows the transfer factor $1/\lambda_i^*$ vs.\ $K$ for seven partitions (excluding Sex, where 2 groups make coverage comparatively easy).
All show the characteristic sharp drop at small $K$.
Religion requires more providers ($K=4-5$) to stabilize than other partitions ($K=2-3$), consistent with its larger number of groups. Because the anonymous-game slack scales as $2\eps/\lambda_i^*$, this cross-demographic variation in transfer factors directly translates into weaker bounds for some demographic partitions.

\begin{figure}[t]
\centering
\begin{subfigure}[b]{0.48\textwidth}
  \centering
  \includegraphics[width=\textwidth]{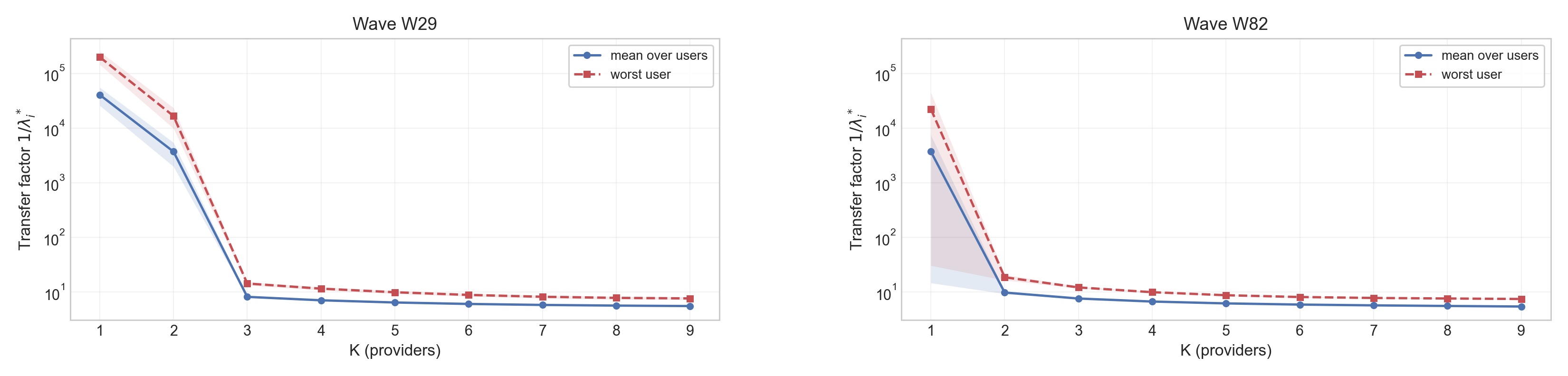}
  \caption{Education (6 groups)}
\end{subfigure}
\hfill
\begin{subfigure}[b]{0.48\textwidth}
  \centering
  \includegraphics[width=\textwidth]{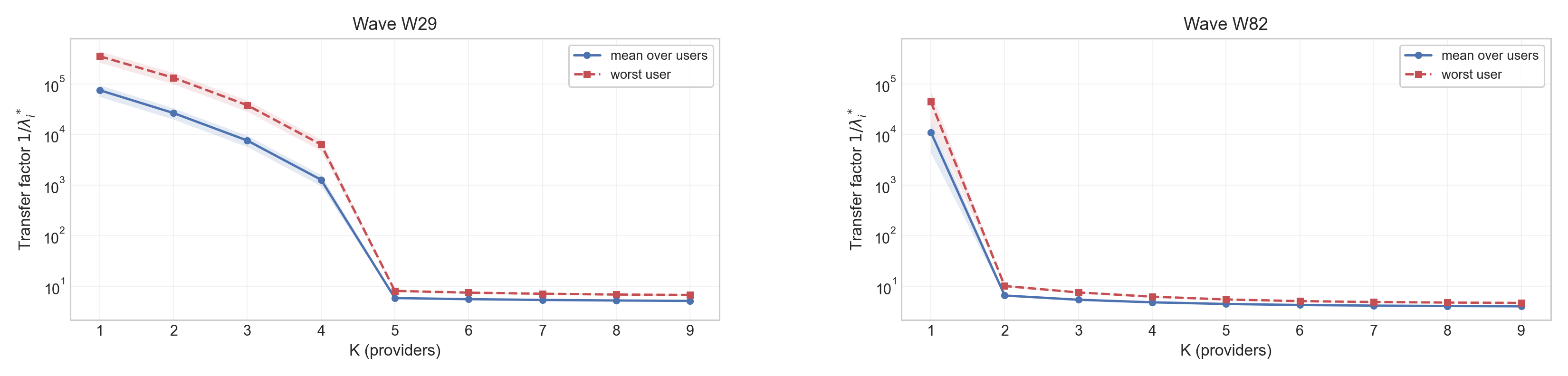}
  \caption{Race (5 groups)}
\end{subfigure}

\vspace{0.4em}
\begin{subfigure}[b]{0.48\textwidth}
  \centering
  \includegraphics[width=\textwidth]{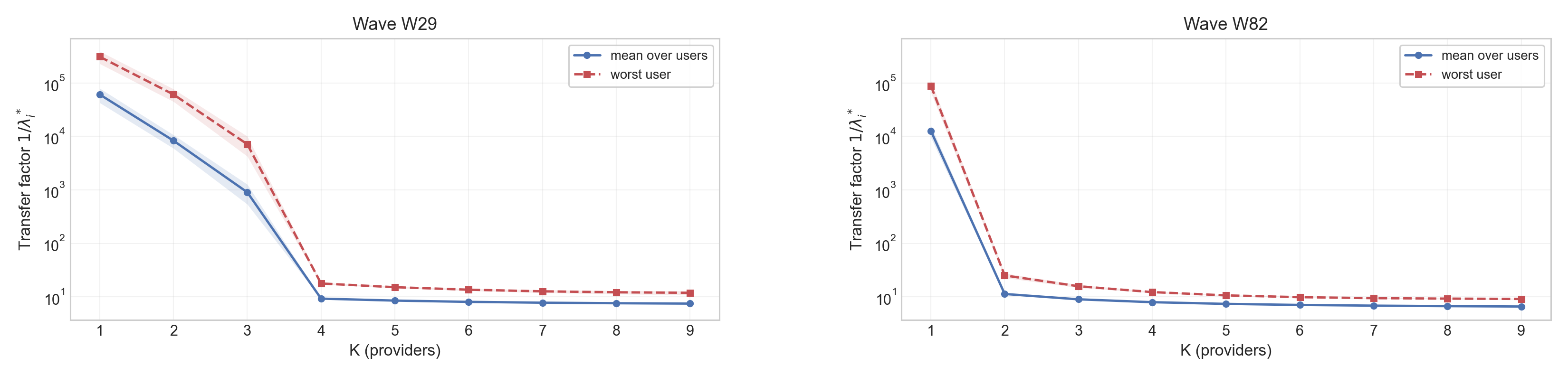}
  \caption{Religion (7--8 groups)}
\end{subfigure}
\hfill
\begin{subfigure}[b]{0.48\textwidth}
  \centering
  \includegraphics[width=\textwidth]{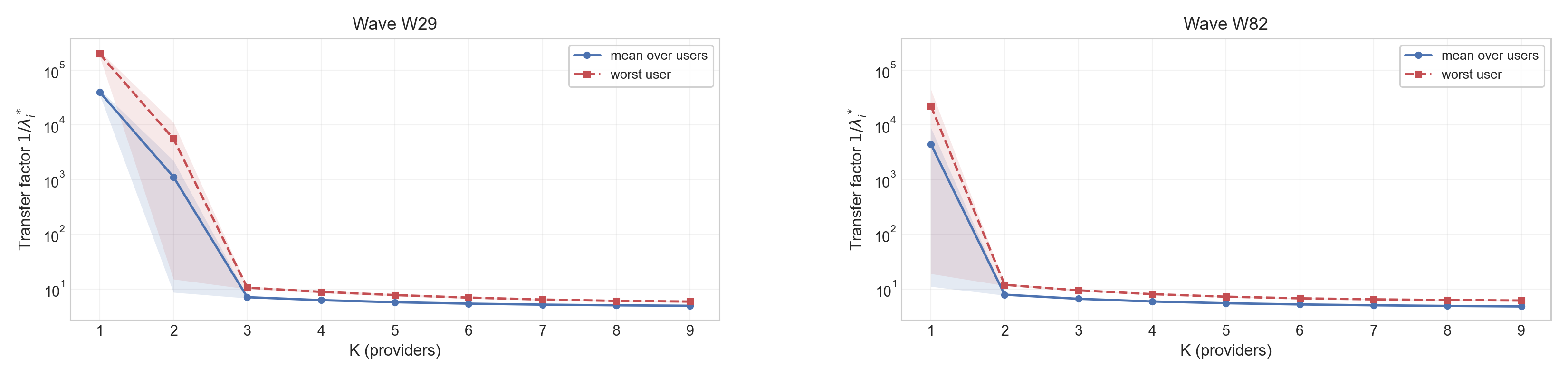}
  \caption{Income (5 groups)}
\end{subfigure}

\vspace{0.4em}
\begin{subfigure}[b]{0.48\textwidth}
  \centering
  \includegraphics[width=\textwidth]{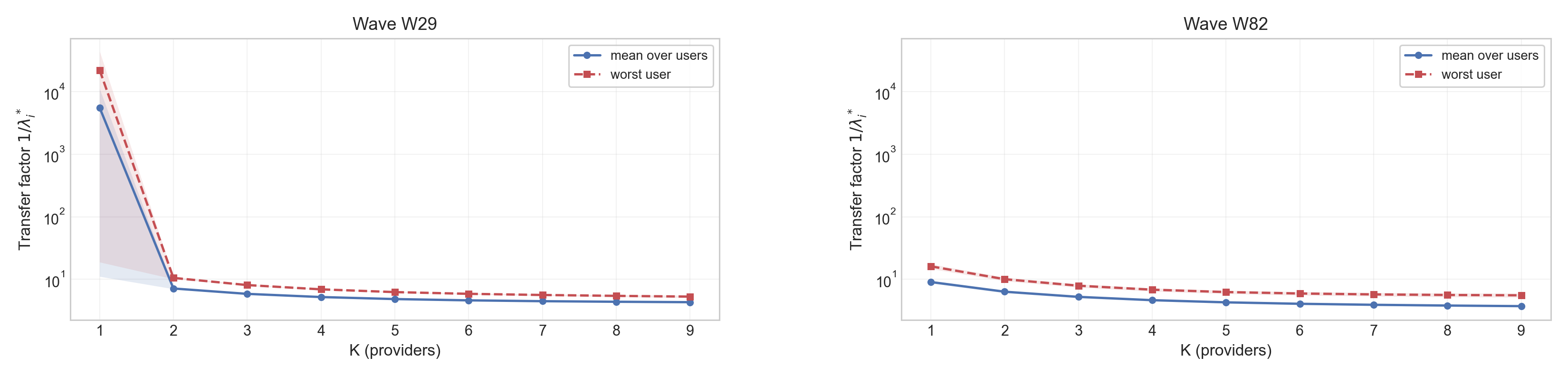}
  \caption{Age (4 groups)}
\end{subfigure}
\hfill
\begin{subfigure}[b]{0.48\textwidth}
  \centering
  \includegraphics[width=\textwidth]{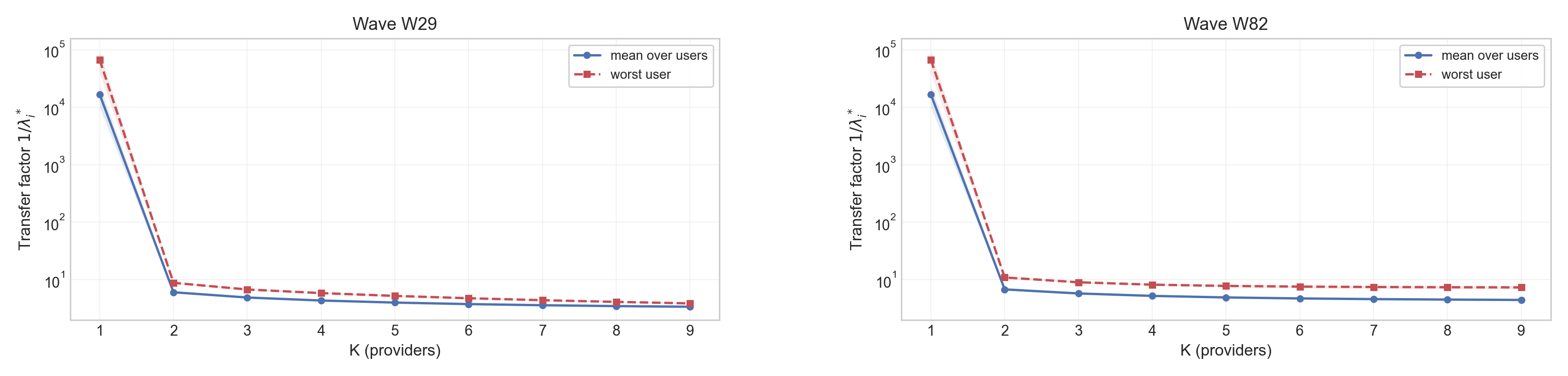}
  \caption{Pol.\ Party (4 groups)}
\end{subfigure}

\vspace{0.4em}
\begin{subfigure}[b]{0.48\textwidth}
  \centering
  \includegraphics[width=\textwidth]{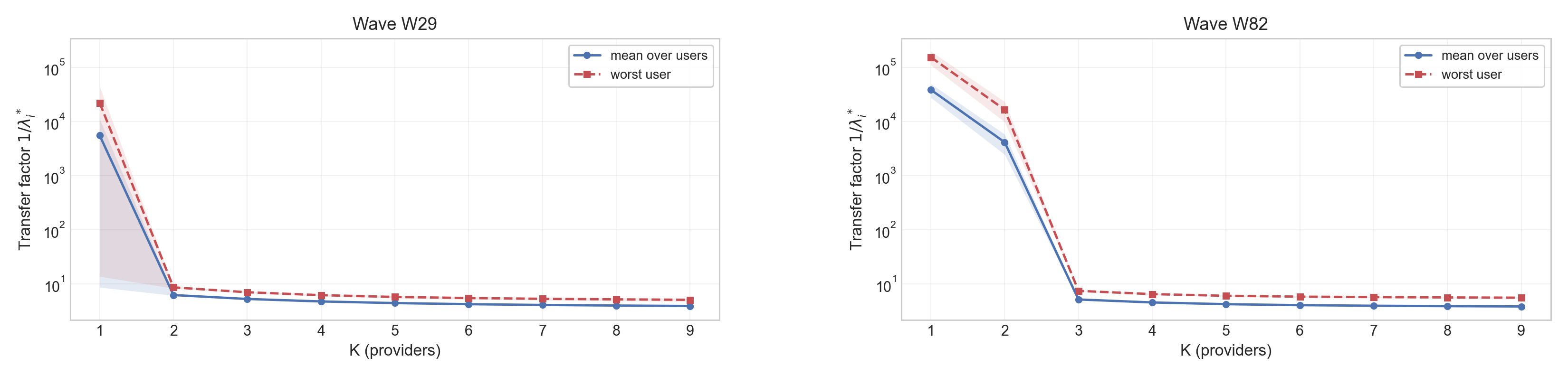}
  \caption{Census Region (4 groups)}
\end{subfigure}
\caption{\textbf{Transfer factors across demographics.} $1/\lambda_i^*$ vs.\ $K$: mean (solid) and worst user (dashed). All partitions show the sharp drop at small $K$. Religion requires $K=4{-}5$ to stabilize.}
\label{fig:transfer-all-demos}
\end{figure}

\paragraph{Subset analyses.}
Figure~\ref{fig:subset-all} shows the subset analysis for all seven partitions. Religion is the hardest: the best single provider has transfer $\sim20-30$ (vs.\ $\sim7-10$ for Political Ideology in the main text), and the mean subset needs $|T|=4-5$ to stabilize.
Partitions with fewer groups (Age, Census Region, Political Party) stabilize at $|T|=2-3$.
These curves are consistent with the theoretical prediction that guarantees depend on provider diversity rather than universality: a small well-chosen subset can suffice, while a random subset of the same size often leaves large coverage gaps.
\begin{figure}[t]
\centering
\begin{subfigure}[b]{0.48\textwidth}
  \centering
  \includegraphics[width=\textwidth]{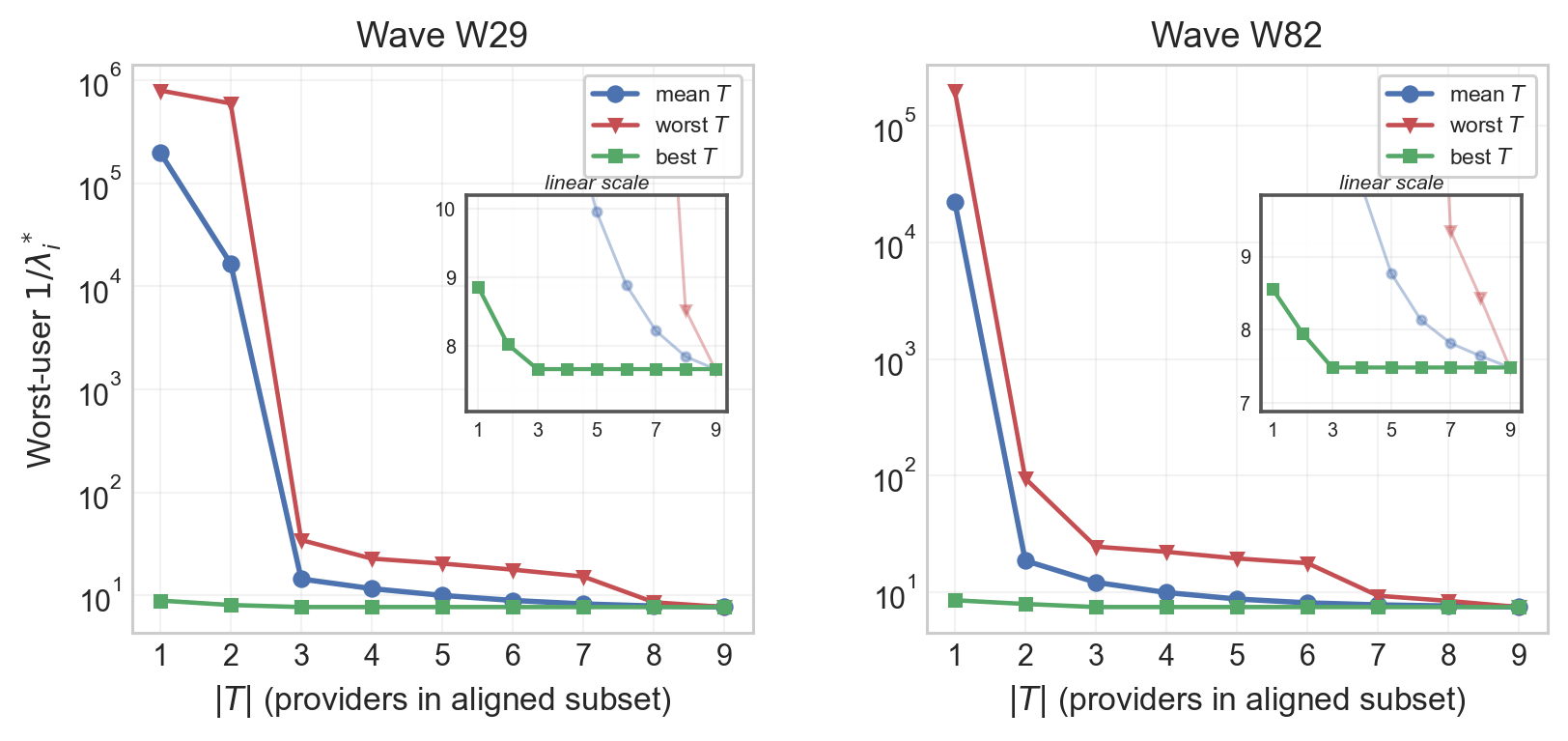}
  \caption{Education (6 groups)}
\end{subfigure}
\hfill
\begin{subfigure}[b]{0.48\textwidth}
  \centering
  \includegraphics[width=\textwidth]{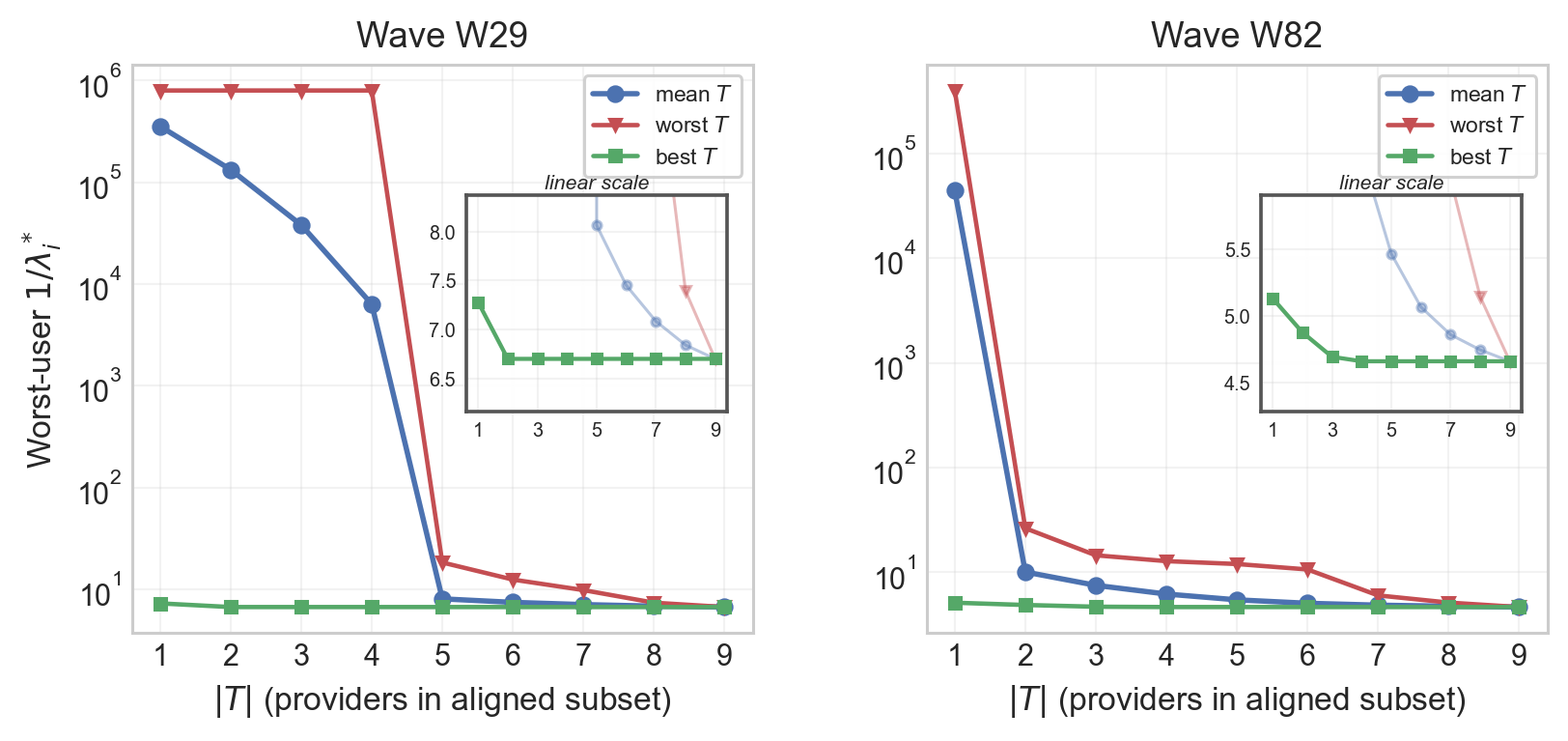}
  \caption{Race (5 groups)}
\end{subfigure}

\vspace{0.4em}
\begin{subfigure}[b]{0.48\textwidth}
  \centering
  \includegraphics[width=\textwidth]{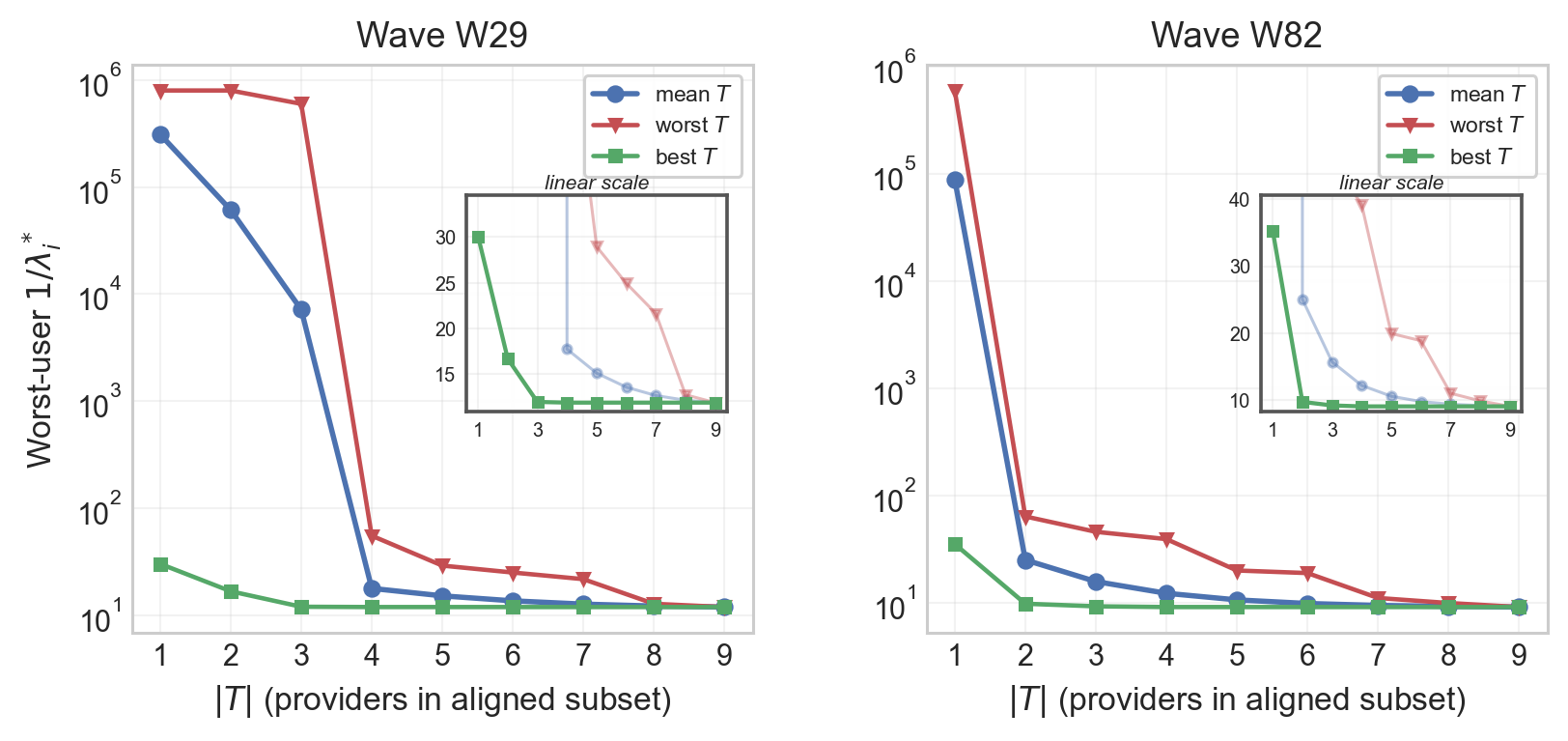}
  \caption{Religion (7--8 groups)}
\end{subfigure}
\hfill
\begin{subfigure}[b]{0.48\textwidth}
  \centering
  \includegraphics[width=\textwidth]{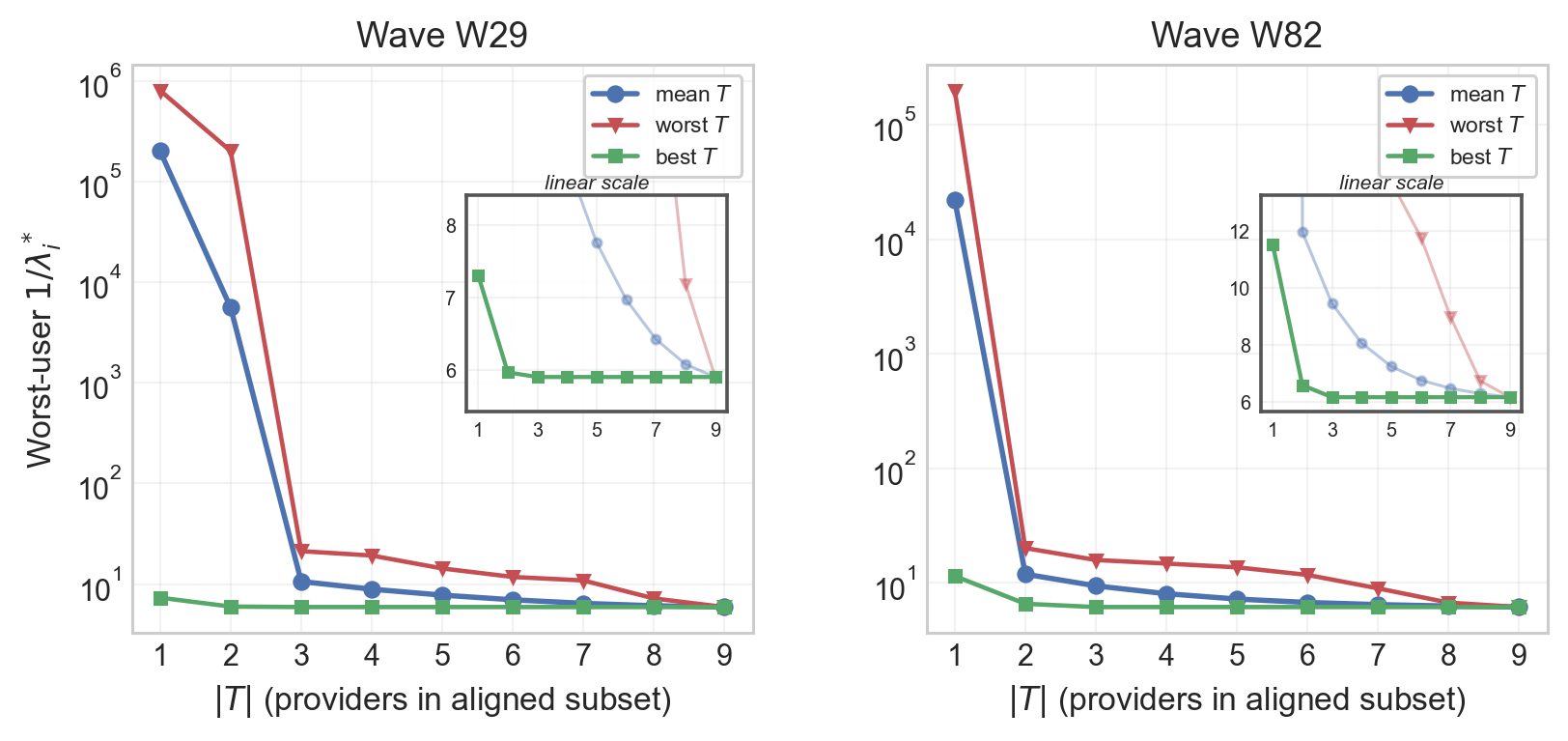}
  \caption{Income (5 groups)}
\end{subfigure}

\vspace{0.4em}
\begin{subfigure}[b]{0.48\textwidth}
  \centering
  \includegraphics[width=\textwidth]{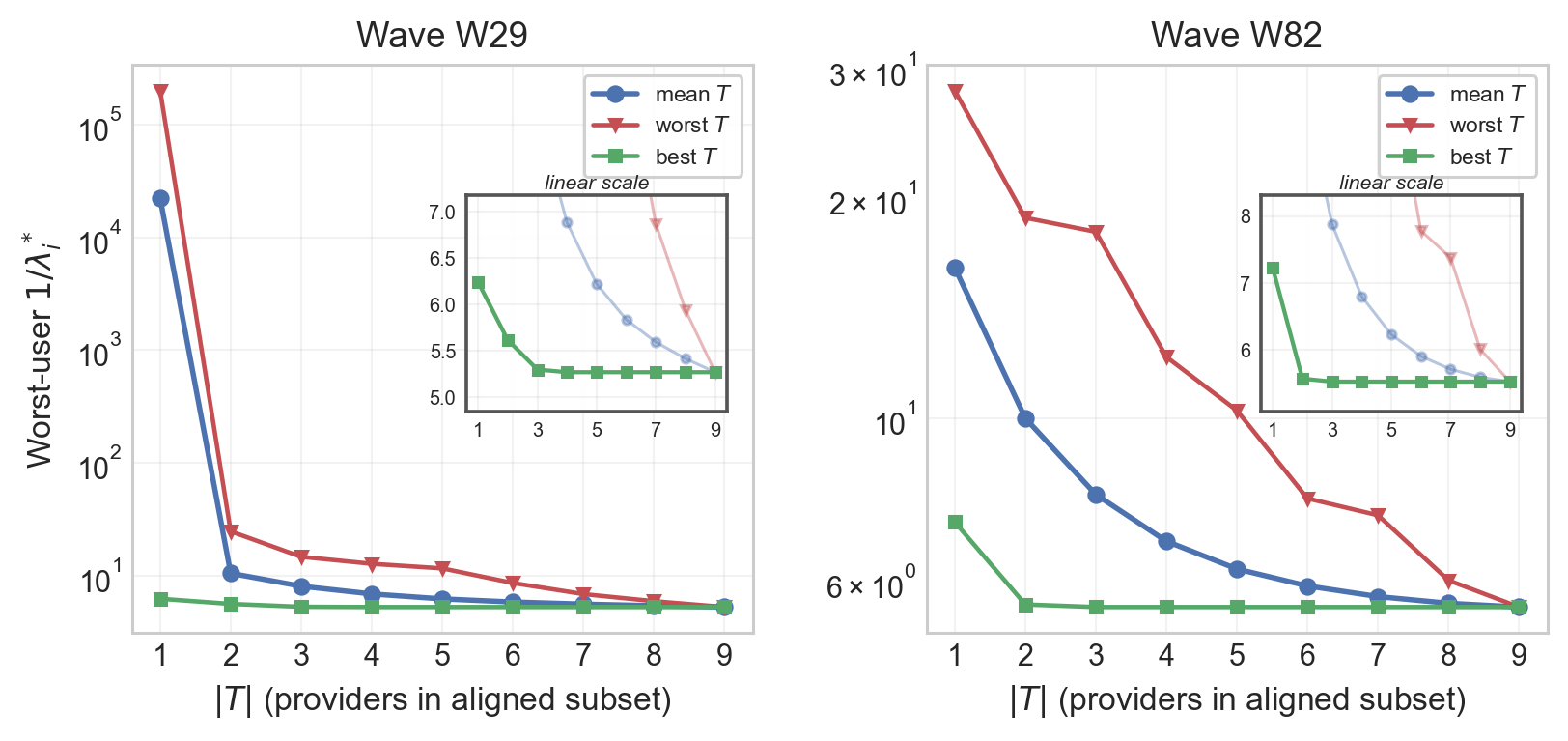}
  \caption{Age (4 groups)}
\end{subfigure}
\hfill
\begin{subfigure}[b]{0.48\textwidth}
  \centering
  \includegraphics[width=\textwidth]{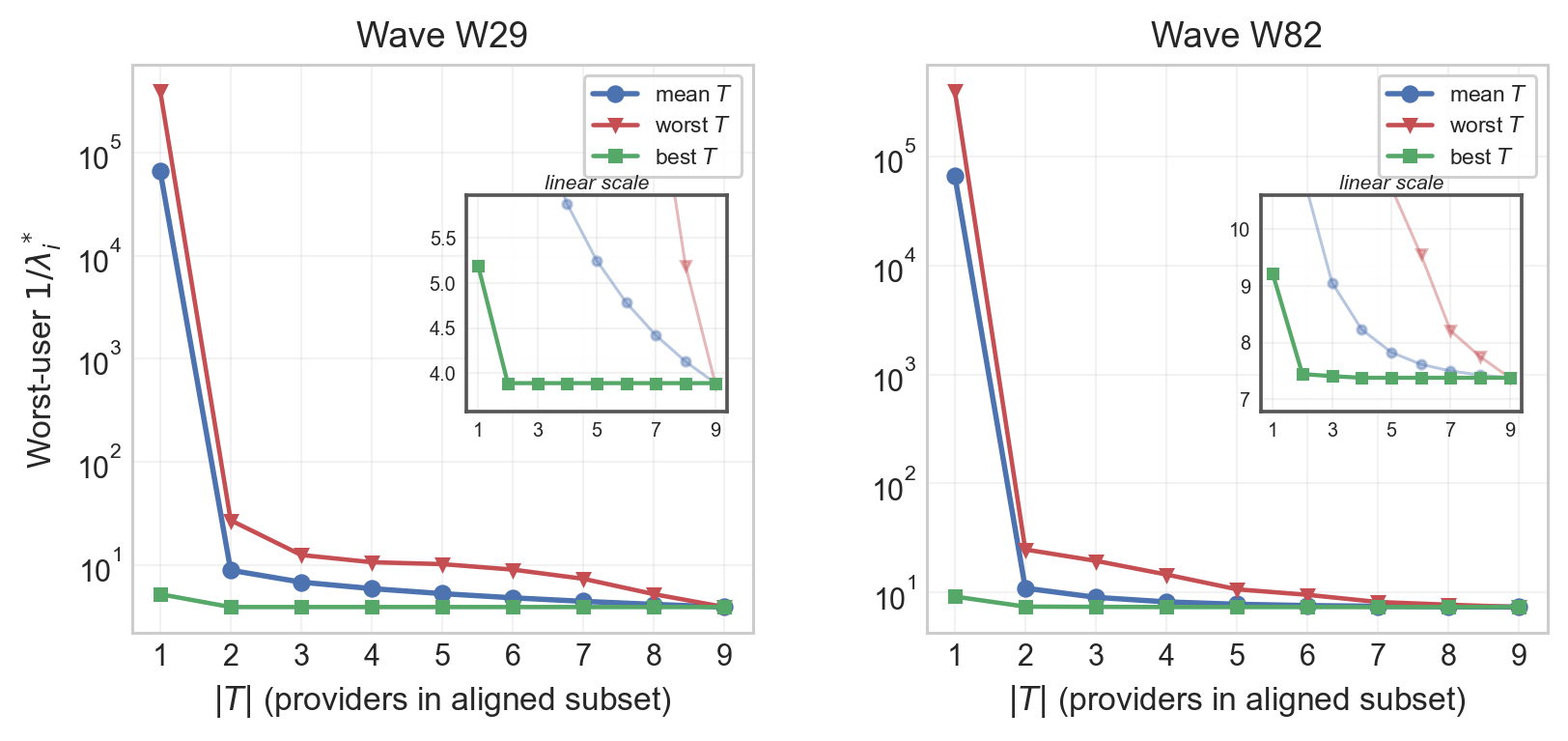}
  \caption{Pol.\ Party (4 groups)}
\end{subfigure}

\vspace{0.4em}
\begin{subfigure}[b]{0.48\textwidth}
  \centering
  \includegraphics[width=\textwidth]{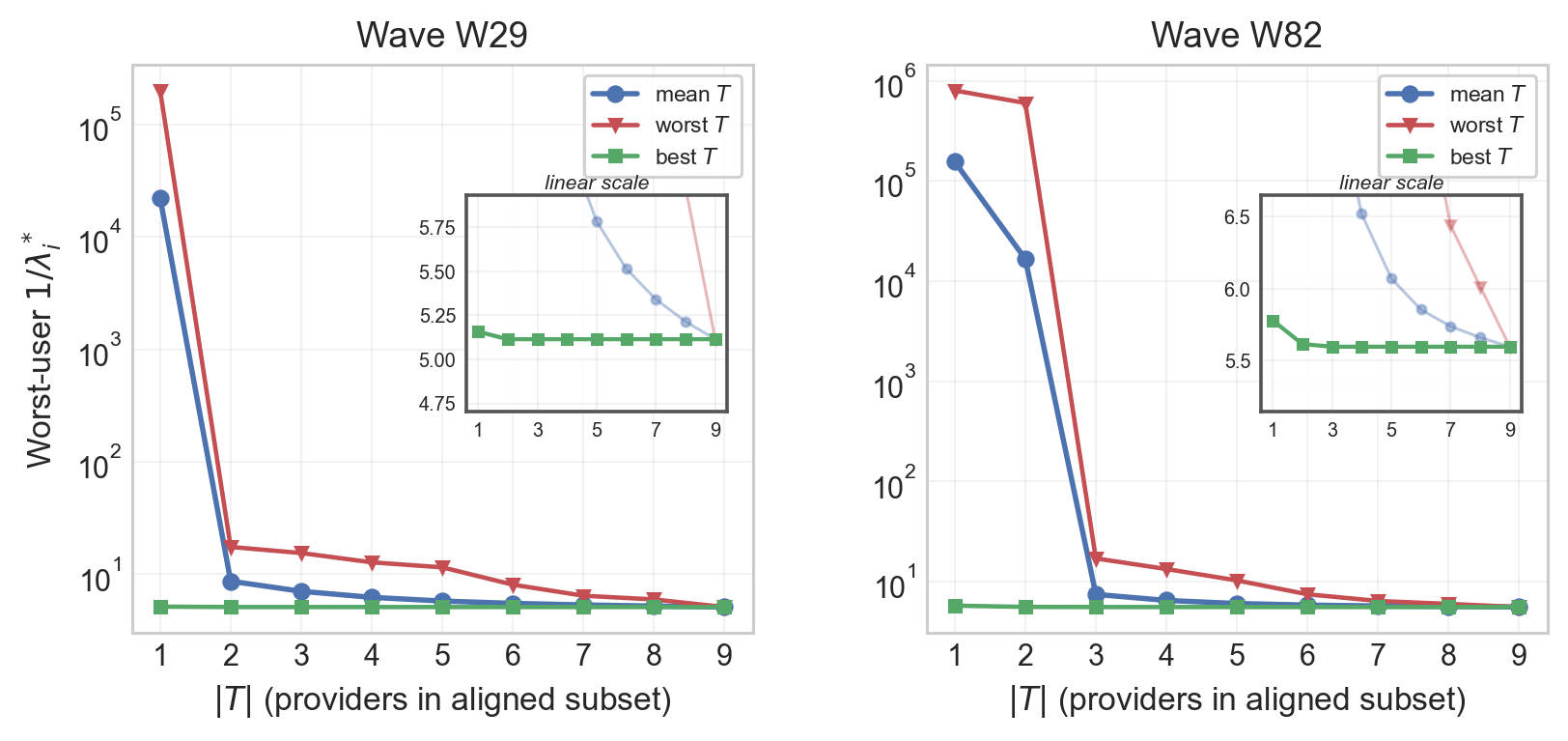}
  \caption{Census Region (4 groups)}
\end{subfigure}
\caption{\textbf{Subset analyses across demographics.} Worst-user $1/\lambda_i^*$ vs.\ $|T|$. Insets show linear-scale zoom. Coverage difficulty scales with number of groups.}
\label{fig:subset-all}
\end{figure}

\paragraph{Learned alignment weights.}
We visualize the learned NNLS weights to understand model specialization (Figure~\ref{fig:weights-appendix}).
For weak alignment, $w_{j,i}$ (rows = users, columns = models) shows which providers contribute to approximating each group.
For strong alignment, $\lambda_{j,i}$ (rows = models, columns = users) shows which groups each model covers.

In Political Ideology, W29 shows sparse specialization: several models have zero weight for some groups, while W82 has more distributed weights.
Religion reveals that minority groups (Mormon, Jewish) rely heavily on specific models, explaining their higher transfer factors.

\begin{figure}[t]
\centering
\begin{subfigure}[b]{0.48\textwidth}
  \centering
  \includegraphics[width=\textwidth]{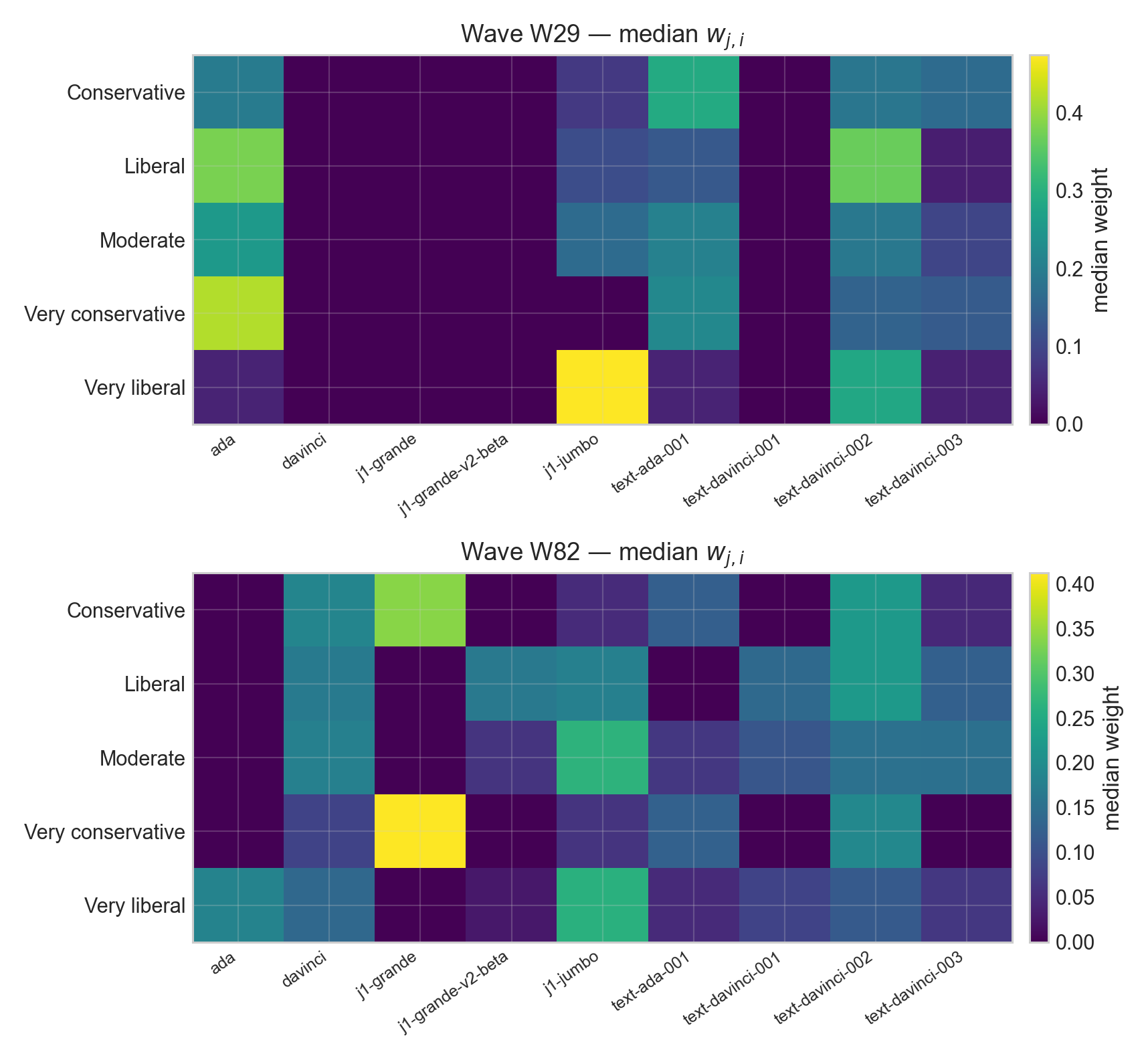}
  \caption{$w_{j,i}$ (Pol.\ Ideology)}
\end{subfigure}
\hfill
\begin{subfigure}[b]{0.48\textwidth}
  \centering
  \includegraphics[width=\textwidth]{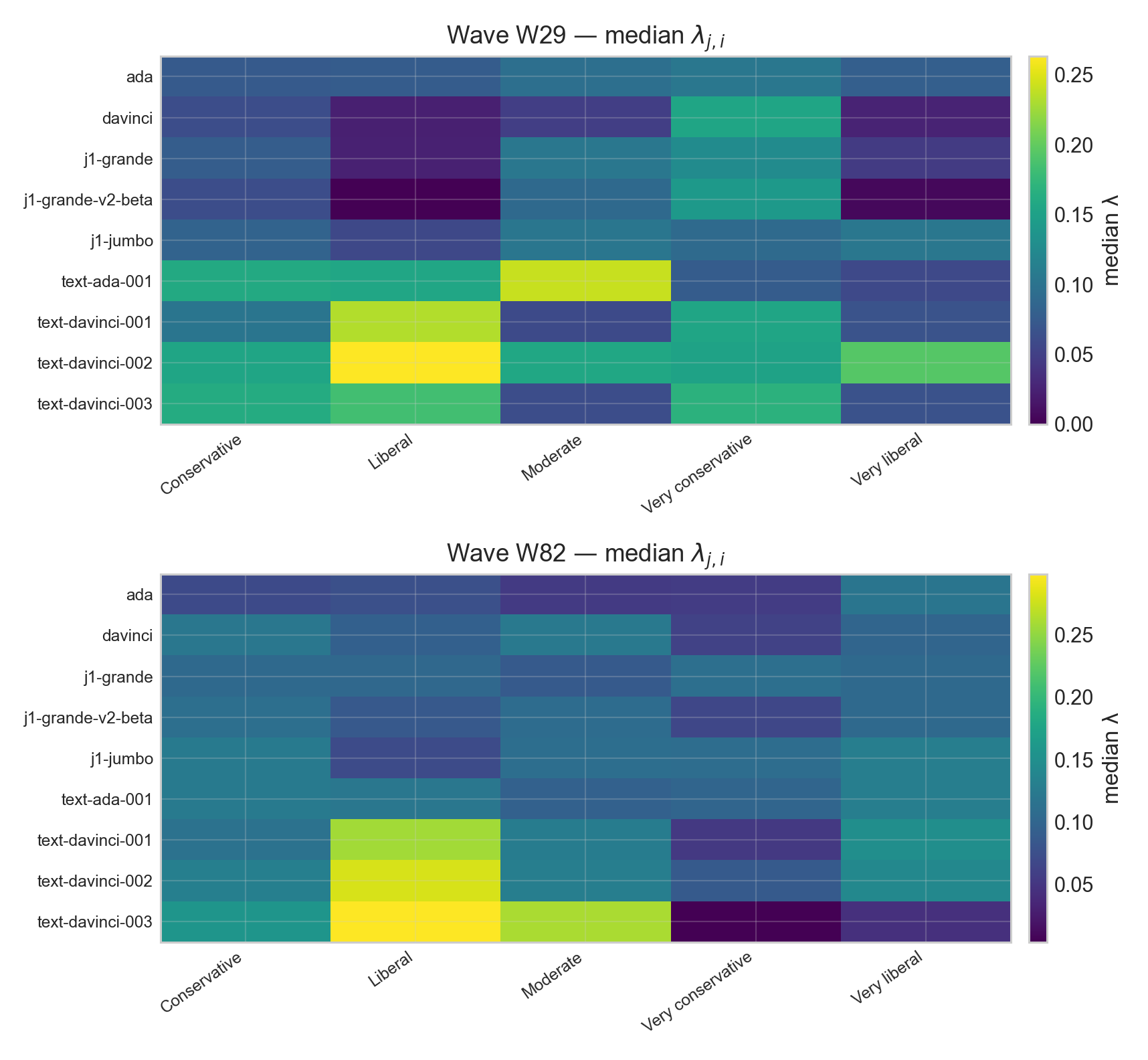}
  \caption{$\lambda_{j,i}$ (Pol.\ Ideology)}
\end{subfigure}

\vspace{0.4em}
\begin{subfigure}[b]{0.48\textwidth}
  \centering
  \includegraphics[width=\textwidth]{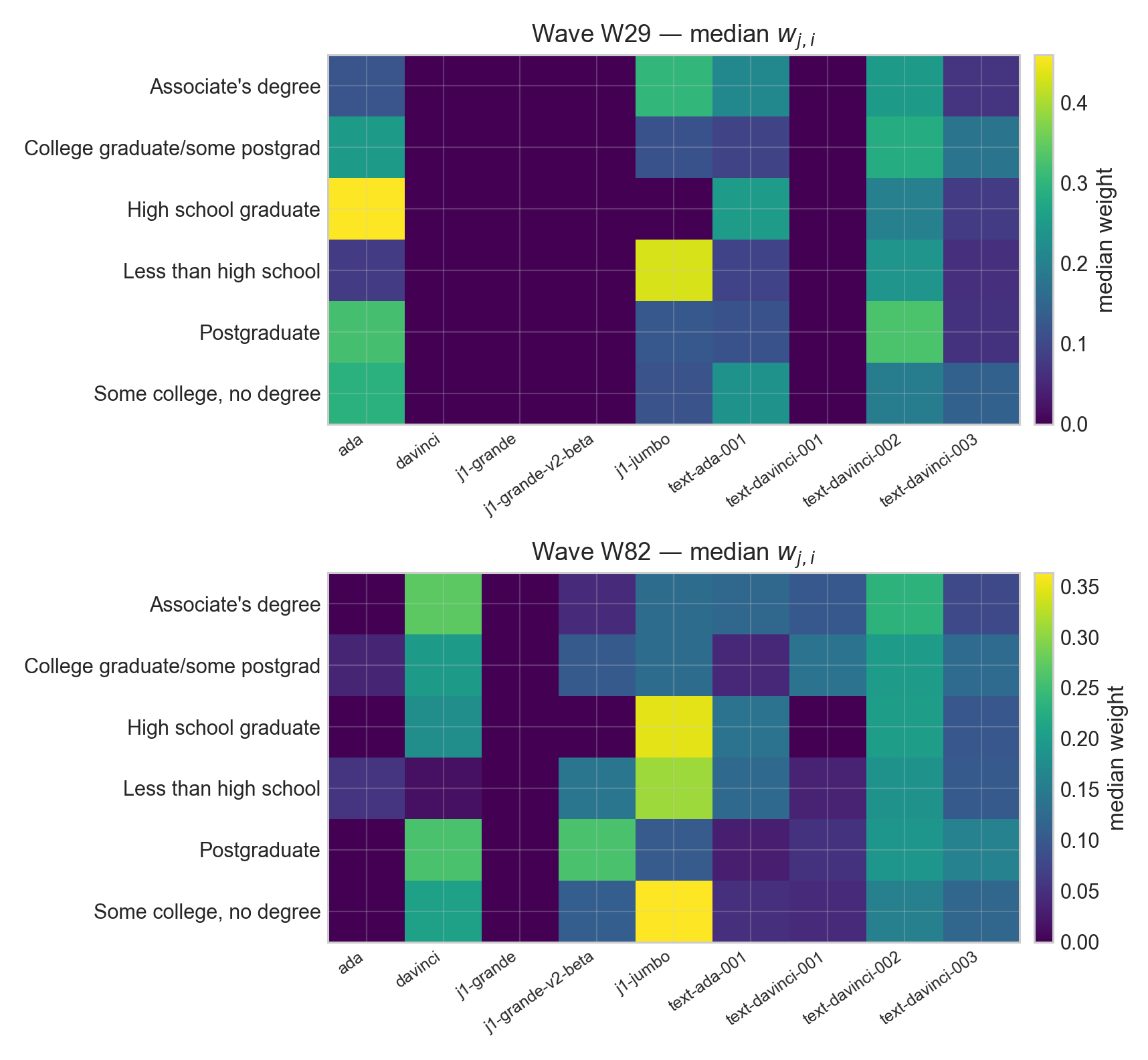}
  \caption{$w_{j,i}$ (Education)}
\end{subfigure}
\hfill
\begin{subfigure}[b]{0.48\textwidth}
  \centering
  \includegraphics[width=\textwidth]{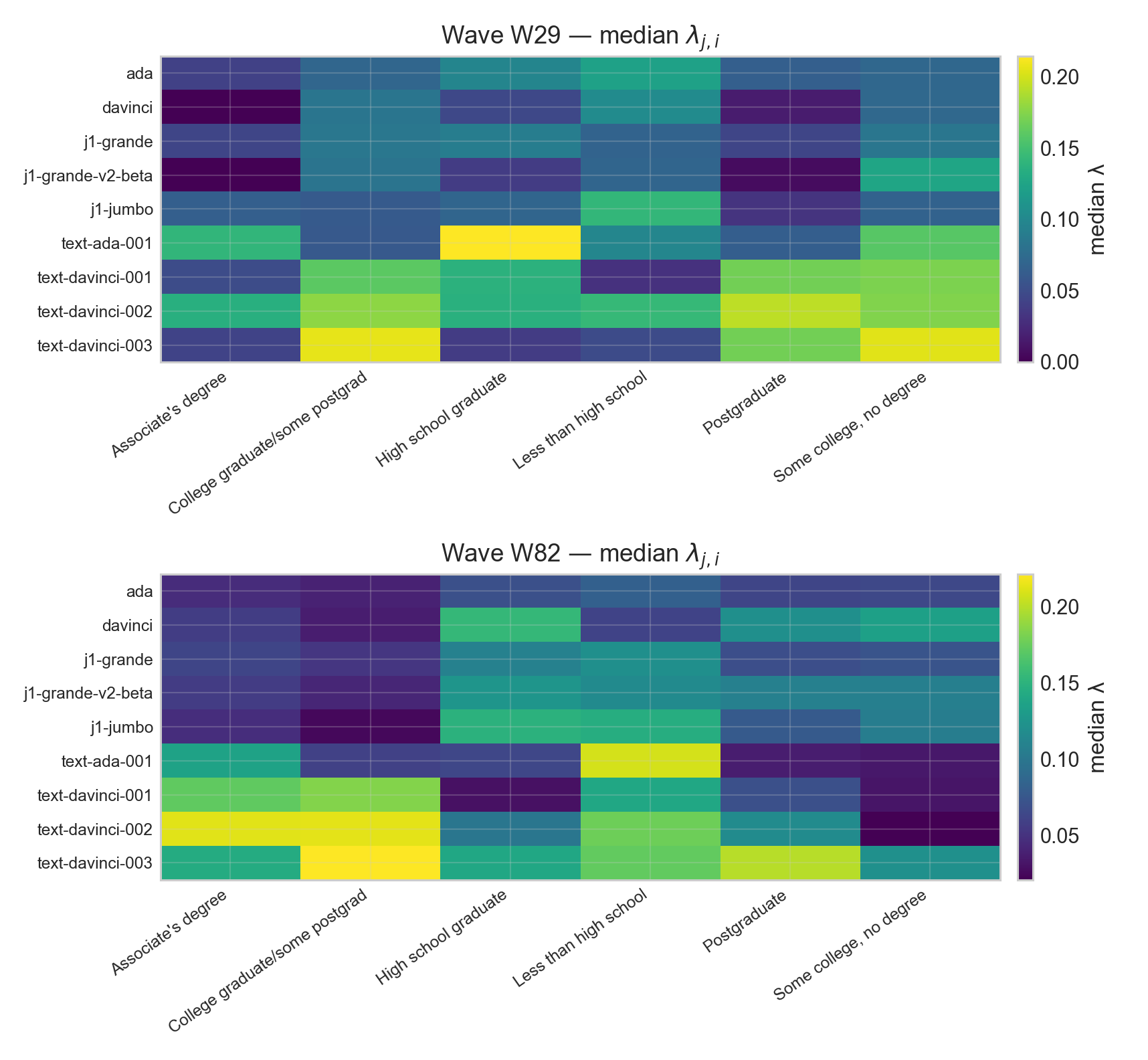}
  \caption{$\lambda_{j,i}$ (Education)}
\end{subfigure}
\caption{\textbf{Learned alignment weights.} Left: weak $w_{j,i}$ (rows = users, columns = models). Right: strong $\lambda_{j,i}$ (rows = models, columns = users). Each heatmap shows W29 (top) and W82 (bottom). Different models contribute to different groups; W29 shows sparser specialization than W82.}
\label{fig:weights-appendix}
\end{figure}